\newenvironment{proof}{ {\em Proof.} \,}{\qed}
\newcommand{\pd}{\partial}
\newcommand{\no}{\nonumber}
\begin{document}

\begin{frontmatter}

\title{Initial Value Problem of the Whitham Equations for the Camassa-Holm Equation}
\author[tg]{Tamara Grava}, 
\ead{grava@sissa.it}
\author[osu]{V. U. Pierce}, and 
\ead{virgilpierce@gmail.com}
\author[osu]{Fei-Ran Tian}
\ead{tian@math.ohio-state.edu}
\address[tg]{SISSA, Via Beirut 2-4, 34014 Trieste, Italy}
\address[osu]{Department of Mathematics, Ohio State University, 231 W. 18th Avenue, Columbus, OH 43210}

%\author[tg]{Tamara Grava}
%\address[tg]{SISSA, Via Beirut 2-4, 34014 Trieste, Italy}
%\ead{grava@sissa.it}
%\author[osu]{V. U. Pierce},
%\ead{virgilpierce@gmail.com}
%\author[osu]{Fei-Ran Tian}%
%\address[osu]{Department of Mathematics, Ohio State University, 231 W. 18th Avenue, Columbus, OH 43210}
%\ead{tian@math.ohio-state.edu}

\begin{abstract}
We study the Whitham equations for the
Camassa-Holm equation. The equations are neither strictly hyperbolic nor
genuinely nonlinear. We are interested in the initial value problem of the Whitham equations.
When the initial values are given by a step function, the Whitham solution is self-similar.
When the initial values are given by a smooth function, the Whitham solution exists within a cusp
in the $x$-$t$ plane. On the boundary of the cusp, the Whitham solution matches the Burgers
solution, which exists outside the cusp.
\end{abstract}

\begin{keyword}
Camassa-Holm equation \sep Whitham equations \sep Non-strictly hyperbolic \sep Hodograph transform

\PACS 02.30.Ik \sep 02.30.Jr 
\end{keyword}

\end{frontmatter}

\section{Introduction}

The Camassa-Holm equation
\begin{equation} \label{CH}
u_t + (3 u +2 \nu) u_x - \epsilon^2 (u_{xxt} + 2 u_x u_{xx} + u u_{xxx}) =  0 \ ,  \quad u(x, 0; \epsilon) = u_0(x)
\end{equation}
describes waves in shallow water when surface tension is present \cite{cam}. Here, $\nu$ is a constant
parameter. The solution of the initial value problem (\ref{CH}) will develop 
singularities in a finite time if and only if some portion of the positive part of
the initial ``momentum'' density $u_0(x) - \epsilon^2 u_0''(x) + \nu$ lies 
to the left of some portion of 
its negative part \cite{mck}. In particular, a unique global solution is guarenteed if 
$u_0(x) - \epsilon^2 u_0''(x) + \nu$ does not change its sign. 
These are the non-breaking 
initial data that we are 
interested in throughout this paper.

Although the zero dispersion limit of equation (\ref{CH}) has not been established,
some of its modulation equations (i.e., Whitham equations) have been derived.
The zero phase Whitham equation is
\begin{equation} \label{we0}
u_t + (3u + 2 \nu) u_x = 0 \ , \quad u(x,0) = u_0(x) \ ,
\end{equation}
which can be obtained from (\ref{CH}) by formally setting $\epsilon = 0$.

The single phase Whitham equations have been found in \cite{abe} and they can be
written in the Riemann invariant
form
\begin{equation}
\label{CHW}
u_{ix} + \lambda_i(u_1,u_2,u_3) u_{ix} = 0  \quad  \mbox{for} - \nu < u_3 < u_2 < u_1 \ ,
\end{equation}
where
\begin{equation}
\label{I}
\lambda_i(u_1, u_2, u_3) = u_1 + u_2 + u_3 + 2 \nu  - {I \over \partial_{u_i} I} \ ,
\end{equation}
and
$$I(u_1, u_2, u_3) = \int_{u_3}^{u_2} {\eta + \nu \over \sqrt{(\eta + \nu)(u_1 - \eta)(u_2 - \eta)(\eta - u_3)}} \ d \eta \ .$$
The constraint $ - \nu < u_3 < u_2 < u_1$ is consistent with the non-breaking initial 
data mentioned in the first paragraph of this section. 
The integral $I$ can be rewritten as a contour integral. Hence, it satisfies the Euler-Poisson-Darboux equations
\begin{equation}
\label{EPD}
2(u_i - u_j) {\partial^2 I \over \partial u_i \partial u_j} = {\partial I \over \partial u_i} - {\partial I \over \partial u_j} \ ,
\quad i, j = 1, 2, 3
\end{equation}
since the integrand does so for each $\eta \neq u_i$.  The Hamiltonian structure of
the single phase  Whitham equations for the Camassa-Holm equation was also obtained
in \cite{abe} in terms of Abelian integrals. The higher phase Whitham equations
can also
be derived using this structure.

In this paper we will study the
evolution of the Whitham solution from the zero phase to the single
phase.

This problem is similar to that of the zero dispersion limit of the KdV equation \cite{lax, lax2, ven}
\begin{equation}
\label{KdV}
u_t + 6 u u_x + \epsilon^2 u_{xxx} = 0 \ , \quad u(x,0; \epsilon) = u_0(x) \ .
\end{equation}
It is known that the zero phase Whitham equation for the KdV equation is
\begin{equation}
\label{Burgers}
u_t + 6 u u_x = 0 \ ,
\end{equation}
which is equivalent to (\ref{we0}) for the Camassa-Holm equation. The single phase
Whitham equations for the KdV equation are \cite{dub, fla, whi}
\begin{equation}
\label{KdVW}
u_{ix} + \mu_i(u_1,u_2,u_3) u_{ix} = 0 \ ,  \quad  \mbox{for} \quad
 u_3 < u_2 < u_1 \ ,
\end{equation}
where
$$\mu_i(u_1, u_2, u_3) = 2[u_1 + u_2 + u_3  - {\tilde{I} \over \partial_{u_i}
\tilde{I}}] \ ,$$
and
$$\tilde{I}(u_1, u_2, u_3) = \int_{u_3}^{u_2} {1 \over \sqrt{(u_1 - \eta)(u_
2 - \eta)(\eta - u_3)}} \ d \eta \ .$$
These equations are also similar to (\ref{CHW}) and (\ref{I}) for the
Camassa-Holm equation.

In the KdV case, the evolution from the zero phase to the single phase has
been studied in \cite{Tian1}. There, the Euler-Poisson-Darboux equations (\ref{EPD})
have played an important role. The same equations have also played a crucial role in
the study of the transition from the single phase to the double phase in \cite{GT}.

Although both the Camassa-Holm equation (\ref{CH}) and the KdV equation (\ref{KdV}) are
dispersive approximations to the Burgers equation (\ref{we0}) or (\ref{Burgers}), there are significant differences in the limiting dynamics.  The biggest difference is that the Whitham equations (\ref{CHW}) for the former equation are non-strictly hyperbolic (cf.  (\ref{ws1})) while the Whitham equations (\ref{KdVW}) for the
latter equation are strictly hyperbolic \cite{lev}. Non-Strictly hyperbolic Whitham equations have
also been found in the higher order KdV flows \cite{PT, PT2, PT3} and
the higher order defocusing NLS flows \cite{kod}. Self-similar solutions of
these Whitham equations have been constructed. They are remarkably different from
the self-similar solutions of the KdV-Whitham equations \cite{PT, PT2} or the
NLS-Whitham equations \cite{kod}, both of which are strictly hyperbolic.

In this paper, we will modify the method of paper \cite{Tian1} so that it can be used to
solve the non-strictly hyperbolic Whitham equations (\ref{CHW}) when the initial
function is a
smooth function. We will then study the evolution
from the zero phase to the single phase for smooth initial data. When the initial
function is a step-like function, we will use the method of paper \cite{kod, PT, PT2} to study the same
evolution.

The organization of the paper is as follows. In Section 2, we will introduce an initial
value problem which describes the evolution of phases. We will also discuss how to use the
hodograph transform to solve non-strictly hyperbolic Whitham equations.
In Section 3, we will study the properties of the eigenspeeds of the single phase
Whitham equations. We will study the initial value problem when the initial function
$u_0(x)$ is a step-like function in Section 4 and when $u_0(x)$ is a smooth
decreasing function in Section 5.

\section{An Initial Value Problem}

We describe the initial value problem for the Burgers equation (\ref{we0}) and
Whitham equations
(\ref{CHW}) as follows (see Figure 1.). Consider a horizontal motion
of the initial
curve $u = u_0(x)$.  At the beginning,
the curve evolves according to the Burgers equation (\ref{we0}).
The Burgers solution
breaks down in a finite time. Immediately
after the breaking,  the curve develops three branches. Denote
these three branches by $u_1$, $u_2$, and $u_3$.
Their motion is governed by the Whitham equations (\ref{CHW}).
As time goes on, the Whitham solution may develop singularities
and more branches are created. However, our focus is on the
evolution of the solution of the Whitham equations from the  one branch regime to the
three branch regime.

\begin{figure}[h]
\begin{center}
\includegraphics[width=12cm]{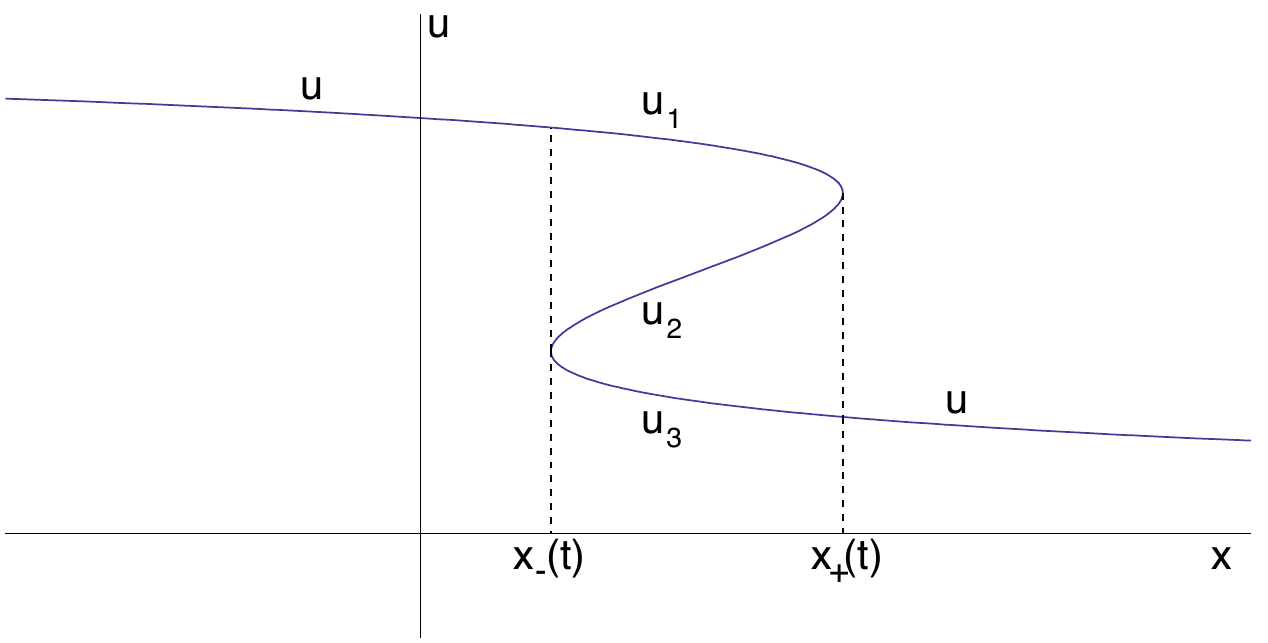}
\caption{Profile of the Burgers and Whitham solutions. The Burgers solution $u$ of (\ref{we0}) lives
in the single value regions while the Whitham solution $u_1$, $u_2$ and $u_3$ of (\ref{CHW}) reside in the multiple value region.}
\end{center}\end{figure}

The Burgers solution $u$ of (\ref{we0}) and the Whitham solution $u_1$, $u_2$, $u_3$
of (\ref{CHW})
must match on the trailing edge $x=x_{-}(t)$ and leading edge $x=x_+(t)$.
We see from Figure 1 that
\begin{eqnarray}
\label{bc1}
\left\{\begin{array}{c}
u_1= u  \\
u_2=u_3
\end{array}
\right.
\end{eqnarray}
must be imposed on the trailing edge, and that
\begin{eqnarray}
\label{bc2}
\left\{\begin{array}{c}
u_1=u_2   \\
u_3= u
\end{array}
\right.
\end{eqnarray}
must be satisfied on the leading edge.

In this paper, we consider the initial function $u_0(x)$ that is monotone. Since
the Burgers solution will not develop any shock if $u_0(x)$ is an increasing function,
we will focus on decreasing initial functions. Denoting the inverse function of
$u_0(x)$ by $f(u)$, the Burger equation (\ref{we0}) can be solved using the method
of characteristics; its solution is given implicitly by a hodograph transform,
\begin{equation}
\label{BS}
x = (3u + 2 \nu) t + f(u) \ .
\end{equation}

The solution method (\ref{BS}) has been generalized to solve the first order quasilinear
hyperbolic equations which can be written in Riemann invariant form and which are
strictly hyperbolic
\begin{equation}
\label{hydro}
{\partial u_i \over \partial t} + s_i(u_1, \cdots, u_n) {\partial u_i \over
\partial x} = 0 \ , \quad i=1,2,\cdots,n.
\end{equation}
The strict hyperbolicity means that the wave propagation speeds $s_i$'s do not
coincide.

We include Tsarev's theorem
for completeness \cite{Tian1, tsa}.
\begin{thm}
\label{th1}
If $w_{i}(u_1,u_2, \cdots, u_n)$'s solve the linear equations
\begin{equation}
\label{wi}
{\partial w_i \over \partial u_j}=A_{ij}(w_i-w_j)
\end{equation}
with
\begin{equation}
\label{Aij}
A_{ij}={ {\partial s_j \over \partial u_j} \over s_i-s_j}
\end{equation}
for $i,j=1,2,\cdots,n$ and $i \neq j$, then the solution
$(u_{1}(x,t), \cdots, u_{n}(x,t))$ of the hodograph
transform,
\begin{equation}
\label{ho}
x = s_{i}(u_1,u_2, \cdots, u_n) t + w_{i}(u_1,u_2, \cdots, u_n)
\end{equation}
satisfies equations (\ref{hydro}).
Conversely, any solution $(u_{1}(x,t), \cdots, u_{n}(x,t))$ of equations
(\ref{hydro}) can be obtained in this way in the neighborhood of
$(x_0,t_0)$ at which $u_{ix}$ are not all vanishing.
\end{thm}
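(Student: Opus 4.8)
The plan is to treat the $n$ relations (\ref{ho}) as an implicit definition of the functions $u_1(x,t),\dots,u_n(x,t)$ and to verify the system (\ref{hydro}) by differentiation, obtaining the converse by running the same computation backwards together with an integrability argument.

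\emph{Sufficiency.} First I would differentiate each hodograph relation $x = s_i t + w_i$ with respect to $x$ and to $t$. Writing $M_{ij} := t\,\partial_{u_j} s_i + \partial_{u_j} w_i$, the chain rule produces the two linear systems
\[
\sum_{j} M_{ij}\,u_{jx} = 1, \qquad \sum_{j} M_{ij}\,u_{jt} = -s_i, \qquad i = 1,\dots,n.
\]
The key step is to show that $M_{ij}=0$ whenever $i\neq j$. For this I would combine three facts: the hypothesis (\ref{wi}), namely $\partial_{u_j} w_i = A_{ij}(w_i - w_j)$; the companion relation $\partial_{u_j} s_i = A_{ij}(s_i - s_j)$, which is just the definition (\ref{Aij}) rewritten (equivalently, the speeds $s_i$ solve the same linear equations as the $w_i$); and the consistency of (\ref{ho}), which forces $t(s_i - s_j) + (w_i - w_j) = 0$ because $x = s_i t + w_i = s_j t + w_j$. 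The first two give
\[
M_{ij} = A_{ij}\bigl[\, t(s_i - s_j) + (w_i - w_j)\,\bigr],
\]
and the bracket vanishes by the third. Hence $M$ is diagonal, the two systems collapse to $M_{ii}u_{ix}=1$ and $M_{ii}u_{it}=-s_i$, and dividing (at a point where $M_{ii}\neq 0$, so that the hodograph is locally invertible) yields $u_{it} + s_i u_{ix} = 0$, which is (\ref{hydro}).

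\emph{Necessity.} For the converse I would start from a given solution of (\ref{hydro}) near $(x_0,t_0)$ with the $u_{ix}$ not all vanishing, and seek functions $w_i(u_1,\dots,u_n)$ solving (\ref{wi}) for which (\ref{ho}) reproduces this solution. The natural candidate is to prescribe $w_i = x - t\,s_i$ along the (at most) two-dimensional surface swept out by $(u_1(x,t),\dots,u_n(x,t))$ in $u$-space and to extend it off that surface by integrating the linear system (\ref{wi}). Reversing the computation above, the sufficiency argument then shows that once such $w_i$ exist and (\ref{ho}) is solvable for the $u_j$, the resulting functions automatically satisfy (\ref{hydro}) and coincide with the given solution by uniqueness for the quasilinear Cauchy problem at a noncharacteristic point.

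The main obstacle is this necessity direction, and specifically the solvability of the overdetermined system (\ref{wi}): there are $n(n-1)$ first-order equations for the $n$ unknowns $w_i$, so a solution exists only if the mixed derivatives $\partial_{u_k}\partial_{u_j} w_i$ computed from (\ref{wi}) are symmetric in $j$ and $k$. These Frobenius compatibility conditions reduce to the semi-Hamiltonian property of the speeds, $\partial_{u_k} A_{ij} = \partial_{u_j} A_{ik}$ for distinct $i,j,k$, which I would verify here for the Whitham speeds (\ref{I}) as a consequence of the Euler--Poisson--Darboux equations (\ref{EPD}) satisfied by $I$. Granting this, (\ref{wi}) has an $n$-parameter family of solutions, and one selects the member matching the prescribed boundary values on the surface to finish the construction. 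A secondary point, used in both directions, is the nondegeneracy $M_{ii}\neq 0$, which is precisely the statement that the map $(x,t)\mapsto u$ is locally invertible wherever the $u_{ix}$ do not all vanish.
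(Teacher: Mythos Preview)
The paper does not actually prove Theorem~\ref{th1}; it is quoted from \cite{Tian1, tsa} (``We include Tsarev's theorem for completeness''), and only Corollary~\ref{c} is given a proof. Your sufficiency argument is correct and is precisely the computation used to prove that corollary: the paper shows
\[
\frac{\partial (s_i t + w_i)}{\partial u_j}
= \frac{\partial s_i}{\partial u_j}\,t + \frac{\partial w_i}{\partial u_j}
= A_{ij}\bigl[(s_i t + w_i) - (s_j t + w_j)\bigr] = 0
\qquad (i\neq j),
\]
which is exactly your identification $M_{ij}=A_{ij}[\,t(s_i-s_j)+(w_i-w_j)\,]=0$. (Note that (\ref{Aij}) as printed has $\partial s_j/\partial u_j$ in the numerator; the corollary's proof and Lemma~\ref{Bij} make clear the intended expression is $\partial s_i/\partial u_j$, which you correctly used.)

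Your converse sketch goes beyond anything the paper supplies, and its outline --- define $w_i=x-t\,s_i$ on the $2$-surface swept out by the given solution, then extend by integrating (\ref{wi}) using the Frobenius (semi-Hamiltonian) compatibility $\partial_{u_k}A_{ij}=\partial_{u_j}A_{ik}$ --- is the standard Tsarev route. One point to tighten: for the converse as stated you only need $w_i$ on the solution surface itself, so it is cleaner to verify directly that $w_i:=x-t\,s_i$ (with $x,t$ regarded as functions on that surface via local inversion near a point where some $u_{ix}\neq 0$) already satisfies the restrictions of (\ref{wi}) to the surface; the off-surface extension is then not needed for the theorem, though it is what justifies the remark that (\ref{wi}) ``must have solutions'' in general.
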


The strict hyperbolicity, i.e., $s_i \neq s_j$ for  $i \neq j$, of (\ref{hydro}) is assumed to ensure that $A_{ij}$'s
of (\ref{Aij}) are not singular.

The result is classical when $n=2$.

The validity of this theorem
hinges on two factors. First, the linear equations (\ref{wi}) must have
solutions. Secondly, the hodograph transform (\ref{ho}) must not be degenerate,
i.e., it can be solved for $u_i$'s as functions of $x$ and $t$.
One interesting observation is that the Jacobian matrix of (\ref{ho})
is always diagonal on the solution $(u_{1}(x,t), \cdots, u_{n}(x,t))$.
\begin{cor}
\label{c}
At the solution $(u_{1}(x,t), \cdots, u_{n}(x,t))$ of
$x = s_{i}(u_1,u_2, \cdot, u_n) t + w_{i}(u_1,u_2, \cdots, u_n)$, $i=1,2, \cdots,
n$, the partial derivatives
\begin{displaymath}
\frac{\partial (s_{i} t + w_{i})}{\partial u_{j}} = 0
\end{displaymath}
for $i \neq j$.
\end{cor}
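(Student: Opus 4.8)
The plan is to compute the off-diagonal entries of the Jacobian of the hodograph map directly and show that each of them vanishes on the solution, by factoring out the common coefficient $A_{ij}$. Fix $i \neq j$ and regard $s_i t + w_i$ as a function of $(u_1,\dots,u_n)$ with $t$ held fixed as a parameter, so that
\begin{equation*}
\frac{\partial (s_i t + w_i)}{\partial u_j} = t\,\frac{\partial s_i}{\partial u_j} + \frac{\partial w_i}{\partial u_j}.
\end{equation*}
The first step is to rewrite both derivatives in terms of the single coefficient $A_{ij}$. For $w_i$ this is exactly the defining relation \eqref{wi}, namely $\partial w_i/\partial u_j = A_{ij}(w_i - w_j)$. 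For the speed $s_i$ I would use the companion relation $\partial s_i/\partial u_j = A_{ij}(s_i - s_j)$ satisfied by the characteristic speeds themselves; this is the relation that, together with the requirement in Theorem \ref{th1} that the hodograph \eqref{ho} return solutions of \eqref{hydro}, characterizes the coefficients $A_{ij}$ of \eqref{Aij} in Tsarev's construction. The essential point is that the \emph{same} $A_{ij}$ governs the $u_j$-derivative of both $s_i$ and $w_i$.

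Substituting the two relations produces the factorization
\begin{equation*}
\frac{\partial (s_i t + w_i)}{\partial u_j} = A_{ij}\big[(s_i - s_j)t + (w_i - w_j)\big] = A_{ij}\big[(s_i t + w_i) - (s_j t + w_j)\big].
\end{equation*}
At this stage I would invoke that the corollary concerns the point $(u_1(x,t),\dots,u_n(x,t))$ which solves the \emph{full} hodograph system, so that all $n$ of the equations $x = s_k t + w_k$ hold simultaneously there. In particular $s_i t + w_i = x = s_j t + w_j$, whence the bracket equals $x - x = 0$ and the entire expression vanishes, giving $\partial (s_i t + w_i)/\partial u_j = 0$ for $i \neq j$.

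The computation is short, so there is no genuine analytic obstacle; the whole content is the algebraic observation that the two terms share the coefficient $A_{ij}$ and hence collapse into the difference of the $i$-th and $j$-th hodograph relations. The one place demanding care—and the step I would single out—is the passage in the first paragraph: I would want to state explicitly that the speeds satisfy $\partial s_i/\partial u_j = A_{ij}(s_i - s_j)$ with precisely the $A_{ij}$ appearing in \eqref{Aij}, since without $s_i$ and $w_i$ sharing this coefficient the bracket would not reduce to a difference of hodograph relations. Note finally that the actual value of $A_{ij}$ is irrelevant to the conclusion: strict hyperbolicity enters only to keep $A_{ij}$ finite, and once the factorization is in place the result holds whatever $A_{ij}$ happens to be.
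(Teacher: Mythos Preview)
Your proof is correct and is essentially identical to the paper's own proof: both compute $\partial(s_i t + w_i)/\partial u_j$, use the relations $\partial s_i/\partial u_j = A_{ij}(s_i - s_j)$ and $\partial w_i/\partial u_j = A_{ij}(w_i - w_j)$ to factor out $A_{ij}$, and then observe that the bracket $(s_i t + w_i) - (s_j t + w_j)$ vanishes on the hodograph solution since both terms equal $x$. Your added remark that the relation $\partial s_i/\partial u_j = A_{ij}(s_i - s_j)$ must be made explicit is well taken, since the paper's three-line proof uses it without comment.
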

\begin{proof}
\begin{eqnarray*}
\frac{\partial (s_{i} t + w_{i})}{\partial u_{j}} & = & \frac{\partial
s_{i}}{\partial u_{j}}t + \frac{\partial w_{i}}{\partial u_{j}}
 \\
     & = & A_{ij} [ (s_{i} t +
w_{i}) - (s_{j} t + w_{j})]  \\
     & = & 0 \ .
\end{eqnarray*}
\end{proof}

Another aspect of Theorem \ref{th1} is that it is a local result. Solutions
produced by the hodograph transform are, in general, local in nature.
However, global solutions can
still be obtained if the conditions of the theorem are satisfied globally \cite{GT, Tian1}.

The Whitham equations (\ref{CHW}) will be shown to be non-strictly hyperbolic,
i.e., $\lambda_i$'s coincide at some points $(u_1, u_2, u_3)$ where $0<u_3 + \nu
< u_2 + \nu < u_1 + \nu$. However, Theorem \ref{th1} can still be applied to
equations (\ref{CHW}) since the functions $A_{ij}$'s of (\ref{Aij}) are still
non-singular for the Whitham equations (\ref{CHW}), even at the points of
non-strict hyperbolicity.
\begin{lem}
\label{Bij}
$$B_{ij} := {{\partial \lambda_i \over \partial u_j} \over \lambda_i-\lambda_j}
= {1 \over 2}{ (\lambda_i-\gamma) -2(u_i-u_j) \over
(\lambda_j-\gamma)(u_i-u_j)} \ ,  \quad i \neq j \,, $$
where $\gamma = u_1 + u_2 + u_3 + 2 \nu$.
\end{lem}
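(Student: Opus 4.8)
The plan is to compute $\partial \lambda_i / \partial u_j$ directly from the definition (\ref{I}) and then divide by $\lambda_i - \lambda_j$, using the Euler--Poisson--Darboux relations (\ref{EPD}) to eliminate the second derivatives of $I$. Throughout I abbreviate $I_i := \partial_{u_i} I$ and $I_{ij} := \partial^2 I / (\partial u_i \partial u_j)$. The starting point is to rewrite (\ref{I}) as $\lambda_i - \gamma = -I/I_i$, which yields two identities that will carry the whole argument: $I/I_i = \gamma - \lambda_i$, and, dividing the $i$ and $j$ versions, $I_j/I_i = (\gamma - \lambda_i)/(\gamma - \lambda_j)$. These let me trade every ratio of derivatives of $I$ for an expression in $\lambda_i$, $\lambda_j$, and $\gamma$.

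First I would differentiate. Since $\partial_{u_j}\gamma = 1$, the quotient rule gives, for $i \neq j$,
\[
\frac{\partial \lambda_i}{\partial u_j} = 1 - \frac{I_j}{I_i} + \frac{I\,I_{ij}}{I_i^{\,2}} .
\]
Next I would substitute the EPD relation (\ref{EPD}) in the form $I_{ij} = (I_i - I_j)/[2(u_i - u_j)]$ into the last term. Using $I/I_i^{\,2} = (\gamma - \lambda_i)^2/I$ together with $I_i - I_j = I(\lambda_i - \lambda_j)/[(\gamma - \lambda_i)(\gamma - \lambda_j)]$, both immediate consequences of the two identities above, the third term collapses to $(\gamma - \lambda_i)(\lambda_i - \lambda_j)/[2(u_i - u_j)(\gamma - \lambda_j)]$, while $1 - I_j/I_i = (\lambda_i - \lambda_j)/(\gamma - \lambda_j)$.

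The crucial feature is that both surviving terms share the common factor $(\lambda_i - \lambda_j)/(\gamma - \lambda_j)$, so that
\[
\frac{\partial \lambda_i}{\partial u_j} = \frac{\lambda_i - \lambda_j}{\gamma - \lambda_j}\left[1 + \frac{\gamma - \lambda_i}{2(u_i - u_j)}\right].
\]
Dividing by $\lambda_i - \lambda_j$ cancels this factor, and clearing the bracket over the common denominator $2(u_i - u_j)$ gives the claimed formula once one records the signs $(\gamma - \lambda_i) + 2(u_i - u_j) = -[(\lambda_i - \gamma) - 2(u_i - u_j)]$ and $\gamma - \lambda_j = -(\lambda_j - \gamma)$. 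The main point to get right is precisely this factorization: the appearance of $\lambda_i - \lambda_j$ in $\partial \lambda_i / \partial u_j$ is what makes $B_{ij}$ regular even where $\lambda_i = \lambda_j$, and hence what will permit Tsarev's theorem (Theorem \ref{th1}) to be invoked at the non-strict hyperbolicity points. Everything else is routine bookkeeping; the only step requiring care is verifying that the two terms genuinely carry the factor $(\lambda_i - \lambda_j)$.
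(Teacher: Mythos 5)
Your proposal is correct and follows essentially the same route as the paper: differentiate $\lambda_i=\gamma-I/\partial_{u_i}I$, use the Euler--Poisson--Darboux relation to replace $\partial^2_{u_iu_j}I$, and observe that the common factor (the paper cancels $\partial^2_{u_iu_j}I$ directly, you cancel the equivalent factor $\lambda_i-\lambda_j$) drops out of the quotient. All of your intermediate identities check out, and the final sign bookkeeping reproduces the stated formula.
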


\begin{proof}

By (\ref{I}), we calculate
\begin{eqnarray*}
\lambda_i-\lambda_j&=&I{\partial_{u_i}I-\partial_{u_j}I\over
(\partial_{u_i}I) (\partial_{u_j}I)}=2I{(u_i-u_j)
\partial_{u_i u_j}^2 I \over
(\partial_{u_i}I) (\partial_{u_j} I)} \ ,\\
{\partial \lambda_i \over \partial u_j}&=&
{2(u_i-u_j)\partial_{u_i u_j}^2 I \over \partial_{u_i}I}+
{I \partial_{u_i u_j}^2 I \over (\partial_{u_i}I)^2} \ ,
\end{eqnarray*}
where we have used (\ref{EPD}). Hence, we get
\begin{eqnarray*}
\nonumber
B_{ij}
&=&{ 2(u_i-u_j)+ {I \over \partial_{u_i}I} \over
{2I \over \partial_{u_j}I}(u_i-u_j)}  \\
\label{symmetry}
&=&{1 \over 2}{ (\lambda_i-\gamma) - 2(u_i-u_j) \over
(\lambda_j-\gamma)(u_i-u_j)} \ ,
\end{eqnarray*}
where we have used (\ref{I}) to express $I /\partial_{u_i} I$.
\end{proof}

\section{The Single Phase Whitham Equations}

In this section, we will summarize some of the properties of the speeds
$\lambda_i$'s of (\ref{I}) for later use.

Function $I$ of (\ref{I}) is a complete elliptic integral; indeed,
\begin{equation}
\label{pi}
I(u_1, u_2, u_3) = {2(u_3 + \nu) \Pi(\rho, s) \over \sqrt{(u_1 - u_3)(u_2 + \nu)} } \ ,
\end{equation}
where $\Pi(\rho, s)$ is the complete integral of third kind, and
\begin{equation} \label{s}
\rho = {u_2 - u_3 \over u_2 + \nu} \ , \quad  s = \frac{(u_2 - u_3)(u_1 + \nu)}{(u_1 - u_3)(u_2 + \nu)} \ .
\end{equation}

Properties of complete elliptic integrals of the first, second and third kind are listed
in Appendix A.

Using the well known derivative formulae (\ref{P3}) and (\ref{P3'}),
one is able to rewrite $\lambda_i$ of (\ref{I}) as \cite{abe}
\begin{eqnarray}
\lambda_1(u_1, u_2, u_3) &=& u_1 + u_2 + u_3 + 2 \nu  + 2(u_1 - u_2)
\frac{(u_3 + \nu) \Pi(\rho,s)}{(u_2 + \nu)E(s)} \ , \nonumber \\
\lambda_2(u_1, u_2, u_3) &=& u_1 + u_2 + u_3 + 2 \nu  + 2 (u_3 - u_2)
\frac{ (1-s)\Pi(\rho, s) }{E(s) - (1-s) K(s)} \ , \label{lambda} \\
\lambda_3(u_1, u_2, u_3) &=& u_1 + u_2 + u_3 + 2 \nu + 2(u_2-u_3)
\frac{(u_3 + \nu) \Pi(\rho, s)}{(u_2 + \nu)[E(s) - K(s)]} \ . \nonumber
\end{eqnarray}
Here $K(s)$ and $E(s)$ are complete elliptic integrals of the first
and second kind.

Using inequalities (\ref{KE}), we obtain
\begin{eqnarray}
\lambda_{1} - (u_{1} + u_{2} + u_{3} + 2 \nu) &>& 0 \ , \label{l1>}  \\
\lambda_{2} - (u_{1} + u_{2} + u_{3} + 2 \nu) &<& 0 \ , \label{l2<}  \\
\lambda_{3} - (u_{1} + u_{2} + u_{3} + 2 \nu) &<& 0 \ , \label{l3<}
\end{eqnarray}
for $u_1 > u_2 > u_3> - \nu$.
In view of (\ref{K}-\ref{E2}) and (\ref{P1}-\ref{P2}), we find that $\lambda_{1}$,
$\lambda_{2}$ and
$\lambda_{3}$ have behavior

(1) At $u_{2}$ = $u_{3}$,
\begin{equation}
\label{tr}
\begin{array}{ll}
\lambda_{1}(u_1, u_2, u_3) = 3 u_{1} + 2 \nu\ , \\
\lambda_{2}(u_1, u_2, u_3) =
\lambda_{3}(u_1, u_2, u_3) = u_1 + 2 u_3 + 2 \nu - {4 (u_3 + \nu)(u_1 - u_3) \over u_1 + \nu} \ .
\end{array}
\end{equation}

(2) At $u_{1}$ = $u_{2}$,
\begin{equation}
\label{le}
\begin{array}{ll}
\lambda_{1}(u_1, u_2, u_3) =
\lambda_{2}(u_1, u_2, u_3) = 2 u_{1} +  u_{3}  + 2 \nu \ , \\
\lambda_{3}(u_1, u_2, u_3) = 3 u_{3} + 2 \nu \ .
\end{array}
\end{equation}

\begin{lem}
\label{IM}
\begin{equation*}
\frac{\pd \lambda_3 }{\pd u_3} < \frac{3}{2}
\frac{\lambda_2 - \lambda_3 }{u_2 - u_3} < \frac{\pd
  \lambda_2}{\pd u_2}
\end{equation*}
for $0< u_3 + \nu  < u_2 + \nu < u_1 + \nu$.
\end{lem}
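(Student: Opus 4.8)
The plan is to prove the two inequalities of Lemma~\ref{IM} separately, both by reducing everything to the explicit elliptic-integral formulas for $\lambda_2$ and $\lambda_3$ in (\ref{lambda}). Since the middle term involves the difference quotient $(\lambda_2 - \lambda_3)/(u_2 - u_3)$, my first step is to compute that quantity in closed form. Subtracting the two formulas in (\ref{lambda}) gives
\begin{displaymath}
\lambda_2 - \lambda_3 = 2(u_3 - u_2)\,(u_3 + \nu)\,\Pi(\rho,s)
\left[\frac{1-s}{(u_3+\nu)\bigl(E(s)-(1-s)K(s)\bigr)}
+ \frac{1}{(u_2+\nu)\bigl(K(s)-E(s)\bigr)}\right],
\end{displaymath}
so after dividing by $u_2 - u_3$ the difference quotient becomes a single expression in $K$, $E$, $\Pi$ and the variables $u_i$. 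The point of this reduction is that all three objects in the lemma will then be expressed through the same elliptic integrals, and the inequalities can be attacked as inequalities among $K(s)$, $E(s)$, $\Pi(\rho,s)$ and their derivatives.

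Next I would compute the two partial derivatives $\partial\lambda_3/\partial u_3$ and $\partial\lambda_2/\partial u_2$ by differentiating the expressions in (\ref{lambda}) directly. This is where the standard derivative formulae for complete elliptic integrals (the analogues of (\ref{P3}), (\ref{P3'}) cited in the paper, together with the relations in Appendix~A) must be used carefully, because $\rho$ and $s$ in (\ref{s}) each depend on all three $u_i$, so the chain rule produces several terms. I expect the Euler--Poisson--Darboux equations (\ref{EPD}) to be the key simplifying device here: since $\lambda_i = \gamma - I/\partial_{u_i}I$, the second-order EPD relations let one express $\partial\lambda_i/\partial u_j$ through $I$, $\partial_{u_i}I$ and $\partial^2_{u_iu_j}I$ in a controlled way, much as in the proof of Lemma~\ref{Bij}. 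An alternative, possibly cleaner, route is to avoid the elliptic functions entirely and work with the integral representation of $I$: write each $\lambda_i$ as $\gamma - I/\partial_{u_i}I$, and use the EPD structure to reduce $\partial\lambda_2/\partial u_2 - \tfrac32(\lambda_2-\lambda_3)/(u_2-u_3)$ to an expression whose sign is manifest.

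Once both partials and the difference quotient are in hand, the remaining task is purely to verify the sign of the two differences
\begin{displaymath}
\frac{\partial\lambda_2}{\partial u_2} - \frac{3}{2}\frac{\lambda_2-\lambda_3}{u_2-u_3}
\quad\text{and}\quad
\frac{3}{2}\frac{\lambda_2-\lambda_3}{u_2-u_3} - \frac{\partial\lambda_3}{\partial u_3}
\end{displaymath}
on the region $0 < u_3+\nu < u_2+\nu < u_1+\nu$. I would try to factor each difference so that the sign is governed by a single scalar inequality among $K$, $E$ and $\Pi$; the monotonicity and convexity properties of the elliptic integrals listed in Appendix~A, together with the bounds (\ref{KE}) already used to derive (\ref{l1>})--(\ref{l3<}), should then close the argument. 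The main obstacle I anticipate is precisely this last step: after the differentiations one is left with a seemingly messy combination of $K$, $E$ and $\Pi$, and it is not obvious a priori that it has a definite sign. The real work will be finding the right algebraic regrouping—likely one that isolates a factor like $E(s)-(1-s)K(s)$ or $K(s)-E(s)$, both of which are known to be positive on $0<s<1$—so that the inequalities reduce to known elementary elliptic-integral inequalities rather than to a numerical check. Examining the two limiting cases $u_2\to u_3$ (where (\ref{tr}) applies) and $u_2\to u_1$ (where (\ref{le}) applies) would serve both as a consistency check and as a guide to the correct factorization.
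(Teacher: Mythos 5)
Your overall strategy coincides with the paper's: both reduce the lemma to the explicit elliptic-integral formulas (\ref{lambda}), compute $\pd \lambda_2/\pd u_2$ via the relation $\pd \lambda_2/\pd u_2 = I\,\pd^2_{u_2 u_2}I/(\pd_{u_2}I)^2$ coming from (\ref{I}) (this is the route the paper takes in (\ref{lambda2u2}), so your ``alternative, possibly cleaner'' path is in fact the one used), and then factor the difference so that its sign is governed by a scalar combination of $K$ and $E$. One algebraic point first: your displayed formula for $\lambda_2-\lambda_3$ has a sign error. With the plus sign inside the bracket both terms are positive (by (\ref{KE})), so your expression would make $\lambda_2-\lambda_3$ negative throughout the region; but (\ref{le}) gives $\lambda_2-\lambda_3 = 2(u_1-u_3)>0$ at $u_1=u_2$, and indeed the sign of $\lambda_2-\lambda_3$ is exactly what Lemma \ref{4.4} tracks. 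The correct combination is the function $M$ of (\ref{M'}), as in (\ref{M}).

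The substantive gap is in the step you yourself flag as the main obstacle. The paper's computation yields
\begin{displaymath}
\frac{\pd \lambda_2}{\pd u_2} - \frac{3}{2}\,\frac{\lambda_2-\lambda_3}{u_2-u_3}
= - \frac{(u_3+\nu)(4-3s)\,\Pi E^2}{(u_2+\nu)(K-E)\bigl[E-(1-s)K\bigr]^2}
\left\{ (1-s)\Bigl(\frac{K}{E}\Bigr)^2 - 2\,\frac{K}{E} + \frac{4+s}{4-3s} \right\},
\end{displaymath}
so everything hinges on the negativity of the quadratic in $K/E$ inside the braces. The bounds (\ref{KE}) are \emph{not} strong enough to give this: the smaller root of that quadratic is $\bigl(1 - s/\sqrt{4-3s}\bigr)/(1-s) = 1 + \tfrac{s}{2} + \tfrac{5}{16}s^2 + \cdots$, which lies strictly above the lower bound $1/(1-\tfrac{s}{2}) = 1 + \tfrac{s}{2} + \tfrac{1}{4}s^2 + \cdots$ of (\ref{KE}) for all $0<s<1$, and in fact agrees with the expansion of $K(s)/E(s)$ itself through order $s^2$. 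So the needed statement $K/E > \bigl(1-s/\sqrt{4-3s}\bigr)/(1-s)$ is a genuinely sharper elliptic-integral inequality than anything in Appendix A, and the paper does not derive it from (\ref{KE}) but imports it from (4.18) of \cite{Tian1}. Your plan, which proposes to close the argument with (\ref{KE}) and generic monotonicity/convexity facts, would therefore stall at precisely this point; identifying and proving (or correctly citing) that sharper inequality is the missing ingredient.
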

\begin{proof}

Comparing formulae (\ref{I}) and (\ref{lambda}), we use (\ref{pi}) to obtain
\begin{equation}
\label{lambda2}
{\pd I \over \pd u_2} ={ \sqrt{(u_1 - u_3)(u_2 + \nu)}  \over (u_2 - u_3) (u_1 - u_2)} \ [ E - (1-s) K] \ .
\end{equation}
Differentiating (\ref{I}) for $\lambda_2$ and using (\ref{lambda2}) yields
\begin{eqnarray}
{\pd \lambda_2 \over \pd u_2} &=& {I \pd^2_{u_2 u_2} I \over (\pd_{u_2} I)^2}  \no  \\
&=& {(u_3 + \nu) \Pi \over (u_2 + \nu) [E - (1-s) K]^2} \times \no
\\ & & \times
\left \{ [s + {u_2 - u_3 \over u_1 - u_3} - 2
{u_1 - u_2 \over u_1 - u_3}]
[E - (1-s) K] + s(1-s) K \right \}  \label{lambda2u2} \ .
\end{eqnarray}
Using formulae (\ref{lambda}) for
$\lambda_2$ and $\lambda_3$, we obtain
\begin{equation}
\label{M}
\lambda_2(u_1, u_2, u_3) - \lambda_3(u_1, u_2, u_3) = {2 (u_2 - u_3) (u_3 + \nu) \Pi 
M(u_1, u_2, u_3)\over (u_2 + \nu)(K-E)[E-(1-s)K]} \ ,
\end{equation}
where
\begin{equation}
\label{M'}
M(u_1, u_2, u_3) = [1 + {u_1 - u_2 \over u_1 - u_3}] E - [{u_1 - u_2 \over u_1 - u_3} + (1-s)] K \ .
\end{equation}

We then obtain from (\ref{lambda2u2}) and (\ref{M}) that
\begin{align*}
{\pd \lambda_2 \over \pd u_2} - {3 \over 2} \ {\lambda_2 - \lambda_3 \over u_2 - u_3}
&= - {(u_3 + \nu)(4-3s) \Pi E^2 \over (u_2 + \nu) [K - E][E - (1-s) K]^2} \left\{ (1-s) ({K \over E})^2  
\right. \\ 
&\phantom{=}
\left. 
\quad -2 ({K \over E}) + {4 + s \over 4 -3s} \right\}  > 0 \ ,
\end{align*}
where the inequality follows from the negativity of the function in the bracket (c.f. (4.18) of \cite{Tian1}).
This proves part of the lemma. The other part can be shown in the same way.

\end{proof}

We conclude this section with a few calculations.
We use (\ref{K3}) and (\ref{E3}) to calculate the derivative of (\ref{M'})
\begin{eqnarray}
\label{pM}
{\pd M(u_1,u_2,u_3) \over \pd u_2} &=& {1 \over 2(u_1 - u_3)} \ \{ [2 + {(u_1 + \nu)(u_3 + \nu) \over (u_2 + \nu)^2}] K  \no
\\ & & \phantom{ {1 \over 2(u_1 - u_3)} \ \{ } 
-[2 + {u_1 + \nu \over u_2 + \nu}] E \}  \ , \\
{\pd^2 M(u_1,u_2,u_3) \over \pd u_2^2} &=& {(u_1 + \nu) [ 4(u_1 + \nu) - 2 (u_2 + \nu) + (u_3 + \nu)] E
 \over 4(u_1 - u_3)(u_1 - u_2) (u_2 + \nu)^2}  \ . \label{ppM}
 \end{eqnarray}

Finally, we use the expansions (\ref{K}-\ref{E}) for $K$ and $E$ to obtain
\begin{eqnarray}
M(u_1, u_2,u_3) &=& {\pi \over 2} \left \{ {(u_2 - u_3)s  \over 2(u_1 - u_3)} + {1 \over 16} \ (1 - {3 (u_1 - u_2) \over u_1 - u_3}) s^2  \no 
\right. \\ & & \phantom{{\pi \over 2} } \left. \quad +
{3 \over 128} (1 - {5(u_1 - u_2)
\over u_1 - u_3} ) s^3 + O(s^4) \right \}
\nonumber \\
&=& {\pi \over 2} \left \{ [ {u_2 + \nu \over 2(u_1 + \nu)} + {1 \over 16} \ (1 - {3 (u_1 - u_2)
\over u_1 - u_3}) ] s^2  \no 
\right. \\ & & \phantom{{\pi \over 2}} \left. \quad +
{3 \over 128} (1 - {5(u_1 - u_2)
\over u_1 - u_3} ) s^3 + O(s^4) \right \} \,,
\label{M2}
\end{eqnarray}
where we have used formula (\ref{s}) for $s$ in the last equality.

\section{Step-like Initial Data}

In this section, we will consider
the step-like initial data
\begin{equation} \label{step}
u_0(x) = \left\{ \begin{matrix} a & \quad  x < 0 \\
b & \quad x > 0 \end{matrix} \right. \ , \ \quad a \neq b
\end{equation}
for
equation (\ref{CH}).
Since the solution of (\ref{we0}) will never develop a shock when $a \leq b$, we will
be interested only in the case $a > b$.  We classify the initial data (\ref{step}) into two types:
\begin{itemize}
\item{(I)} $a + \nu > 4(b + \nu)$  \ ,
\item{(II)} $a + \nu \leq 4(b + \nu) \,.$
\end{itemize}
We will solve the initial value problem for the Whitham equations for these two types of initial data.

\subsection{Type I: $
a + \nu > 4(b + \nu)$}

\begin{figure}[h]
\begin{center}
\includegraphics[width=12cm]{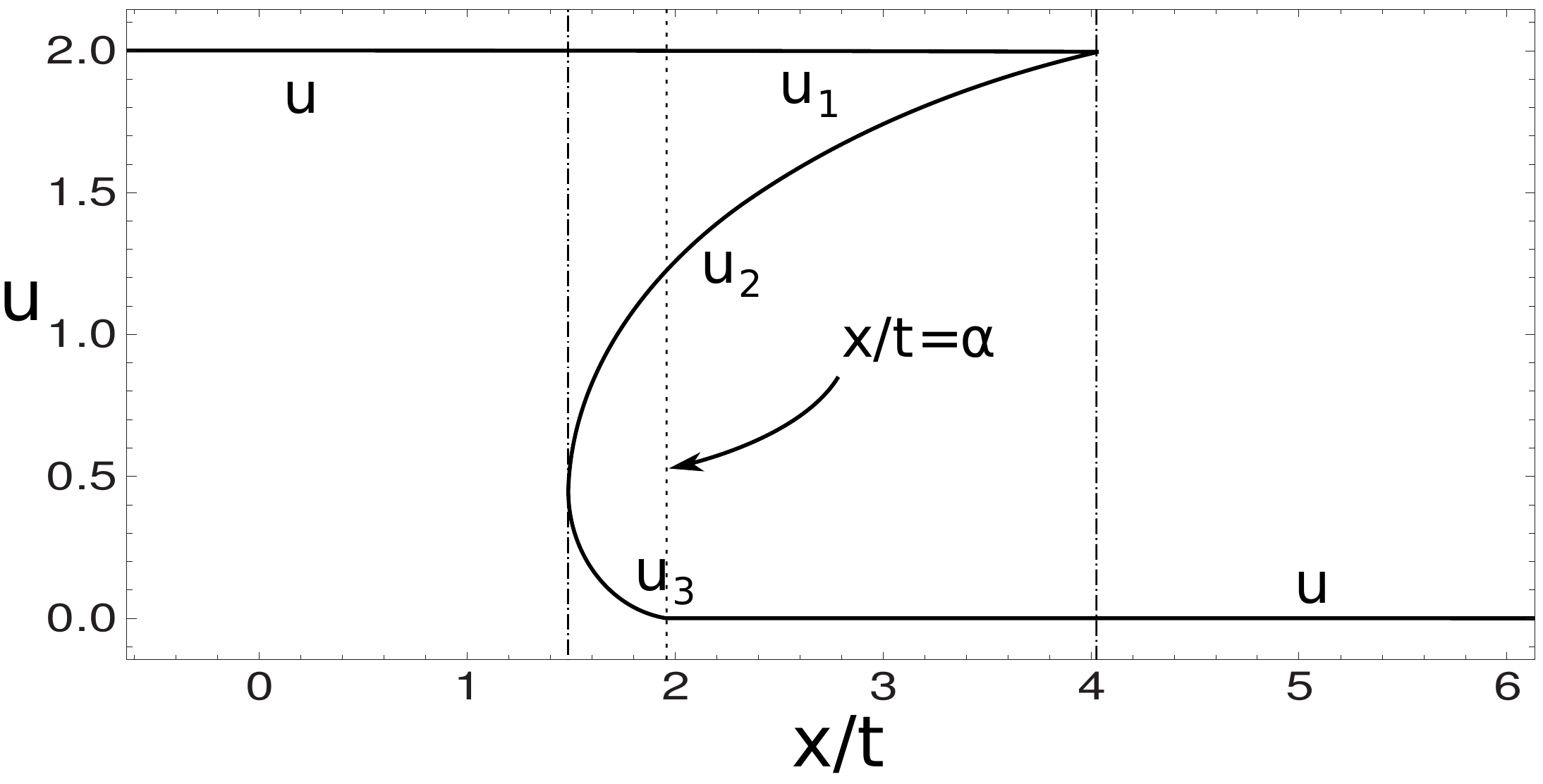}
\caption{Self-Similar solution of the Whitham equations for
  $a=2$, $b= 0$ and $\nu = 1/20$ of type I.}
\end{center}\end{figure}

\begin{thm}(see Figure 2.)
\label{4.1}For the step-like initial data $u_0(x)$ of (\ref{step}) with $0 < b + \nu < (a + \nu)/4$, the solution of the Whitham equations
(\ref{CHW}) is given by
\begin{equation}
\label{ws1}
u_1 = a \ , \quad x = \lambda_2(a, u_2, u_3) \ t \ , \quad x = \lambda_3(a, u_2, u_3) \ t
\end{equation}
for $(3a - \nu)/4 < x/t \leq \alpha$ and by
\begin{equation}
\label{ws2}
u_1 = a \ , \quad x = \lambda_2(a, u_2, b) \ t \ , \quad u_3 = b
\end{equation}
for $\alpha t \leq x < (2a + b + 2 \nu) t$, where $\alpha = \lambda_2(a, u^*, b)$ and $u^*$ is the unique
solution $u_2$ of $\lambda_2(a,u_2,b)=\lambda_3(a,u_2,b)$ in the interval $b<u_2<a$.
Outside the region $(3a - \nu)/4 < x/t < 2a + b + 2 \nu$, the solution of the Burgers
equation (\ref{we0}) is given by
\begin{equation}
\label{bs1}  u \equiv a \quad \mbox{$x/t \leq (3a - \nu)/4$}
\end{equation}
and
\begin{equation}
\label{bs2}  u \equiv b \quad \mbox{$x/t \geq 2a + b + 2 \nu$} \ .
\end{equation}
\end{thm}

The boundaries $x/t = (3a -  \nu)/4$ and $x/t = 2a + b + 2 \nu$ are
the trailing and
leading edges, respectively, of the dispersive shock.
They separate the solution into the region governed by the single phase
Whitham equations  and the region governed by the Burgers equation.

The proof of Theorem \ref{4.1} is based on a series of lemmas.

We first show that the solutions defined by formulae (\ref{ws1}) and (\ref{ws2})
indeed satisfy the Whitham equations (\ref{CHW}) \cite{tsa}.

\begin{lem}
\label{4.2}
\begin{enumerate}
\item[(i)] The functions $u_1$, $u_2$ and $u_3$ determined by equations (\ref{ws1})
give a solution of the Whitham equations (\ref{CHW}) as long as $u_2$ and $u_3$
can be solved from (\ref{ws1}) as functions of $x$ and $t$.

\item[(ii)] The functions $u_1$, $u_2$ and $u_3$ determined by equations (\ref{ws2})
give a solution of the Whitham equations (\ref{CHW}) as long as $u_2$
can be solved from (\ref{ws2}) as a function of $x$ and $t$.
\end{enumerate}

\end{lem}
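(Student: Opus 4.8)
The key observation is that formulae (\ref{ws1}) and (\ref{ws2}) are instances of the hodograph transform (\ref{ho}) in which the functions $w_i$ take the special self-similar form $w_i \equiv 0$. Indeed, each equation of the form $x = \lambda_i \, t$ is exactly (\ref{ho}) with $s_i = \lambda_i$ and $w_i = 0$. The remaining equations (the constraints $u_1 = a$, or $u_3 = b$) simply fix one Riemann invariant to a constant, which is trivially compatible with the Whitham flow since a constant is a (degenerate) solution along a characteristic. So the real content is to verify the hypothesis of Theorem \ref{th1}: namely that the constant functions $w_i = 0$ solve the linear system (\ref{wi}) with the coefficients $A_{ij} = B_{ij}$ coming from the speeds $\lambda_i$, on the locus determined by each ansatz.

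First I would treat part (i). Here all three speeds $\lambda_i(a, u_2, u_3)$ are active, and the candidate is $w_1 = w_2 = w_3 = 0$. Plugging $w_i \equiv 0$ into (\ref{wi}) gives the requirement $\partial w_i / \partial u_j = B_{ij}(w_i - w_j) = 0$, which is automatically satisfied since both sides vanish. Thus the self-similar choice $w_i = 0$ is always a formal solution of the linear system, and by the first (forward) direction of Theorem \ref{th1} the functions $u_1, u_2, u_3$ defined implicitly by $x = \lambda_i t$ satisfy the Whitham equations (\ref{CHW}), precisely on the set where the transform is invertible, i.e.\ where $u_2$ and $u_3$ can be solved as functions of $x$ and $t$. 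The constraint $u_1 = a$ is consistent because setting $u_1$ constant means $u_{1x} = 0$, so the first Whitham equation $u_{1t} + \lambda_1 u_{1x} = 0$ holds trivially; one should check that the remaining two equations for $u_2, u_3$ are unaffected, which follows from Corollary \ref{c} guaranteeing the Jacobian of the map $(u_2,u_3)\mapsto(x,t)$ is diagonal in the relevant invariants.

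For part (ii) the argument is entirely analogous: now $u_1 = a$ and $u_3 = b$ are both held fixed, only $u_2$ evolves via $x = \lambda_2(a, u_2, b)\, t$, which is again the hodograph relation with $s_2 = \lambda_2$ and $w_2 = 0$. Since there is effectively a single nontrivial Riemann invariant, the system (\ref{wi}) reduces to a condition that is vacuously met by $w_2 = 0$, and the forward direction of Tsarev's theorem yields a solution of (\ref{CHW}) wherever $u_2$ is recoverable from (\ref{ws2}). I would again note that fixing $u_1 = a$ and $u_3 = b$ forces $u_{1x} = u_{3x} = 0$, so the first and third Whitham equations are satisfied identically.

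\textbf{The main obstacle} is not the verification that $w_i = 0$ solves (\ref{wi})—that is immediate—but rather confirming that Theorem \ref{th1} remains applicable despite the \emph{non-strict} hyperbolicity of (\ref{CHW}). The speeds $\lambda_2$ and $\lambda_3$ coincide at the trailing edge $u_2 = u_3$ and again at the special point $u_2 = u^*$, where $A_{ij} = B_{ij}$ of Lemma \ref{Bij} could in principle blow up. The crucial point, already flagged in the text preceding Lemma \ref{Bij}, is that the explicit formula $B_{ij} = \tfrac{1}{2}\,[(\lambda_i - \gamma) - 2(u_i - u_j)] / [(\lambda_j - \gamma)(u_i - u_j)]$ shows $B_{ij}$ stays finite even where $\lambda_i = \lambda_j$, so the linear system (\ref{wi}) has no genuine singularity and Tsarev's construction goes through. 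I would therefore close the argument by invoking this non-singularity of $B_{ij}$ to justify that the forward direction of Theorem \ref{th1} applies on the whole self-similar region, with solvability for $u_2, u_3$ (resp.\ $u_2$) being exactly the stated hypothesis.
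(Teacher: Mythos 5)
Your proposal is correct and follows essentially the same route as the paper: the paper's proof is precisely the forward direction of Tsarev's theorem specialized to $w_i\equiv 0$, carried out by direct differentiation of $x=\lambda_i t$, and it hinges on the same key fact you identify, namely that $\partial\lambda_2/\partial u_3=\partial\lambda_3/\partial u_2=0$ on the solution locus because $B_{ij}$ of Lemma \ref{Bij} is nonsingular while $\lambda_2 t=\lambda_3 t=x$ there. Your handling of the constant invariants $u_1=a$ (and $u_3=b$ in part (ii)) matches the paper's observation that those equations of (\ref{CHW}) hold trivially.
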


\begin{proof}

(i) $u_1$ obviously satisfies the first equation of (\ref{CHW}). To verify the
second and third equations, we observe that
\begin{equation}
\label{dia}
\frac{\pd \lambda_2 }{\pd u_3} = \frac{\pd \lambda_3 }{\pd u_2} = 0
\end{equation}
on the solution of (\ref{ws1}). This follows from Lemma \ref{Bij}.

We then calculate the partial derivatives of the second equation of (\ref{ws1})
with respect to $x$ and $t$.
$$ 1 = \frac{\pd \lambda_2 }{\pd u_2} \ t ( u_{2} )_x \ ,
\quad 0 = \frac{\pd \lambda_2 }{\pd u_2} \ t (u_{2})_t + \lambda_2 \ ,$$
which give the second equation of (\ref{CHW}).

The third equation of (\ref{CHW}) can be verified in the same way.

(ii) The second part of Lemma \ref{4.2} can easily be proved in a similar
manner.

\end{proof}

We now determine the trailing edge. Eliminating $x$ and $t$ from the last two equations of (\ref{ws1})
yields
\begin{equation}
\label{m23}
\lambda_2(a, u_2, u_3) - \lambda_3(a, u_2, u_3) = 0 \ .
\end{equation}
In view of formula (\ref{M}),  we replace (\ref{m23}) by
\begin{equation}
\label{F1}
{M(a, u_2,u_3) \over s^2} = 0 \ .
\end{equation}

Therefore, at the trailing edge where $u_2=u_3$, i.e., $s=0$, equation
(\ref{F1}), in view of the expansion (\ref{M2}), becomes
$${u_3 + \nu \over 2(a + \nu)} + {1 \over 16} \ [1 - {3 (a - u_3) \over a - u_3}] = 0 \ ,$$
which gives $u_2=u_3 = (a - 3 \nu)/4$.

\begin{lem} \label{4.3}
Equation (\ref{F1}) has a unique solution satisfying $u_2=u_3$. The solution
is $u_2=u_3=(a - 3 \nu)/4$. The rest of equations (\ref{ws1}) at the trailing edge
are $u_1=a$ and
$x/t = \lambda_2(a, (a - 3 \nu)/4, (a - 3 \nu)/4) = (3a - \nu)/4$.
\end{lem}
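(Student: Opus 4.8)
The plan is to prove Lemma \ref{4.3} by establishing uniqueness of the root of equation (\ref{F1}) along the diagonal $u_2 = u_3$, and then reading off the remaining coordinates of the trailing edge from equations (\ref{ws1}). The key observation is that on the diagonal $u_2 = u_3$ we have $s = 0$, so equation (\ref{F1}) must be understood as the limit $M(a,u_2,u_3)/s^2$ as $s \to 0^+$. From the expansion (\ref{M2}), the leading behavior of $M$ is of order $s^2$, so this limit is finite and equals $\frac{\pi}{2}\bigl[\frac{u_2+\nu}{2(a+\nu)} + \frac{1}{16}(1 - \frac{3(a-u_2)}{a-u_3})\bigr]$ evaluated at $u_3 = u_2$. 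Setting $u_3 = u_2$ collapses the bracketed ratio $(a-u_2)/(a-u_3)$ to $1$, yielding the condition $\frac{u_2+\nu}{2(a+\nu)} + \frac{1}{16}(1-3) = 0$, i.e. $\frac{u_2+\nu}{2(a+\nu)} = \frac{1}{8}$.

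Solving this linear equation gives $u_2 + \nu = (a+\nu)/4$, hence $u_2 = u_3 = (a - 3\nu)/4$, which is exactly the claimed value; this is forced and therefore unique among diagonal points. First I would verify that this candidate lies in the admissible range: since we are in Type I with $b + \nu < (a+\nu)/4$, the value $u_2 + \nu = (a+\nu)/4$ satisfies $b + \nu < u_2 + \nu < a + \nu$, so the trailing edge root sits strictly between $b$ and $a$ as required. The uniqueness claim is immediate because the reduced equation is linear in $u_2$, leaving no room for a second diagonal solution.

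To finish, I would substitute $u_2 = u_3 = (a-3\nu)/4$ into the behavior (\ref{tr}) at the degenerate point $u_2 = u_3$. Formula (\ref{tr}) gives $\lambda_2 = \lambda_3 = u_1 + 2u_3 + 2\nu - \frac{4(u_3+\nu)(u_1-u_3)}{u_1+\nu}$ with $u_1 = a$ and $u_3 = (a-3\nu)/4$. A direct computation: $u_3 + \nu = (a+\nu)/4$, $u_1 - u_3 = 3(a+\nu)/4$, so the subtracted term is $\frac{4 \cdot \frac{a+\nu}{4}\cdot \frac{3(a+\nu)}{4}}{a+\nu} = \frac{3(a+\nu)}{4}$, and $u_1 + 2u_3 + 2\nu = a + \frac{a-3\nu}{2} + 2\nu = \frac{3a+\nu}{2}$. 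Subtracting gives $\frac{3a+\nu}{2} - \frac{3(a+\nu)}{4} = \frac{6a+2\nu - 3a - 3\nu}{4} = \frac{3a-\nu}{4}$, which is precisely the asserted speed $x/t = (3a-\nu)/4$. This confirms all three equations of (\ref{ws1}) at the trailing edge.

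The only subtle point — and the step I expect to require the most care — is the passage to the limit in (\ref{F1}) as $s \to 0$: one must justify that $M/s^2$ extends continuously to the diagonal and that its vanishing there is equivalent to the original condition $\lambda_2 = \lambda_3$. This is handled entirely by the expansion (\ref{M2}), which shows $M = \frac{\pi}{2}\,[\,\cdots\,]\,s^2 + O(s^3)$, so $M/s^2$ has a well-defined limit whose leading coefficient is the bracket evaluated at $s = 0$; everything else is elementary algebra.
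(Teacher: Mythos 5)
Your proposal is correct and follows essentially the same route as the paper: both interpret (\ref{F1}) on the diagonal $u_2=u_3$ via the expansion (\ref{M2}), reduce it to the linear condition $\frac{u_2+\nu}{2(a+\nu)}-\frac{1}{8}=0$, and then obtain the speed $(3a-\nu)/4$ from the degenerate formula (\ref{tr}). Your added checks (admissibility $b+\nu<(a+\nu)/4<a+\nu$ and the explicit uniqueness remark) are harmless elaborations of what the paper leaves implicit.
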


Having located the trailing edge, we now solve equations (\ref{ws1})
in the neighborhood of the trailing edge. We first consider equation
(\ref{F1}). We use (\ref{M2}) to differentiate $M/s^2$ at the
trailing edge $u_1=a$, $u_2=u_3=(a-3 \nu)/4$, to find
$${\pd \over \pd u_2} [{M \over s^2}] =
{\pd \over  \pd u_3} [{M \over s^2}] = {\pi \over 8(a+\nu)}  \ ,$$
which shows that equation (\ref{F1}) or equivalently (\ref{m23}) can be
inverted to give $u_3$ as a decreasing
function of $u_2$
\begin{equation}
\label{A} u_3 = A(u_2)
\end{equation}
in a neighborhood of $u_2=u_3=(a - 3\nu)/4$.

We now extend the solution (\ref{A}) of equation (\ref{m23}) in the region
$a > u_2 > (a - 3\nu)/4 > u_3 > b$ as far as possible. We deduce from Lemma \ref{IM} that
\begin{equation}
\label{dia2}
{\pd \lambda_2 \over \pd u_2} > 0 \ , \quad {\pd \lambda_3 \over \pd u_3} <  0
\end{equation}
on the solution of (\ref{m23}).
Because of (\ref{dia}) and (\ref{dia2}), solution (\ref{A}) of equation (\ref{m23})
can be extended as long as $a > u_2 > (a - 3\nu)/4 > u_3 > 0$.

There are two possibilities: (i) $u_2$ touches $a$ before or simultaneously
as $u_3$ reaches $b$ and (ii) $u_3$ touches $b$ before $u_2$ reaches $a$.

It follows from (\ref{le}) that
$$\lambda_2(a,a,u_3) > \lambda_3(a,a,u_3) \quad \mbox{for $b \leq u_3 < a$} \ .$$
This shows that (i) is impossible. Hence, $u_3$ will touch $b$ before $u_2$
reaches $a$. When this happens, equation (\ref{m23}) becomes
\begin{equation}
\lambda_2(a, u_2, b) - \lambda_3(a, u_2, b) = 0 \ . \label{u2}
\end{equation}

\begin{lem} \label{4.4}
Equation (\ref{u2}) has a simple zero in the region $b < u_2 < a$, counting
multiplicities. Denoting the zero by $u^*$, then $\lambda_2(a, u_2, b) - \lambda_3(a, u_2, b)$ is positive
for $u_2 > u^*$ and negative for $u_2 < u^*$.
\end{lem}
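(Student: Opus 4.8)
The plan is to reduce everything to the single scalar function $g(u_2) := \lambda_2(a,u_2,b) - \lambda_3(a,u_2,b)$ on the open interval $b < u_2 < a$ (with $u_1 = a$ and $u_3 = b$ held fixed) and to show that it has exactly one zero in $(b,a)$, which is a transversal up-crossing. First I would pin down the boundary behaviour. At the right endpoint, formula (\ref{le}) gives $\lambda_2(a,a,b) = 2a + b + 2\nu$ and $\lambda_3(a,a,b) = 3b + 2\nu$, so $g \to 2(a-b) > 0$ as $u_2 \to a^-$. At the left endpoint $g \to 0$ as $u_2 \to b^+$, since $\lambda_2 = \lambda_3$ when $u_2 = u_3$ by (\ref{tr}); to determine its sign just above $b$ I would use the identity (\ref{M}), whose prefactor $\frac{2(u_2-u_3)(u_3+\nu)\Pi}{(u_2+\nu)(K-E)[E-(1-s)K]}$ is strictly positive for $0<s<1$, together with the expansion (\ref{M2}): as $u_2 \to b^+$ one has $s \to 0^+$ and $M(a,u_2,b) \sim \frac{\pi}{2}\left[\frac{b+\nu}{2(a+\nu)} - \frac18\right]s^2$, whose bracket is negative precisely because the Type I hypothesis reads $a+\nu > 4(b+\nu)$. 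Hence $g<0$ on a right neighbourhood of $b$, and since $g>0$ near $a$ the intermediate value theorem already yields at least one zero in $(b,a)$.

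The crux is to show that every zero $u^*$ of $g$ in $(b,a)$ is simple with $g'(u^*)>0$. At such a point $\lambda_2 = \lambda_3$, and $g'(u^*) = \partial_{u_2}\lambda_2(u^*) - \partial_{u_2}\lambda_3(u^*)$. The key observation is that the second term vanishes: by Lemma \ref{Bij}, $\partial_{u_2}\lambda_3 = B_{32}(\lambda_3 - \lambda_2)$ with $B_{32} = \frac12\frac{(\lambda_3-\gamma) - 2(u_3-u_2)}{(\lambda_2-\gamma)(u_3-u_2)}$, and this coefficient stays finite at $u^*$ because $u_3 - u_2 = b - u^* \ne 0$ and $\lambda_2 - \gamma < 0$ by (\ref{l2<}); since $\lambda_3 - \lambda_2 = 0$ there, $\partial_{u_2}\lambda_3(u^*) = 0$. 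This is exactly the mechanism by which the non-strict hyperbolicity (the coincidence of speeds) fails to cause trouble. Consequently $g'(u^*) = \partial_{u_2}\lambda_2(u^*)$, and Lemma \ref{IM} gives $\partial_{u_2}\lambda_2 > \frac32\,\frac{\lambda_2-\lambda_3}{u_2-u_3} = 0$ at $u^*$. Hence $g'(u^*) > 0$.

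It then remains to deduce uniqueness and the sign pattern. Because every zero in $(b,a)$ is a strict up-crossing, two distinct zeros $u_1^* < u_2^*$ are impossible: $g$ would be positive just to the right of $u_1^*$ and negative just to the left of $u_2^*$, forcing an intermediate zero at which $g' \le 0$, a contradiction. Thus $g$ has exactly one zero $u^*$ in $(b,a)$, of multiplicity one since $g'(u^*)>0$; combined with $g<0$ near $b$ and $g>0$ near $a$ this gives $g<0$ for $b<u_2<u^*$ and $g>0$ for $u^*<u_2<a$, which is the assertion (and $\lambda_2-\lambda_3$ and $M$ carry the same sign through (\ref{M})). The hard part is the transversality step of the middle paragraph: everything hinges on recognising that $\partial_{u_2}\lambda_3$ vanishes at a coincidence point, so that $g'$ there collapses to $\partial_{u_2}\lambda_2$, whose positivity is delivered by Lemma \ref{IM}. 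I would deliberately route the uniqueness through this up-crossing argument rather than through the convexity of $M$, since the sign of $\partial_{u_2}^2 M$ is delicate near $u_2 = b$ and is not needed once transversality of the zeros is in hand.
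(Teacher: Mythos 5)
Your proof is correct, and its existence half is essentially the paper's: both reduce the sign of $\lambda_2(a,u_2,b)-\lambda_3(a,u_2,b)$ to the sign of $M(a,u_2,b)$ via (\ref{M}), use the expansion (\ref{M2}) with the Type I hypothesis $a+\nu>4(b+\nu)$ to get negativity just above $u_2=b$, and get positivity at $u_2=a$ (you read off $\lambda_2-\lambda_3=2(a-b)$ from (\ref{le}); the paper instead applies the $s\to 1$ asymptotics (\ref{K2})--(\ref{E2}) to (\ref{M'})), so the intermediate value theorem produces a zero. Where you genuinely diverge is uniqueness. The paper notes that (\ref{ppM}) makes $M(a,\cdot,b)$ convex in $u_2$, and a convex function that vanishes at $b$, dips negative, and is positive at $a$ crosses zero exactly once in $(b,a)$. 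You instead prove that every zero of $g=\lambda_2-\lambda_3$ is a transversal up-crossing: $\partial_{u_2}\lambda_3=B_{32}(\lambda_3-\lambda_2)$ with $B_{32}$ finite at a coincidence point (its denominator $(\lambda_2-\gamma)(u_3-u_2)$ is nonzero by (\ref{l2<}) and $u^*\neq b$), so $g'(u^*)=\partial_{u_2}\lambda_2>\tfrac32(\lambda_2-\lambda_3)/(u_2-u_3)=0$ by Lemma \ref{IM}, and two up-crossings would force an intermediate zero with $g'\le 0$. This is sound; the only point worth making explicit is that at $\lambda_2=\lambda_3$ the ratio defining $B_{32}$ is formally $0/0$, but the proof of Lemma \ref{Bij} shows both $\partial_{u_2}\lambda_3$ and $\lambda_3-\lambda_2$ are finite multiples of $\partial^2_{u_2u_3}I$, so $\partial_{u_2}\lambda_3$ indeed vanishes there. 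Your route is slightly longer but uses only Lemmas \ref{Bij} and \ref{IM} and makes visible exactly why the non-strict hyperbolicity is harmless; it is the same mechanism the paper invokes right after this lemma to obtain (\ref{dia2}) and the monotonicity of $\lambda_2(a,u_2,b)$ for $u_2>u^*$. Your stated reason for avoiding the convexity route is, however, off the mark: the sign of (\ref{ppM}) is not delicate --- the bracket $4(u_1+\nu)-2(u_2+\nu)+(u_3+\nu)$ exceeds $2(u_1+\nu)$ on $u_3<u_2<u_1$, so $\partial^2_{u_2}M>0$ throughout, and the paper's shorter argument goes through without difficulty.
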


\begin{proof}
We use (\ref{M}) and (\ref{ppM}) to prove the lemma.
In equation (\ref{M}), $K-E$ and $E-(1-s)K$ are all positive for $0<s<1$ in view of (\ref{KE}).
We claim that
$$M(a, u_2, b) = 0 \quad \mbox{for $u_2 = b$} \ , \quad M(a, u_2, b) < 0 \quad
\mbox{for $u_2$ near $b$} \ , $$
and $$
\quad M(a, u_2, b) > 0 \quad \mbox{for $u_2=a$} \ .$$
The equality and the first inequality follow from expansion (\ref{M2}) and
$a + \nu > 4(b+\nu)$. The second inequality is
obtained by applying (\ref{K2}) and (\ref{E2}) to (\ref{M'}).

We conclude from the two inequalities that $M(a, u_2, b)$ has a zero
in $b < u_2 < a$. This zero is unique because $M(a, u_2, b)$, in
view of (\ref{ppM}), is a convex function of $u_2$. This zero is
exactly $u^*$ and the rest of the theorem is proven easily.
\end{proof}

Having solved equation (\ref{m23}) for $u_3$ as a decreasing
function of $u_2$ for $(a - 3 \nu)/4 \leq u_2 \leq u^*$, we turn to
equations (\ref{ws1}). Because of (\ref{dia}) and (\ref{dia2}), the
second equation of (\ref{ws1}) gives $u_2$ as an increasing function
of $x/t$, for $(3a -  \nu)/4 \leq x/t \leq \alpha$, where
\begin{equation*}
\alpha = \lambda_2(a, u^*, b).
\end{equation*}
Consequently, $u_3$ is a decreasing function of $x/t$ in the same interval.

\begin{lem} \label{4.5}
The last two equations of (\ref{ws1}) can be inverted to give $u_2$ and $u_3$ as
increasing and decreasing functions, respectively, of the self-similarity variable
$x/t$ in the interval $(3a - \nu) \leq x/t \leq \alpha$, where $\alpha = \lambda_2(a, u^*, b)$
and $u^*$ is given in Lemma \ref{4.4}.
\end{lem}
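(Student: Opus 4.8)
The plan is to show that the map from $(u_2,u_3)$ (constrained to the curve $u_3=A(u_2)$) to the self-similarity variable $x/t$ is a diffeomorphism onto $[(3a-\nu)/4,\alpha]$, using the monotonicity information already assembled. The essential point is that along the trailing-edge curve we may regard $u_3$ as the explicit decreasing function $A(u_2)$ furnished by Lemma \ref{4.3} and the subsequent extension, valid for $(a-3\nu)/4 \le u_2 \le u^*$. First I would record that on this curve $x/t = \lambda_2(a,u_2,A(u_2))$, and compute the total derivative with respect to $u_2$. By Corollary \ref{c} the off-diagonal entry vanishes on the hodograph solution, and more directly equation (\ref{dia}) gives $\pd\lambda_2/\pd u_3 = 0$ on the solution of (\ref{ws1}); hence the total derivative collapses to
\[
\frac{d}{du_2}\,\lambda_2(a,u_2,A(u_2)) = \frac{\pd \lambda_2}{\pd u_2} + \frac{\pd \lambda_2}{\pd u_3}\,A'(u_2) = \frac{\pd \lambda_2}{\pd u_2}.
\]
By the first inequality in (\ref{dia2}) this is strictly positive throughout the interval, so $x/t$ is a strictly increasing function of $u_2$, and $u_2$ can be inverted as an increasing function of $x/t$.

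Next I would pin down the endpoints of the image interval. At $u_2=(a-3\nu)/4$ we have $u_2=u_3$ (the trailing edge), and Lemma \ref{4.3} already gives $\lambda_2(a,(a-3\nu)/4,(a-3\nu)/4) = (3a-\nu)/4$; at the other end $u_2=u^*$ we have $u_3=b$ by Lemma \ref{4.4}, so $x/t = \lambda_2(a,u^*,b) = \alpha$ by definition. Strict monotonicity then identifies the image as exactly $[(3a-\nu)/4,\alpha]$, and the continuous inverse $u_2(x/t)$ is increasing on this interval. Finally, since $u_3=A(u_2)$ with $A$ decreasing (Lemma \ref{4.3}) and $u_2$ increasing in $x/t$, the composition $u_3 = A(u_2(x/t))$ is a decreasing function of $x/t$, which gives the second assertion. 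One should also verify in passing that $\lambda_3(a,u_2,A(u_2))$ produces the \emph{same} value of $x/t$ as $\lambda_2$ along this curve, but this is immediate from (\ref{m23}), since $A$ was defined precisely so that $\lambda_2=\lambda_3$ there.

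The main obstacle is less a subtle inequality than the bookkeeping of \emph{why the total derivative reduces cleanly to the single partial $\pd\lambda_2/\pd u_2$}: one must invoke (\ref{dia}) correctly, remembering that it holds on the solution of the hodograph system rather than identically, and then combine it with the sign information (\ref{dia2}) which itself rests on Lemma \ref{IM}. A secondary care point is to ensure the extension of the curve $u_3=A(u_2)$ remains within the admissible region $a>u_2>(a-3\nu)/4>u_3>b$ all the way to $u^*$ with $A'$ staying finite and negative; this is exactly what the preceding discussion guarantees via (\ref{dia}) and (\ref{dia2}), so that $A'(u_2)$ never blows up and the inversion is genuinely a diffeomorphism up to and including the endpoints.
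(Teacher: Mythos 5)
Your proposal is correct and takes essentially the same route as the paper: the paper likewise combines (\ref{dia}) and (\ref{dia2}) to conclude that the second equation of (\ref{ws1}) gives $u_2$ as an increasing function of $x/t$ between the endpoint values fixed by Lemmas \ref{4.3} and \ref{4.4}, and then deduces that $u_3$ decreases via the decreasing dependence $u_3=A(u_2)$. Your write-up merely makes explicit the total-derivative computation that the paper leaves implicit.
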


We now turn to equations (\ref{ws2}). We want to solve the second equation
when $x/t > \alpha$ or equivalently when $u_2 > u^*$. According to Lemma \ref{4.4},
$\lambda_2(a, u_2, b) - \lambda_3(a, u_2,b) > 0$ for $u^* < u_2 < a$, which, by 
Lemma \ref{IM},
shows that
$${\pd \lambda_2(a, u_2, b) \over \pd u_2} > 0 \ .$$
Hence, the second equation of (\ref{ws2}) can be solved for $u_2$ as an increasing
function of $x/t$ as long as $u^* < u_2 < a$. When $u_2$ reaches $a$, we have
$$x/t = \lambda_2(a, a, b) = 2a + b + 2 \nu \ ,$$
where we have used (\ref{le}) in the last equality. We have
therefore proved the following result.

\begin{lem} \label{4.6}
The second equation of (\ref{ws2}) can be inverted to give $u_2$ as an increasing
function of $x/t$ in the interval $\alpha \leq x/t \leq 2a + b + 2 \nu$.
\end{lem}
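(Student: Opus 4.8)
The plan is to reduce the invertibility claim to the strict monotonicity of the single-variable map $u_2 \mapsto \lambda_2(a, u_2, b)$ on the closed interval $[u^*, a]$, together with an identification of its two endpoint values. Once monotonicity is in hand, the equation $x/t = \lambda_2(a, u_2, b)$ determines $u_2$ as a continuous, strictly increasing inverse of the self-similarity variable, and the inversion is automatic.

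First I would establish that $\partial \lambda_2 / \partial u_2 > 0$ throughout $u^* \leq u_2 \leq a$, with $u_1 = a$ and $u_3 = b$ held fixed. The natural tool is Lemma \ref{IM}, which supplies the lower bound
$$\frac{\pd \lambda_2}{\pd u_2} > \frac{3}{2} \ \frac{\lambda_2 - \lambda_3}{u_2 - u_3} \ .$$
Here $u_2 - u_3 = u_2 - b > 0$ because $u_2 > u^* > b$, and by Lemma \ref{4.4} the numerator $\lambda_2(a, u_2, b) - \lambda_3(a, u_2, b)$ is nonnegative on $[u^*, a]$ and strictly positive for $u_2 > u^*$. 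Hence the right-hand side is nonnegative, and the strict inequality in Lemma \ref{IM} forces $\partial \lambda_2 / \partial u_2 > 0$ on the entire interval, including the left endpoint $u_2 = u^*$, where the lower bound itself vanishes.

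Next I would pin down the image. Strict monotonicity and continuity make $u_2 \mapsto \lambda_2(a, u_2, b)$ an increasing bijection of $[u^*, a]$ onto $[\lambda_2(a, u^*, b), \, \lambda_2(a, a, b)]$. The left endpoint is $\lambda_2(a, u^*, b) = \alpha$ by the definition of $\alpha$, and the right endpoint follows from the coincidence formula (\ref{le}) at $u_1 = u_2$, which gives $\lambda_2(a, a, b) = 2a + b + 2\nu$. Therefore the image is exactly $[\alpha, \, 2a + b + 2\nu]$, and inverting the increasing bijection produces $u_2$ as an increasing function of $x/t$ on this interval, which is the assertion of the lemma.

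The argument is essentially a direct assembly of results already in place, so there is no serious analytic obstacle. The only point demanding care is that the positivity of the derivative must persist across the \emph{whole} interval rather than merely in a neighborhood of $u^*$. This is precisely what the global sign statement of Lemma \ref{4.4} delivers, and it is the reason that lemma was phrased for all $u_2 > u^*$ rather than locally; without it, the bound from Lemma \ref{IM} would only be controlled near the left endpoint and the inversion could fail partway through the interval.
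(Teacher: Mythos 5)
Your proposal is correct and follows essentially the same route as the paper: positivity of $\partial \lambda_2(a,u_2,b)/\partial u_2$ on $u^* < u_2 < a$ via Lemma \ref{4.4} combined with Lemma \ref{IM}, and identification of the endpoint value $\lambda_2(a,a,b) = 2a+b+2\nu$ from (\ref{le}). Your extra remark that the strict inequality of Lemma \ref{IM} keeps the derivative positive even at $u_2 = u^*$, where the lower bound vanishes, is a small point of care the paper leaves implicit.
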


We are ready to conclude the proof of Theorem \ref{4.1}.

The Burgers solutions (\ref{bs1}) and (\ref{bs2}) are trivial.

According to Lemma \ref{4.5}, the last two equations of (\ref{ws1}) determine $u_2$
and $u_3$ as functions of
$x/t$ in the region $(3a - v)/4 \leq x/t \leq \alpha$. By the first part of Lemma \ref{4.2}, the
resulting $u_1$, $u_2$ and $u_3$ satisfy the Whitham equations (\ref{CHW}).
Furthermore, the boundary condition (\ref{bc1}) is satisfied
at the trailing edge $x/t = (3a - v)/4$.

Similarly, by Lemma \ref{4.6}, the second equation of (\ref{ws2}) determines $u_2$
as a function of $x/t$ in the region $\alpha \leq x/t \leq 2a + b + 2 \nu$. It then follows from
the second part of Lemma \ref{4.2} that $u_1$, $u_2$ and $u_3$ of (\ref{ws2}) satisfy
the Whitham equations (\ref{CHW}).
They also satisfy the boundary condition (\ref{bc2}) at the
leading edge $x = (2a + b + 2 \nu) t$.

We have therefore completed the proof of Theorem \ref{4.1}.

A graph of the Whitham solution for the initial data (\ref{step}) of type I is
given in Figure 2. It is obtained by
plotting the exact solutions of (\ref{ws1}) and (\ref{ws2}).

\subsection{Type II: $
a + \nu \leq 4(b + \nu)$}

\begin{figure}[h]
\begin{center}
\includegraphics[width=12cm]{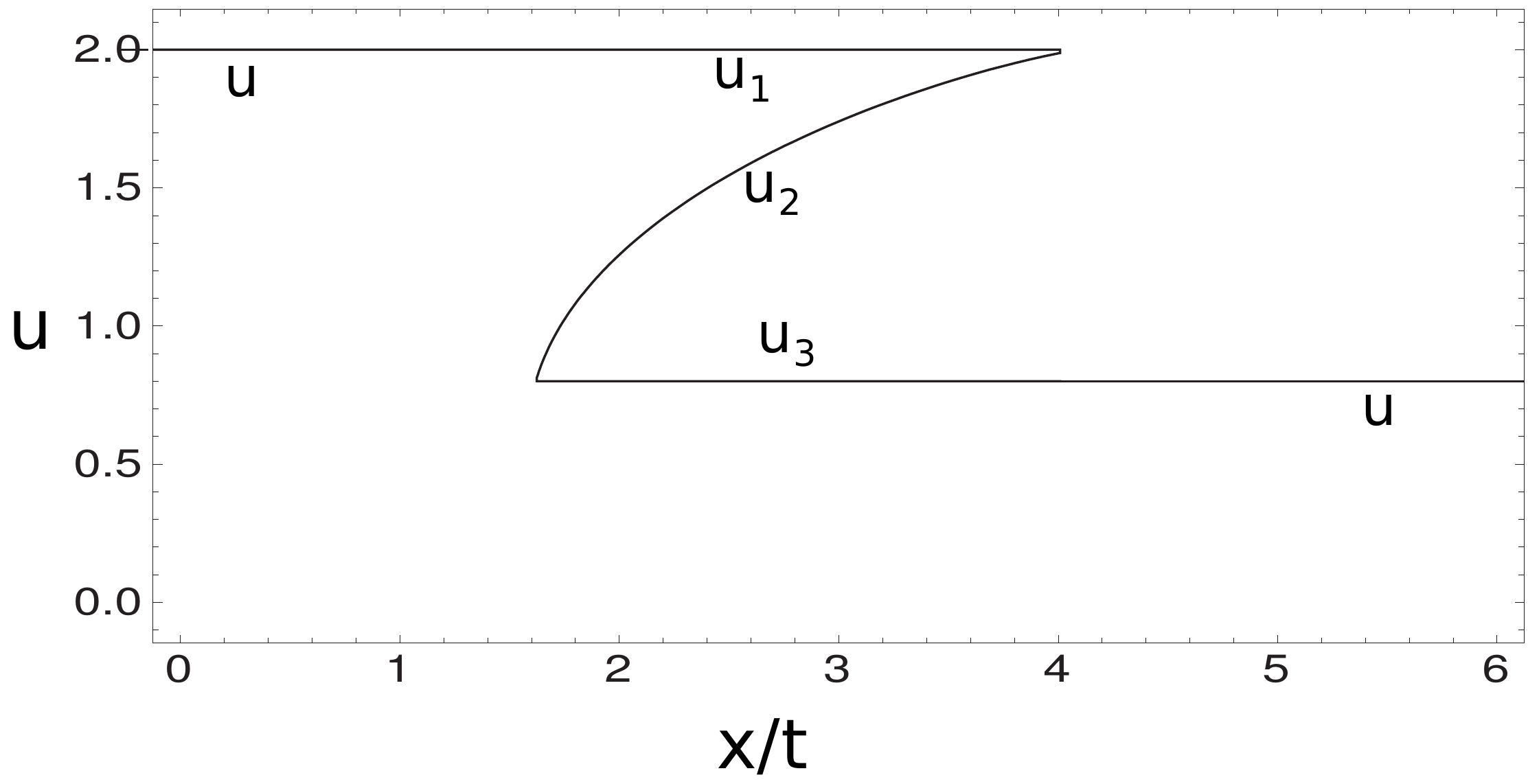}
\caption{Self-Similar solution of the Whitham equations for
  $a=2$, $b= 0.8$ and $\nu = 1/20$ of type II.}
\end{center}\end{figure}

\begin{thm}(see Figure 3.) \label{4.7}
For the step-like initial data (\ref{step})
with $ 0 < (a+\nu)/4 \leq b + \nu < a+ \nu$,
the solution of the
Whitham equations (\ref{CHW}) is given by
\begin{equation} \label{4.13}
u_1 =a \ , \quad x = \lambda_2(a, u_2, b) \ t \ , \quad u_3 = b
\end{equation}
for $\lambda_2(a,b,b) < x/t < \lambda_2(a,a,b)$, where $\lambda_2(a,b,b) = a + 2b +
2 \nu - 4(a - b)(b + \nu)/(a + \nu) $ and
$\lambda_2(a,a,b) = 2 a + b + 2 \nu$. Outside this interval, the solution of
(\ref{we0}) is given by
\begin{equation*}
u \equiv a \quad \mbox{$x/t \leq \lambda_2(a,b,b)$}
\end{equation*}
and
\begin{equation*}
u \equiv b \quad \mbox{$x/t \geq \lambda_2(a,a,b)$} \ .
\end{equation*}
\end{thm}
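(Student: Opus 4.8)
The plan is to follow the architecture of the proof of Theorem \ref{4.1}, but to show that under the Type II inequality the entire dispersive shock is described by the single relation (\ref{4.13}) with $u_1 \equiv a$ and $u_3 \equiv b$ frozen, so that no sub-interval on which $u_3$ varies is ever needed. The starting observation is that the ansatz (\ref{4.13}) has exactly the form of equations (\ref{ws2}) from the Type I analysis; hence by part (ii) of Lemma \ref{4.2} it already solves the Whitham equations (\ref{CHW}) on any range of $x/t$ over which the middle relation $x = \lambda_2(a, u_2, b)\,t$ can be inverted for $u_2$. Thus the whole content of the theorem reduces to showing that this inversion is possible on the stated interval, i.e. that $\lambda_2(a, u_2, b)$ is a strictly increasing function of $u_2$ for all $b < u_2 < a$.

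The monotonicity I would obtain from Lemma \ref{IM}: since $u_2 > u_3 = b$, the chain $\partial_{u_3}\lambda_3 < \tfrac{3}{2}(\lambda_2 - \lambda_3)/(u_2 - u_3) < \partial_{u_2}\lambda_2$ shows that $\partial_{u_2}\lambda_2 > 0$ as soon as $\lambda_2(a, u_2, b) > \lambda_3(a, u_2, b)$. By formula (\ref{M}) the sign of $\lambda_2 - \lambda_3$ is exactly that of $M(a, u_2, b)$, the remaining factors being positive for $0 < s < 1$ in view of (\ref{KE}). So everything rests on the claim that $M(a, u_2, b) > 0$ for every $b < u_2 < a$, and this is precisely where Type II departs from Type I. At the coalescence point $u_2 = u_3 = b$ one has $s = 0$ and $M(a, b, b) = 0$; the sign of $M$ for $u_2$ slightly above $b$ is governed by the leading $s^2$-coefficient in the expansion (\ref{M2}), which at $u_2 = b$ equals $\tfrac{\pi}{2}\cdot\tfrac{4(b+\nu)-(a+\nu)}{8(a+\nu)}$. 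In Type I this is negative and drives $M$ below zero near $b$, creating the interior zero $u^*$ and the extra region of Theorem \ref{4.1}; under the Type II hypothesis $a + \nu \le 4(b+\nu)$ it is instead nonnegative, so $M \ge 0$ immediately to the right of $b$.

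To upgrade this to strict positivity on all of $(b, a)$ I would combine three facts: $M(a, b, b) = 0$ together with $\partial_{u_2} M(a, b, b) = 0$ (both read off from the $s^2$ leading order of (\ref{M2}), since there $s$ vanishes linearly in $u_2 - b$), the endpoint value $M(a, a, b) > 0$ obtained by feeding (\ref{K2}) and (\ref{E2}) into (\ref{M'}) exactly as in Lemma \ref{4.4}, and the convexity $\partial^2_{u_2} M > 0$ supplied by (\ref{ppM}). Convexity makes $\partial_{u_2} M$ nondecreasing; starting from $\partial_{u_2}M = 0$ at $u_2 = b$ it stays positive, so $M$ is strictly increasing and hence strictly positive on $(b, a)$. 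With $\partial_{u_2}\lambda_2 > 0$ in hand throughout, the relation $x = \lambda_2(a, u_2, b)\,t$ inverts to give $u_2$ as a strictly increasing function of $x/t$ whose range is the interval between the two endpoint speeds, computed from (\ref{tr}) and (\ref{le}) to be $\lambda_2(a,b,b) = a + 2b + 2\nu - 4(a-b)(b+\nu)/(a+\nu)$ and $\lambda_2(a,a,b) = 2a + b + 2\nu$. The matching is then automatic: as $u_2 \to b$ one gets $u_2 = u_3$ with $u_1 = a$ equal to the Burgers value, which is (\ref{bc1}), and as $u_2 \to a$ one gets $u_1 = u_2$ with $u_3 = b$ equal to the Burgers value, which is (\ref{bc2}).

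I expect the only genuine obstacle to be the borderline case $a + \nu = 4(b+\nu)$, where the leading $s^2$-coefficient at $u_2 = b$ vanishes and one must inspect the $s^3$ term of (\ref{M2}) — and correspondingly the behavior of $\partial^2_{u_2}M$ near the degenerate endpoint $u_2 = u_3 = b$, where the elliptic integrals are evaluated at $s = 0$ — in order to confirm that $M$ still leaves the origin on the positive side. Away from that single point the convexity argument closes the estimate cleanly, and the remainder of the proof is a direct transcription of the Type I bookkeeping with the $u_3$-varying region deleted.
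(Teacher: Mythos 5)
Your overall architecture is the same as the paper's: reduce everything to the strict monotonicity of $\lambda_2(a,u_2,b)$ in $u_2$ on $(b,a)$, obtain that from Lemma \ref{IM} once $\lambda_2-\lambda_3>0$, i.e.\ once $M(a,u_2,b)>0$, invoke Lemma \ref{4.2}(ii) for the Whitham equations, and read the endpoint speeds off (\ref{tr}) and (\ref{le}). The gap is in your central step, the proof that $M(a,u_2,b)>0$ on $(b,a)$. Your chain is ``$M(a,b,b)=0$, $\partial_{u_2}M(a,b,b)=0$, $\partial^2_{u_2}M>0$, hence $M>0$'' --- but notice that this chain nowhere uses the Type II hypothesis $a+\nu\le 4(b+\nu)$ (the sign of the $s^2$-coefficient you compute is never fed into the argument). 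If it were valid it would prove $M(a,u_2,b)>0$ on $(b,a)$ in Type I as well, contradicting Lemma \ref{4.4}, which shows $M<0$ for $u_2$ near $b$ when $a+\nu>4(b+\nu)$. The resolution is that global convexity is not actually available: expanding $M$ via (\ref{M2}) (or differentiating (\ref{pM}) once more --- the chain rule produces an extra term proportional to $K$ that (\ref{ppM}) does not display) gives
\begin{equation*}
\frac{\partial^2 M(a,u_2,b)}{\partial u_2^2}\Big|_{u_2=b}
=\frac{\pi\,(a+\nu)\,[\,4(b+\nu)-(a+\nu)\,]}{8\,(a-b)^2\,(b+\nu)^2}\,,
\end{equation*}
which is negative in Type I and vanishes exactly at your troublesome borderline $a+\nu=4(b+\nu)$. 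So (\ref{ppM}) cannot be used as a pointwise convexity statement near the coalescence point, and your argument does not close.

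The paper's proof avoids second derivatives entirely: it bounds $dM(a,u_2,b)/du_2$ from below by applying the left inequality of (\ref{KE}), $K>E/(1-\tfrac{s}{2})$, directly to (\ref{pM}), obtaining
\begin{equation*}
\frac{d M(a, u_2, b)}{d u_2} \;>\; \frac{(a+\nu)(u_2-b)\,E(s)}{2(2-s)(a-b)^2(u_2+\nu)^2}\,
\bigl\{\,2(u_2+\nu)+2(b+\nu)-(a+\nu)\,\bigr\}\;>\;0
\end{equation*}
for $b<u_2<a$, where the second inequality is exactly where Type II enters: $u_2>b$ and $(a+\nu)/4\le b+\nu$ give $2(u_2+\nu)+2(b+\nu)>4(b+\nu)\ge a+\nu$. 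Together with $M(a,b,b)=0$ this yields $M>0$ on $(b,a)$, with the borderline case $a+\nu=4(b+\nu)$ handled at no extra cost (no $s^3$ analysis is needed). If you replace your convexity step by this first-derivative estimate, the rest of your write-up --- Lemma \ref{4.2}(ii), Lemma \ref{IM}, the inversion of $x=\lambda_2(a,u_2,b)\,t$, and the boundary matching (\ref{bc1})--(\ref{bc2}) --- goes through as you describe.
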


\begin{proof}

We will give a  brief proof, since the arguments are, more or less, similar to those in the proof of
Theorem \ref{4.1}.

It suffices to show that $\lambda_2(a, u_2, b)$ is an increasing function of $u_2$ for
$b < u_2 < a$. Using the inequality (\ref{KE}) to estimate the right hand side of
(\ref{pM}),
we obtain
\begin{align*}
{d M(a, u_2, b) \over d u_2} & >  {(a + \nu)(u_2 - b) E(s) \over
2 (2 - s)(a - b)^2 (u_2 + \nu)^2}  \left \{ 2(u_2 + \nu) + 2(b + \nu) - \right.
\\
&\phantom{>  {(a + \nu)(u_2 - b) E(s) \over
2 (2 - s)(a - b)^2 (u_2 + \nu)^2} } \left. -
(a + \nu)
\right \} > 0 
\end{align*}
for $b<u_2<a$, where we have used $(a + \nu)/4 \leq b + \nu$ in the second inequality.
Since $M(a,u_2,b) = 0$
at $u_2 = b$ in view of (\ref{M2}), this implies that
$M(a,u_2,b) > 0$
for $b < u_2 < a$. It then follows from (\ref{M}) that $\lambda_2(a, u_2, b) - \lambda_3(a, u_2, b) > 0$.
By Lemma \ref{IM}, we conclude that
$${d \lambda_2(a, u_2, b) \over d u_2} > 0$$ for $b < u_2 < a$.

\end{proof}

A graph of the Whitham solution for initial data (\ref{step}) of type II is given
in Figure 3.  It is obtained by plotting the exact solution of (\ref{4.13}).

\section{Smooth Initial Data}

In this section, we will study the initial value problem of the Whitham equations when
the initial values are given by a smooth monotone function $u_0(x)$. Since the Burgers solution
of (\ref{we0}) will never develop a shock when $u_0(x)$ is an increasing function, we will be
interested only in the case that $u_0(x)$ is a decreasing function.

We consider the initial function $u_0(x)$ which is a decreasing function and is bounded at
$x = \pm \infty$
\begin{equation}
\label{infty}
\lim_{x \rightarrow - \infty} u_0(x) = a \ , \quad \lim_{x \rightarrow + \infty} u_0(x) = b \ .
\end{equation}

By Theorem \ref{th1} and Lemma \ref{Bij}, we can use the hodograph transform,
\begin{equation}
\label{HO}
x = \lambda_i(u_1, u_2, u_3) t + w_i(u_1, u_2, u_3) \ , \quad i=1, 2, 3,
\end{equation}
to solve the Whitham equations (\ref{CHW}). Here, $w_i$'s satisfy a
linear over-determined system of type (\ref{wi})
\begin{eqnarray}
\label{wib}
{\partial w_i \over \partial u_j}=B_{ij}(w_i-w_j) \ ,
\end{eqnarray}
where $B_{ij}$'s are given in Lemma \ref{Bij}.

The boundary conditions on $w_i$'s are obtained by observing that the hodograph
solution (\ref{HO}) of the Whitham equations (\ref{CHW}) must match the characteristic
solution (\ref{BS}) of the Burgers equation (\ref{we0})
at the trailing and leading edges in the fashion of (\ref{bc1}-\ref{bc2}).
By (\ref{tr}-\ref{le}), $w_i$'s must satisfy the boundary conditions,
\begin{eqnarray}
\label{wbc1}
\left\{\begin{array}{l}
w_1(u_1, u_1, u_3)=
w_2(u_1, u_1, u_3)  , \\
w_3(u_1, u_1, u_3) = f(u_3)
 ,
\end{array}
\right. \\
\label{wbc2}
\left\{\begin{array}{l}
w_1(u_1, u_3, u_3)=
f(u_1) , \\
w_2(u_1, u_3, u_3)=
w_3(u_1, u_3, u_3)  ,
\end{array}
\right.
\end{eqnarray}
where $f(u)$ is the inverse of the initial function $u_0(x)$.

Analogous to the KdV case \cite{Tian1, Tian3}, equations (\ref{wib})
subject to
boundary conditions (\ref{wbc1}-\ref{wbc2}) are related to a boundary value
problem
of a linear over-determined system of Euler-Poisson-Darboux type (cf. (\ref{EPD}))
\begin{eqnarray}
\label{q}
2(u_i-u_j)\frac{\partial^2 q}{\partial{u_i}\partial{u_j}} & = &
\frac{\partial q}{\partial u_i}-\frac{\partial q}{\partial u_j} \ , \quad i, j = 1, 2, 3, \\
\label{bc}
q(u, u, u)&=& f(u) \ ,
\end{eqnarray}
for $i,j=1,2,3$. The solution is unique and symmetric in $u_1$, $u_2, u_3$.
It is given explicitly by \cite{Tian1}
\begin{eqnarray}
\label{qbeta}
q(u_1,u_2,u_3) = \frac{1}{2\sqrt{2}\pi}\int_{-1}^{1}\int_{-1}^{1}
\frac{f({1+u \over 2}{1+v \over 2}u_1 + {1+u \over 2}{1-v \over 2}u_2
+ {1-u \over 2}u_3)}{\sqrt{(1-u)(1-v^2)}}dudv \ .
\end{eqnarray}
\begin{thm}
\label{5.1}
If $q(u_1, u_2, u_3)$ is a solution of (\ref{q}) and (\ref{bc}),
then
\linebreak
$w_i(u_1, u_2, u_3)$'s given by
\begin{eqnarray}
\label{w}
w_i\ &= [\lambda_i(u_1, u_2, u_3)- \gamma] \
{\partial q(u_1, u_2, u_3) \over
\partial u_i}
 +
 q(u_1, u_2, u_3) \ ,
\end{eqnarray}
where $\gamma = u_1 + u_2 +u_3 + 2 \nu$, solve equations (\ref{wib}) and satisfy boundary conditions
(\ref{wbc1}-\ref{wbc2}).
\end{thm}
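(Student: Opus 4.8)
\emph{Proof proposal.} The plan is to establish the two assertions of the theorem separately: that the $w_i$ of (\ref{w}) solve the over-determined system (\ref{wib}), and that they satisfy the boundary conditions (\ref{wbc1})--(\ref{wbc2}). Both parts use only Lemma \ref{Bij} and the Euler--Poisson--Darboux system (\ref{q}) together with its diagonal condition (\ref{bc}); the symmetry of $q$ in $u_1,u_2,u_3$ will also be used repeatedly.

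For the first assertion I would differentiate (\ref{w}) directly. Writing $q_i=\pd q/\pd u_i$, one has for $i\neq j$
\begin{equation*}
{\pd w_i \over \pd u_j}=\left({\pd \lambda_i \over \pd u_j}-1\right)q_i+(\lambda_i-\gamma)\,q_{ij}+q_j \ ,
\end{equation*}
because $\pd\gamma/\pd u_j=1$. Substituting $\pd\lambda_i/\pd u_j=B_{ij}(\lambda_i-\lambda_j)$ from Lemma \ref{Bij} and $q_{ij}=(q_i-q_j)/[2(u_i-u_j)]$ from (\ref{q}), and comparing with the target $B_{ij}(w_i-w_j)=B_{ij}[(\lambda_i-\gamma)q_i-(\lambda_j-\gamma)q_j]$, the coefficients of $q_i$ and of $q_j$ both collapse to the single relation
\begin{equation*}
B_{ij}(\lambda_j-\gamma)={\lambda_i-\gamma \over 2(u_i-u_j)}-1 \ ,
\end{equation*}
which is precisely Lemma \ref{Bij} with its right-hand side multiplied by $\lambda_j-\gamma$. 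Thus (\ref{wib}) holds with no real obstruction. The ``matching'' halves of the boundary conditions are equally quick. At the leading edge $u_1=u_2$, formula (\ref{le}) gives $\lambda_1=\lambda_2=\gamma$, so the factor $\lambda_i-\gamma$ in (\ref{w}) vanishes for $i=1,2$ and $w_1(u_1,u_1,u_3)=w_2(u_1,u_1,u_3)=q(u_1,u_1,u_3)$, the first line of (\ref{wbc1}). At the trailing edge $u_2=u_3$, formula (\ref{tr}) gives $\lambda_2=\lambda_3$ while symmetry of $q$ gives $q_2=q_3$ there, so $w_2(u_1,u_3,u_3)=w_3(u_1,u_3,u_3)$, the second line of (\ref{wbc2}).

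The remaining conditions $w_3(u_1,u_1,u_3)=f(u_3)$ and $w_1(u_1,u_3,u_3)=f(u_1)$ are the heart of the matter, and here I would reduce the EPD system to one spatial dimension on each diagonal. For the trailing edge set $\Phi(u_1,u_3)=q(u_1,u_3,u_3)$; the symmetry of $q$ (giving $q_2=q_3$ and $q_{12}=q_{13}$ when $u_2=u_3$) together with the EPD equation for the pair $(1,2)$ yields the reduced equation
\begin{equation*}
(u_1-u_3)\,\Phi_{u_1u_3}=\Phi_{u_1}-{1\over 2}\Phi_{u_3} \ .
\end{equation*}
Since (\ref{tr}) gives $\lambda_1-\gamma=2(u_1-u_3)$ at this edge, (\ref{w}) reads $w_1(u_1,u_3,u_3)=2(u_1-u_3)\Phi_{u_1}+\Phi$; differentiating in $u_3$ and using the reduced equation shows $\pd w_1/\pd u_3\equiv 0$ along the diagonal, so $w_1$ depends on $u_1$ alone and equals its value at $u_3=u_1$, namely $q(u_1,u_1,u_1)=f(u_1)$ by (\ref{bc}). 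The leading-edge identity is handled symmetrically, with $\phi(u_1,u_3)=q(u_1,u_1,u_3)$, the reduced equation $(u_1-u_3)\phi_{u_1u_3}={1\over 2}\phi_{u_1}-\phi_{u_3}$, and $\lambda_3-\gamma=-2(u_1-u_3)$ from (\ref{le}).

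The main obstacle is exactly this last step: isolating the correct reduced EPD equation on each two-fold diagonal and recognizing the particular combination in (\ref{w}) as a first integral of it. Once that is done, the vanishing of the transverse derivative is a one-line computation and evaluation at the triple point $u_1=u_2=u_3$ closes the argument through (\ref{bc}). This parallels the KdV treatment of \cite{Tian1}, the only genuinely new inputs being the Camassa--Holm speeds (\ref{tr})--(\ref{le}) and the shift $\gamma=u_1+u_2+u_3+2\nu$ appearing in (\ref{w}).
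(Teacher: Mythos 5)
Your proposal is correct and follows essentially the same route as the paper: direct differentiation of (\ref{w}) combined with Lemma \ref{Bij} and the Euler--Poisson--Darboux system to verify (\ref{wib}), the vanishing of $\lambda_i-\gamma$ (resp.\ the symmetry $q_{u_2}=q_{u_3}$) for the matching halves of the boundary conditions, and, for the conditions involving $f$, showing that the transverse derivative of $w_1(u_1,u_3,u_3)$ (resp.\ $w_3(u_1,u_1,u_3)$) vanishes along the diagonal before evaluating at the triple point via (\ref{bc}). Your packaging of the diagonal computation as a reduced EPD equation for $\Phi$ and $\phi$ is only a cosmetic reorganization of the paper's identical calculation.
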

\begin{proof}

By Lemma \ref{Bij} and (\ref{q}), we obtain
\begin{eqnarray}
\label{Old}
[\lambda_j -\gamma]\left[ { \partial q \over
\partial u_i} -{ \partial q \over
\partial u_j} \right] B_{ij}
=\left({ \partial q \over
\partial u_j} -{ \partial q \over
\partial u_i}\right)+
[\lambda_i -\gamma]{ \partial^2q \over \partial u_i \partial u_j}\  .
\end{eqnarray}
Using (\ref{w}), we calculate
\begin{eqnarray*}
{\partial w_{i} \over \partial u_j}&=&{\partial \lambda_i \over
\partial u_j}{\partial q \over \partial u_i}
+ \left[\lambda_i-\gamma\right]{\partial^2 q \over
\partial u_i\partial u_j}+ {\partial q \over \partial u_j} -{\partial q
\over \partial u_i}\ , \\
w_i-w_j&=&\left[\lambda_i-\gamma \right][{\partial q \over \partial
u_i}- {\partial q \over \partial u_j}] +
\left[\lambda_i- \lambda_j \right ]{\partial q \over \partial u_i} \ .
\end{eqnarray*}
Substituting these into (\ref{Old}), we find that $w_i$'s satisfy (\ref{wib}).

Finally, we shall check the boundary conditions (\ref{wbc1}-\ref{wbc2}).
We only consider the leading edge, and the trailing edge can be handled
in the same way.

Since $q(u_1, u_2, u_3)$ is symmetric in $u_1$,
$u_2$ and $u_3$, the first condition of (\ref{wbc1}) follows from
(\ref{w}) and (\ref{le}).

For the second condition, it follows from (\ref{le}) and (\ref{w}) again that
\begin{equation}
\label{w3}
w_{3}(u_{1},u_{1},u_{3}) = 2(u_{3}-u_{1})
{\partial q \over \partial u_{3}} + q \ .
\end{equation}
Differentiating this with respect to $u_{1}$ yields,
\begin{eqnarray*}
{\partial w_{3} \over \partial u_{1}} =  -2 {\partial q \over \partial
u_{3}} +2(u_{3} - u_{1}) [{\partial^2 q \over \partial u_1
\partial u_3} + {\partial^2 q \over \partial u_2 \partial u_3}]
+{\partial q \over \partial u_{1}}+{\partial q \over \partial
u_{2}} = 0 \ ,
\end{eqnarray*}
where we have used (\ref{q}) in the last equality. Since
$w_{3}(u_{1},u_{1},
u_{3})$ is independent of $u_{1}$, we replace $u_{1}$
by $u_{3}$ in (\ref{w3}) and use (\ref{bc}) to obtain the second
condition of (\ref{wbc1}).

\end{proof}

Theorem \ref{5.1} has been reported in \cite{abe} in the case of $\nu = 0$.

In the rest of this section, we study the hodograph transform
(\ref{HO}) with $w_{i}$'s given by
(\ref{qbeta}) and (\ref{w}). We shall show that the transform can be solved for
$u_{1}$,
$u_{2}$ and $u_{3}$ as functions of $(x, t)$ within a cusp in the
$x$-$t$ plane.

Since $t_{b} = - [3 \min\left( u'_{0}(x)\right)]^{-1}$ is the
breaking time of the Burgers solution of (\ref{we0}), the breaking
is caused by an inflection point in the initial data. If $x_{0}$ is
this inflection point, then $(x_{b}, t_{b})$ is the breaking point
on the evolving curve where $x_{b} = x_{0} + [3 u_0(x_{0}) + 2 \nu] t_b$,
and $t_{b}$ is the breaking time. Without loss of generality, we may
assume $x_b=0$, $t_b = 0$ and denote $u_0(0)$ by $\hat{u}$.  The
effect of these choices is that we are starting at the breaking
time, and the evolving curve is about to turn over at the point
$(0, \hat{u})$ in the $x$-$u$ plane. It immediately follows that
\begin{equation}
\label{cf}
f(\hat{u}) = f'(\hat{u}) = f''(\hat{u}) = 0 \ ,
\end{equation}
where $x = f(u)$ is the inverse function of the
decreasing initial data $u = u_{0}(x)$. On the assumption that
$x = f(u)$ has only one inflection point, it follows from the monotonicity
of the function $f(u)$ that
\begin{equation}
\label{cf2}
f''(u) = \left\{ \begin{array}{ll}
                   < 0   & \quad u > \hat{u} \\
                   = 0   & \quad u = \hat{u} \\
                   > 0   & \quad u < \hat{u}
                 \end{array}
         \right. .
\end{equation}
Under a little bit stronger condition than (\ref{cf2}), we will be able to show that hodograph
transform (\ref{HO}) can be inverted to give $u_{1}$, $u_{2}$ and $u_{3}$ as functions
of $(x,t)$ in some domain of the $x$-$t$ plane.

\begin{thm}
\label{main}
Suppose $u_0(x)$ is a decreasing function satisfying (\ref{infty}) with $a + \nu > b + \nu > 0$.
If, in addition to (\ref{cf}), the inverse function $f(u)$ satisfies $f'''(u)$ $<$ 0 for $b < u < a$, then transform
(\ref{HO}) with $w_{i}$'s given by (\ref{qbeta}) and (\ref{w}) can be solved for
$u_{1}$, $u_{2}$ and
$u_{3}$ as functions of $(x,t)$ within a cusp in the $x$-$t$ plane for
all $t > 0$. Furthermore, these $u_{1}$,
$u_{2}$ and $u_{3}$ satisfy boundary conditions (\ref{wbc1}-\ref{wbc2})
on the boundary of the cusp.
\end{thm}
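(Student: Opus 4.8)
The plan is to prove invertibility of the hodograph transform (\ref{HO}) by reducing it, via Corollary \ref{c}, to the nonvanishing of the diagonal entries of its Jacobian, and then to control those entries globally using the explicit representation (\ref{qbeta}) of $q$ together with the hypothesis $f'''<0$. First I would record that on any solution of (\ref{HO}) the off-diagonal derivatives $\partial(\lambda_i t+w_i)/\partial u_j$ ($i\neq j$) vanish by Corollary \ref{c}, so the transform is a local diffeomorphism at a point $(u_1,u_2,u_3)$ precisely when the three diagonal quantities
\begin{equation*}
g_i:=\frac{\partial(\lambda_i t+w_i)}{\partial u_i},\qquad i=1,2,3,
\end{equation*}
are all nonzero there. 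Substituting (\ref{w}) and using $\gamma=u_1+u_2+u_3+2\nu$, a direct computation gives the working formula
\begin{equation*}
g_i=\frac{\partial\lambda_i}{\partial u_i}\,\Bigl(t+\frac{\partial q}{\partial u_i}\Bigr)+(\lambda_i-\gamma)\,\frac{\partial^2 q}{\partial u_i^2},
\end{equation*}
which is most conveniently read off from the rearranged hodograph relation $x-q-\gamma t=(\lambda_i-\gamma)(t+\partial_{u_i}q)$, valid for $i=1,2,3$; note that since $\lambda_1-\gamma>0$ and $\lambda_2-\gamma,\lambda_3-\gamma<0$, this relation already forces a definite sign pattern on the factors $t+\partial_{u_i}q$ in the admissible region.

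The heart of the argument is to pin down the sign of each $g_i$ on the region $a>u_1>u_2>u_3>b$. For the elliptic factors I would use the sign information already assembled in Section 3: the inequalities (\ref{l1>})--(\ref{l3<}) fix $\lambda_1-\gamma>0$ and $\lambda_2-\gamma,\lambda_3-\gamma<0$, while the monotonicity estimates in the spirit of Lemma \ref{IM} (together with the formulae (\ref{lambda})) control $\partial_{u_i}\lambda_i$. For the $q$-factors I would differentiate (\ref{qbeta}) under the integral sign: because $f=u_0^{-1}$ is decreasing, $f'<0$, which immediately yields $\partial_{u_i}q<0$; the second derivatives $\partial^2_{u_i}q$ carry $f''$, which by (\ref{cf2}) is not of one sign, so they cannot be controlled directly. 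This is the step where $f'''<0$ is essential: after an integration by parts in the variables of (\ref{qbeta}), using (\ref{cf}) to kill the boundary terms, the combination entering $g_i$ should be representable against a kernel of definite sign in which only $f'''$ appears, forcing $g_i\neq0$. I expect this sign analysis to be the main obstacle, both because the needed second-order information about $q$ must be extracted indirectly and because the system is only non-strictly hyperbolic, so that $\lambda_2-\lambda_3$ vanishes along an interior surface and one cannot divide by eigenspeed gaps as in the strictly hyperbolic KdV treatment of \cite{Tian1}.

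Granting $g_1g_2g_3\neq0$ throughout the open region for every $t>0$, the inverse function theorem produces $u_1,u_2,u_3$ as smooth functions of $(x,t)$ locally, and by Theorem \ref{th1} they solve (\ref{CHW}). I would then read off the boundary behaviour from Theorem \ref{5.1}: as $u_2\to u_3$, the relations (\ref{tr}) and the boundary condition (\ref{wbc2}) reduce $x=\lambda_1 t+w_1$ to $x=(3u_1+2\nu)t+f(u_1)$, which is exactly the Burgers hodograph (\ref{BS}); likewise $u_1\to u_2$ together with (\ref{le}) and (\ref{wbc1}) gives $x=(3u_3+2\nu)t+f(u_3)$. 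Thus the two edges $u_2=u_3$ and $u_1=u_2$ map to curves $x=x_-(t)$ and $x=x_+(t)$ on which the Whitham solution matches the Burgers solution, and these are the lateral boundaries of the region in the $x$-$t$ plane.

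Finally I would verify that this region is a cusp and that it persists for all $t>0$. Near the breaking point the normalization (\ref{cf}) and $f'''<0$ give $f(u)\sim\tfrac16 f'''(\hat u)(u-\hat u)^3$, and substituting the induced expansions of $q$, $\lambda_i$ and $w_i$ into $x=\lambda_i t+w_i$ should yield $x_\pm(t)\sim\pm C\,t^{3/2}$ as $t\to0^+$, the characteristic cusp shape with vertex at the origin. Globality in $t$ then follows from the fact that the sign of $g_i$ established above holds for every $t>0$: the hodograph is a local diffeomorphism on the whole region, whose image is bounded laterally by the two edges, so a connectedness (degree) argument upgrades the local inverse to a single-valued solution $(u_1,u_2,u_3)(x,t)$ defined throughout the cusp for all $t>0$.
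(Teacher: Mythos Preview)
Your framework is right --- by Corollary \ref{c} the Jacobian is diagonal on the solution, so invertibility comes down to the signs of $g_i=\partial(\lambda_i t+w_i)/\partial u_i$ --- and you have correctly identified that this is the crux. But the step you flag as ``the main obstacle'' is a genuine gap, and your proposed method does not close it.

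The integration-by-parts idea, using (\ref{cf}) to kill boundary terms and reduce to an integral of $f'''$, does not work: the argument of $f$ in (\ref{qbeta}) is a convex combination of $u_1,u_2,u_3$, so integrating by parts in the variables $u,v$ produces boundary contributions of $f'$ or $f''$ evaluated at $u_1,u_2,u_3$, not at $\hat u$. The normalization (\ref{cf}) says nothing about those values. In fact $\partial^2_{u_i}q$ is \emph{not} of one sign on the whole region $a>u_1>u_2>u_3>b$; the paper only controls it \emph{on the solution locus} of the hodograph. This is done not by integral representations but by combining three ingredients you have not used: the Euler--Poisson--Darboux relations (\ref{q}) for $q$ (which convert statements about $\partial_{u_i}q-\partial_{u_j}q$ into statements about mixed second derivatives), a contradiction argument that propagates the inequalities $\partial^2_{u_1u_2}q<0$, $\partial^2_{u_1u_3}q<0$ from the trailing edge along the solution using (\ref{12})--(\ref{23}), and --- for $g_2$ and $g_3$, where $\partial^2_{u_i}q$ really cannot be made sign-definite --- the comparison $\partial_{u_2}\lambda_2>\tfrac32(\lambda_2-\lambda_3)/(u_2-u_3)>\partial_{u_3}\lambda_3$ of Lemma \ref{IM}, which replaces the uncontrolled diagonal speed derivative by a quantity that can be handled. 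Your proposal invokes Lemma \ref{IM} only vaguely and does not see why it is indispensable here.

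There is also a structural gap: you need a seed for the continuation. The hodograph relations you use to force the sign pattern on $t+\partial_{u_i}q$ presuppose that a solution exists; the paper resolves this chicken-and-egg problem by first constructing the trailing-edge solution $(u_1^-(t),u_3^-(t),u_3^-(t))$ explicitly from the degenerate equations (\ref{U})--(\ref{V}), then extending in the $u_2$-direction as functions $u_1=M(u_2)$, $u_3=N(u_2)$, using the $g_i$ signs to keep the Jacobian nondegenerate until $M(u_2)=u_2$ at the leading edge. Your ``degree argument'' does not substitute for this, because without the edge constructions you have no anchor point at which the hodograph is known to hold.
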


The proof is based on a series of lemmas. The organization is as
follows: we eliminate $x$ from transform (\ref{HO}) to obtain two
equations involving $u_{1}$, $u_{2}$, $u_{3}$, and $t$. These two
equations can be shown, for each fixed time after the breaking, to
determine $u_{1}$ and $u_{3}$ as decreasing functions of $u_{2}$
within an interval whose end points depend on $t$. Substituting
$u_{1}$ and $u_{3}$ as functions of $u_{2}$ into the hodograph
transform, we find that, within a cusp in the $x$-$t$ plane, $u_{2}$
is a function of $(x, t)$, and so, therefore, are $u_{1}$ and
$u_{3}$.

First, we conclude from formula (\ref{qbeta})
\begin{lem}
\label{q3} If $f(u)$ satisfies the conditions of Theorem \ref{main},
then $q(u_1, u_{2}, u_{3})$ given by (\ref{qbeta}) satisfies
\begin{displaymath}
\frac{\partial^{3}q}{\partial u_{i} \partial u_{j} \partial u_{k}}
< 0 \ ,
\hspace*{.2in} i, j, k = 1, 2, 3 \ .
\end{displaymath}
\end{lem}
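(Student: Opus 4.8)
\emph{The plan is to} read off the conclusion directly from the explicit integral representation (\ref{qbeta}), with essentially no computation beyond differentiating under the integral sign. The decisive structural feature of (\ref{qbeta}) is that $q$ depends on $(u_1,u_2,u_3)$ only through the argument of $f$, namely
\[
\xi(u,v) \;=\; \alpha_1 u_1 + \alpha_2 u_2 + \alpha_3 u_3 , \qquad
\alpha_1 = \tfrac{(1+u)(1+v)}{4}, \ \ \alpha_2 = \tfrac{(1+u)(1-v)}{4}, \ \ \alpha_3 = \tfrac{1-u}{2},
\]
and that this $\xi$ is, for every $(u,v)\in[-1,1]^2$, a \emph{convex combination} of $u_1,u_2,u_3$: one checks that $\alpha_1,\alpha_2,\alpha_3 \ge 0$ and $\alpha_1+\alpha_2+\alpha_3 = 1$. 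The kernel $K(u,v)=[(1-u)(1-v^2)]^{-1/2}$ is fixed, nonnegative, and integrable on $[-1,1]^2$; its integral against the prefactor $1/(2\sqrt2\,\pi)$ equals $1$, which is precisely the normalization making $q(u,u,u)=f(u)$ in (\ref{bc}).

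\emph{First I would} differentiate (\ref{qbeta}) three times under the integral sign. Since $\pd\xi/\pd u_i = \alpha_i$, the chain rule gives, for any indices $i,j,k\in\{1,2,3\}$ (repetitions allowed),
\[
\frac{\pd^3 q}{\pd u_i\,\pd u_j\,\pd u_k}
= \frac{1}{2\sqrt2\,\pi}\int_{-1}^{1}\!\!\int_{-1}^{1}
f'''\!\left(\xi(u,v)\right)\,\alpha_i\,\alpha_j\,\alpha_k\,K(u,v)\,du\,dv .
\]
Differentiation under the integral is legitimate because the weights $\alpha_i$ are bounded by $1$, the kernel $K$ is integrable, and $f'''$ is continuous on the compact range of $\xi$, so a dominated-convergence argument applies even near the integrable singularities of $K$ at $u=1$ and $v=\pm1$.

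\emph{Then} the sign is immediate on the region of interest $b<u_3<u_2<u_1<a$. Because $\xi$ is a convex combination of $u_1,u_2,u_3$, we have $\xi(u,v)\in[u_3,u_1]\subset(b,a)$ for all $(u,v)$, so $f'''(\xi)<0$ throughout by the hypothesis of Theorem \ref{main}. The kernel $K$ is positive and each $\alpha_i$ is nonnegative, so the full integrand is nonpositive; moreover $\alpha_1,\alpha_2,\alpha_3$ are all strictly positive on the open square $(-1,1)^2$, whence $\alpha_i\alpha_j\alpha_k\,K>0$ there for \emph{every} choice of $i,j,k$. The integral is therefore strictly negative, which is the assertion $\pd^3 q/\pd u_i\,\pd u_j\,\pd u_k<0$.

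\emph{The only points needing care} --- rather than a genuine obstacle --- are the justification of differentiating under the integral near the edge singularities of $K$, and the verification that $\xi$ never escapes the interval $(b,a)$ on which $f'''<0$ is assumed. The latter is exactly the convex-combination property, and it is what makes the argument work uniformly for all index patterns, including the diagonal cases $i=j=k$.
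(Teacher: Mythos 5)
Your proof is correct and is precisely the argument the paper intends: the authors state that the lemma follows directly from formula (\ref{qbeta}) and omit the details, and your computation (differentiating under the integral, noting the argument of $f$ is a convex combination of $u_1,u_2,u_3$ with nonnegative weights $\alpha_i$ that are strictly positive on the open square, and invoking $f'''<0$) supplies exactly the omitted verification. Nothing further is needed.
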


Eliminating $x$ from (\ref{HO}) yields
\begin{eqnarray}
(\lambda_{1}t + w_{1}) - (\lambda_{2}t + w_{2}) &=& 0 \ , \label{1} \\
 (\lambda_{2}t + w_{2}) - (\lambda_{3}t + w_{3}) &=& 0 \ . \label{2}
\end{eqnarray}

Using (\ref{w}) for $w_1$ and $w_2$, and (\ref{lambda}) for $\lambda_1$ and $\lambda_2$ , we write
\begin{eqnarray*}
(\lambda_{1}t + w_{1}) - (\lambda_{2}t + w_{2})  &=& (\lambda_1 - \gamma)[t + {\pd q \over \pd u_1}]
- (\lambda_2 - \gamma)[t + {\pd q \over \pd u_2}] \\
&=& {2(u_1 - u_2)(u_3 + \nu) \Pi \over (u_2 + \nu) E} \ F(u_1, u_2, u_3) \ ,
\end{eqnarray*}
where
\begin{equation}
\label{F}
F = (t+ {\pd q \over \pd u_1}) + {s E \over E - (1-s) K} \ {u_2 + \nu \over u_1 + \nu} \
[t + {\pd q \over \pd u_2}] \ .
\end{equation}

Similarly, we use (\ref{w}) for $w_2$ and $w_3$ to write
\begin{eqnarray*}
(\lambda_{2}t + w_{2}) - (\lambda_{3}t + w_{3}) &=& (\lambda_2 - \lambda_3)(t + {\partial q
\over \partial u_2}) + (\lambda_3 - \gamma)({\pd q \over \pd u_2} - {\pd q \over \pd u_3} ) \\
&=& {2(u_2 - u_3)(u_3 + \nu) \Pi \over (u_2 + \nu)[K - E][E - (1-s) K] } \ G(u_1, u_2, u_3) \ ,
\end{eqnarray*}
where
\begin{equation}
\label{G}
G =  M(u_1, u_2, u_3)
(t + {\pd q \over \pd u_2}) - 2(u_2 - u_3) [E - (1-s)K]{\pd^2 q \over \pd u_2 \pd u_3} \ .
\end{equation}
In the derivation, we have used formula (\ref{M}) for $\lambda_2 - \lambda_3$, formula (\ref{lambda}) for
$\lambda_3$ and equation (\ref{q}).

Since (\ref{KE}) implies that $K(s) - E(s) > 0$ and $E(s) - (1-s)
K(s)
> 0$, equations (\ref{1}) and (\ref{2}) are
equivalent to
\begin{equation}
\label{FG} F(u_1, u_2, u_3) = 0 \ , \quad G(u_1, u_2, u_3) = 0
\end{equation}
for $0 < s < 1$.

\subsection{The trailing edge}

\smallskip

We first study the trailing edge. We use (\ref{K}), (\ref{E}), and
(\ref{M2}) to expand
\begin{eqnarray}
F &=& t + {\pd q \over \pd u_1} + {2(u_2 + \nu) \over (u_1 + \nu)} \ (t + {\pd q \over \pd u_2})
- {3(u_2 + \nu) \over 4 (u_1 + \nu)} \ (t + {\pd q \over \pd u_2}) \ s \nonumber \\
&& - \ {3(u_2 + \nu) \over 32(u_1 + \nu)} (t + {\pd q \over \pd u_2})
\ s^2 + O(s^3) \ , \label{F2}
\end{eqnarray}
and
\begin{eqnarray}
G &=& {\pi \over 2}  \left \{ [{u_2 + \nu \over 2(u_1 + \nu)} + {1 \over 16} \ (1 - {3(u_1 - u_2) \over
u_1 - u_3})](t + {\pd q \over \pd u_2}) - \right. \nonumber
\\ && \phantom{{\pi \over 2} } \left. -
 {(u_1 - u_3)(u_2 + \nu) \over u_1 + \nu} \
{\pd^2 q \over \pd u_2 \pd u_3} \right \}  s^2   
\nonumber \\ &&
 + {\pi \over 2} \left \{ {3 \over 128} (1 - 5 {u_1 - u_2 \over u_1 - u_3})
(t + {\pd q \over \pd u_2} ) -  
\right. \nonumber \\ &&  \phantom{{\pi \over 2}} \left.
- {(u_1 - u_3)(u_2 + \nu) \over 8(u_1 + \nu)} {\pd^2 q
\over \pd u_2 \pd u_3} \right \} s^3 + O(s^4) \ . \label{G2}
\end{eqnarray}
Taking the limits of $F =0$ and $G/s^2 = 0$ as $s \rightarrow 0$ and simplifying the
results a bit, we obtain the equations governing the trailing edge
\begin{equation}
U(u_1, u_3) := (u_1 + \nu)(t + {\pd q(u_1, u_3, u_3)\over \pd u_1}) + 2(u_3 + \nu) (t +
{\pd q(u_1, u_3, u_3) \over \pd u_2})  = 0\,,  \label{U}
\end{equation}
and
\begin{equation}
V(u_1, u_3) := [(u_3 + \nu) - {1 \over 4} (u_1 + \nu)]
(t + {\pd q \over \pd u_2}) - 2(u_1 - u_3)(u_3 + \nu)
\ {\pd^2 q \over \pd u_2 \pd u_3} = 0 \,. \label{V}
\end{equation}

Solving for $t$ from (\ref{U}) and substituting it into (\ref{V}), we use (\ref{q}) to simplify the result and
get
\begin{align}
\nonumber
W(u_1, u_3)  &:= [(u_3 + \nu) - {1 \over 4} (u_1 + \nu)] (u_1 + \nu) {\pd^2 q \over \pd u_1 \pd u_3} 
\\ \label{W} &\phantom{:=} \quad +
[(u_1 + \nu) + 2(u_3 + \nu)](u_3 + \nu) {\pd^2 q \over \pd u_2 \pd u_3}  = 0 \ .
\end{align}
Obviously, equations (\ref{U}) and (\ref{V}) are equivalent to equations (\ref{U})
and (\ref{W}).

We now solve equation (\ref{W}) for $u_3$ as a function of $u_1$ in the neighborhood
of $u_1=u_3=\hat{u}$. We use formula (\ref{qbeta}) and the symmetry of $q$ to write
\begin{eqnarray}
\label{int1}
{\pd^2 q(u_1, u_3, u_3) \over \pd u_1 \pd u_3} &=& {1 \over 16 \sqrt{2} \pi}
\int_{-1}^1 f''({1 + \mu \over 2} u_3 + {1 - \mu \over 2} u_1) {(1 - \mu^2) \over \sqrt{1 - \mu}}
\ d \mu \ , \\
{\pd^2 q(u_1, u_3, u_3) \over \pd u_2 \pd u_3} &=&
{1 \over 64 \sqrt{2} \pi}
\int_{-1}^1 f''({1 + \mu \over 2} u_3 + {1 - \mu \over 2} u_1) {(1 + \mu)^2 \over
\sqrt{1 - \mu}  }
\ d \mu  \ . \label{int2}
\end{eqnarray}

For $u_1 = \hat{u}$, it follows from (\ref{cf2}), (\ref{int1}) and (\ref{int2}) that equation
(\ref{W}) has only the solution $u_3 = \hat{u}$ in the neighborhood of $u_1=u_3
=\hat{u}$.

For $u_1$ which is a little bigger than $\hat{u}$, we will show that there is a unique
$u_3$ such that equation (\ref{W}) holds. By (\ref{cf2}), (\ref{int1}) and (\ref{int2}), we have
$$W(\hat{u}, \tilde{u}_3) > 0 \quad \mbox{for some $ \tilde{u}_3 < \hat{u}$} \ .$$
Hence, $$W(u_1, \tilde{u}_3) > 0  \quad \mbox{for $u_1$ a bit larger than $\hat{u}$} \ .
$$
For each of such $u_1$'s, we deduct from (\ref{cf2}) and (\ref{W}) again that
$$W(u_1, \hat{u}) < 0 \ .$$
By the mean value theorem, we show that, for each $u_1$ that is slightly larger than
$\hat{u}$, there exists a $u_3 < \hat{u}$ such that (\ref{W}) holds. It is easy to
check the uniqueness of $u_3$.

Therefore, (\ref{W}) determines $u_{3}$ as a function of $u_{1}$,
$u_{3}$ $=$ $A(u_{1})$, for small non-positive $u_{1}$ with
$A(\hat{u})$ $=$ $\hat{u}$.
The smoothness of the function $f(u)$ and Lemma \ref{q3} imply that $A(u_{1})$
is a smooth decreasing function of $u_{1}$.

Next, substituting $u_{3}$ $=$ $A(u_{1})$ into (\ref{U}), it is not hard to show that
(\ref{U}) determines $u_1$ as a function of $t$.  We have therefore
proved  the short time version of
the following lemma.

\begin{lem}
\label{x-} Under the conditions of Theorem \ref{main}, equations
(\ref{U}) and (\ref{V}) have a unique solution ($u_1^-(t)$,
$u_{2}^{-}(t)$, $u_{3}^{-}(t)$) with $u_{2}^{-}(t) = u_{3}^{-}(t)$
for all t $\geq$ 0. The solution has the property that
 $u_{1}^{-}(t)$ $>$
$u_{2}^{-}(t)$ = $u_{3}^{-}(t)$ for $t$ $>$ $0$ and that
 $u_{1}^{-}(0)$ =
$u_{2}^{-}(0) = u_{3}^{-}(0) = \hat{u}$.
\end{lem}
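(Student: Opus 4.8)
The plan is to globalize the short-time construction just completed. The key structural fact, already exploited above, is that equation (\ref{W}) is free of $t$: it alone cuts out the trailing-edge curve $u_3 = A(u_1)$ in the $(u_1,u_3)$-plane, while (\ref{U}) merely attaches a time to each point of that curve. Accordingly I would decouple the problem — first solve (\ref{W}) for $A$ on the full range of $u_1$, then recover $t$ from (\ref{U}) — rather than continue the coupled system directly in $t$. Recall that (\ref{U}) and (\ref{V}) are equivalent to (\ref{U}) and (\ref{W}), so it suffices to treat the latter pair; the whole construction will stay inside the region $b \le u_3 \le \hat{u} \le u_1 \le a$, and the ordering $u_1^- > u_2^- = u_3^-$ asserted in the lemma will follow from the strict monotonicity of $A$.

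First I would extend $A$ from the neighbourhood of $\hat{u}$ to the whole interval $\hat{u} \le u_1 < a$. Using the integral representations (\ref{int1}) and (\ref{int2}) for $\pd^2 q/\pd u_1 \pd u_3$ and $\pd^2 q/\pd u_2 \pd u_3$ together with the sign pattern (\ref{cf2}) of $f''$, I would show that for each fixed $u_1 \in (\hat{u}, a)$ the map $u_3 \mapsto W(u_1, u_3)$ is strictly monotone in $u_3$ and changes sign exactly once in $(b, \hat{u})$ — it is negative at $u_3 = \hat{u}$, as recorded in the short-time argument, and positive as $u_3$ decreases. This produces the zero $u_3 = A(u_1)$. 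Differentiating $W(u_1, A(u_1)) \equiv 0$ and invoking Lemma \ref{q3} (all third derivatives of $q$ are negative) gives $A'(u_1) < 0$ and keeps $\pd W/\pd u_3$ away from $0$ along the curve, so the implicit-function continuation cannot terminate before $u_1$ reaches $a$; hence $A$ is smooth and strictly decreasing with $A(\hat{u}) = \hat{u}$ and $A(u_1) \downarrow b$ as $u_1 \uparrow a$.

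Next I would substitute $u_3 = A(u_1)$ into (\ref{U}) and solve for the time,
\begin{equation*}
t(u_1) = - \frac{(u_1 + \nu)\,\pd_{u_1} q + 2(u_3 + \nu)\,\pd_{u_2} q}{(u_1 + \nu) + 2(u_3 + \nu)} \ ,
\end{equation*}
where the derivatives of $q$ are evaluated at $(u_1, A(u_1), A(u_1))$ and $u_3 = A(u_1)$. At $u_1 = \hat{u}$ the diagonal identities $\pd_{u_1} q = \pd_{u_2} q = \pd_{u_3} q = f'(\hat{u})/3 = 0$, which follow from $q(u,u,u) = f(u)$ and (\ref{cf}), give $t(\hat{u}) = 0$, matching $u^-(0) = \hat{u}$. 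I would then prove that $t(u_1)$ is strictly increasing — differentiating along the curve and using $A'(u_1) < 0$ together with Lemma \ref{q3} — and that $t(u_1) \to \infty$ as $u_1 \to a^-$, the blow-up stemming from the unboundedness of $f$, and hence of the derivatives of $q$, as $u_1 \to a$ and $u_3 \to b$. Thus $t(\cdot)$ maps $[\hat{u}, a)$ bijectively onto $[0, \infty)$; inverting it yields $u_1^-(t)$ (strictly increasing) and then $u_3^-(t) = A(u_1^-(t)) = u_2^-(t)$ (strictly decreasing), with the prescribed values at $t = 0$ and the strict ordering $u_1^-(t) > u_2^-(t) = u_3^-(t)$ for $t > 0$. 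Uniqueness is automatic from the single-valuedness of $A$ and of the inverse of $t(\cdot)$.

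The main obstacle is precisely the global sign analysis underlying the two monotonicity claims: that $\pd W/\pd u_3$ retains a fixed sign (so that $A$ continues monotonically all the way to $u_1 = a$) and that $dt/du_1 > 0$. Both reduce to controlling the second derivatives of $q$, which are integrals of $f''$ — respectively of $f'$ — over an interval $[A(u_1), u_1]$ that contains the inflection point $\hat{u}$, so $f''$ changes sign inside the integral and crude estimates are unavailable. Lemma \ref{q3} is the device that converts the pointwise hypothesis $f''' < 0$ into the global monotonicity needed to upgrade the short-time result to all $t \ge 0$; establishing that it is strong enough, and pinning down the endpoint behaviour as $u_1 \to a$, is the delicate part.
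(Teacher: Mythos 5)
Your reorganization --- treating (\ref{W}) as a $t$-free equation cutting out the trailing-edge curve $u_3 = A(u_1)$, then reading $t$ off from (\ref{U}) as an increasing function $t(u_1)$ and inverting --- is a legitimate and arguably cleaner route than the paper's, which instead fixes $t$ and applies the implicit function theorem to the coupled pair $(U,V)$ in the variables $(u_1,u_3)$, using the signs $\pd U/\pd u_3 = 0$, $\pd U/\pd u_1 < 0$, $\pd V/\pd u_3 > 0$ at the solution. The two are equivalent in content: because $\pd U/\pd u_3 = 0$ on the locus, your $dt/du_1 = -(\pd U/\pd u_1)/(\pd U/\pd t) > 0$ and the nonvanishing of $\pd W/\pd u_3$ along the curve reduce to exactly the paper's two sign computations. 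Your formula for $t(u_1)$ and the check $t(\hat{u})=0$ via $\pd_{u_1}q=\pd_{u_2}q=\pd_{u_3}q=f'(\hat{u})/3$ are correct.

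The genuine gap is in where those signs come from. You attribute the global sign analysis to Lemma \ref{q3}, but negativity of the \emph{third} derivatives of $q$ does not by itself fix the signs of the \emph{second} derivatives $\pd^2_{u_1 u_3}q$ and $\pd^2_{u_2 u_3}q$ that enter $\pd W/\pd u_3$ and $\pd U/\pd u_1$: as you yourself observe, (\ref{int1})--(\ref{int2}) integrate $f''$ over an interval straddling the inflection point, so neither integral has an a priori sign. The paper closes this with a bootstrap (its Lemma \ref{trailing}): on the zero set of $W$ the two mixed second derivatives are tied together by the constraint (\ref{W}) itself, while the Euler--Poisson--Darboux system gives $\pd^2_{u_1 u_3}q - \pd^2_{u_2 u_3}q = 2(u_1-u_2)\,\pd^3_{u_1 u_2 u_3}q < 0$; combining the two forces both second derivatives to be strictly negative (if either vanished, (\ref{W}) would make the other vanish or have the wrong sign, contradicting the strict inequality). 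This self-referential use of the constraint is the missing idea; without it your claims that $\pd W/\pd u_3$ stays away from zero and that $dt/du_1>0$ are unsupported. Relatedly, your stronger assertion that $u_3 \mapsto W(u_1,u_3)$ is strictly monotone for each fixed $u_1$ (i.e., off the solution curve) is more than is needed and is not delivered by these arguments --- only nondegeneracy on the zero set is available. Finally, the endpoint claims $A(u_1)\downarrow b$ and $t(u_1)\to\infty$ as $u_1\uparrow a$ are asserted rather than proved; the second is what your route needs in order to reach all $t\ge 0$ (the paper is equally terse on this point), and the mechanism you name --- $f'\to-\infty$ near $u=a$ for bounded monotone data --- is the right one but should be carried through (\ref{qbeta}).
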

\begin{proof}
We will now extend the solution ($u_1^-(t)$, $u_{2}^{-}(t)$,
$u_{3}^{-}(t)$) of equations (\ref{U}) and (\ref{V}) for all $t>0$.
Before doing this, we need a lemma.
\begin{lem}
\label{trailing}
Under conditions of Theorem \ref{main}, the following hold:
\begin{eqnarray*}
\frac{\partial^{2} q}{\partial u_{1} \partial u_{2}}  =  \frac{
\partial^{2} q}{\partial u_{1} \partial u_{3}} < 0  \ , \quad
\frac{\partial^{2} q}{\partial u_{1}^{2}}  <  0 \ , \quad t + {\pd q \over \pd u_1}
< 0 \ , 
\end{eqnarray*}
and
\begin{equation*}
\quad t + {\pd q \over \pd u_2}=t + {\pd q \over \pd u_3} > 0
\, ,
\end{equation*}
at the solution $(u_{1}, u_{3},
u_{3})$ of (\ref{W}) where $u_{1} > u_{3}$.
\end{lem}
\begin{proof}
By (\ref{q}) and Lemma \ref{q3},
\begin{eqnarray}
\frac{\partial^{2} q}{\partial u_{1} \partial u_{3}} -
\frac{\partial^{2} q}{\partial u_{2} \partial u_{3}}  =  \frac{
\partial}{\partial u_{3}}[\frac{\partial q}{\partial u_{1}} -
\frac{\partial q}{\partial u_{2}}]
=  2(u_{1} - u_{2}) \frac{\partial^{3} q}{\partial u_{1}
\partial u_{2} \partial u_{3}}  < 0 \,, \label{132}
\end{eqnarray}
which when combined with (\ref{W}) gives
\begin{displaymath}
\frac{\partial^{2} q}{\partial u_{1} \partial u_{3}} < 0
\end{displaymath}
as long as $4(u_3 + \nu) - (u_1 + \nu) > 0$.  This inequality holds even when
$4(u_3 + \nu) - (u_1 + \nu) \leq 0$. To see this, suppose the inequality fails
at some point, at which
$\pd^2_{u_2 u_3} q$ must vanish because of (\ref{W}). This would violate (\ref{132}).

The other inequalities of Lemma \ref{trailing} are shown in the same
way.

\end{proof}

We now calculate the partial derivatives of $U$ and $V$ at the solution $(u_1, u_3, u_3)$
of (\ref{U}) and (\ref{V}),
\begin{eqnarray*}
{\pd U \over \pd u_1} &=& t + {\pd q \over \pd u_1} + (u_1 + \nu) {\pd^2 q \over
\pd u_1^2} + 2 (u_3 + \nu) {\pd^2 q \over \pd u_1 \pd u_3} < 0\,, \\
{\pd U \over \pd u_3} &=& 2(t + {\pd q \over \pd u_3}) + 2(u_1 + \nu)
{\pd^2 q \over \pd u_1 \pd u_3} + 8(u_3 + \nu) {\pd^2 q \over \pd u_2 \pd u_3} \\
&=& {1 \over u_1 - u_3} \left [ (u_1 + \nu) (t + {\pd q \over \pd u_1}) +
2(u_3 + \nu) (t + {\pd q \over \pd u_3}) \right ] = 0\,, \\
{\pd V \over \pd u_3} &=& t + {\pd q \over \pd u_2}
+ [(8(u_3 + \nu) - 3 (u_1 + \nu)] {\pd^2 q \over \pd u_2 \pd u_3} 
\\ && \phantom{t+} \quad -
2(u_1 - u_3)(u_3 + \nu) ({\pd^3 q \over \pd u_2^2 \pd u_3} +
{\pd^3 q \over \pd u_2 \pd u_3^2}) \\
&=& { 3(u_1 + \nu)^2 - 12 (u_1 + \nu)(u_3 + \nu) + 24 (u_3 + \nu)^2 \over
8(u_1 - u_3)(u_3 + \nu) } \ ( t + {\pd q \over \pd u_2}) \\
&& - \  2(u_1 - u_3)(u_3 + \nu)
({\pd^3 q \over \pd u_2^2 \pd u_3} + {\pd^3 q \over \pd u_2 \pd u_3^2}) > 0\,,
\end{eqnarray*}
where we have used Lemmas \ref{q3} and \ref{trailing} to determine 
the signs of the derivatives.
These show that the Jacobian
\begin{displaymath}
\frac{\partial (U, V)}{\partial (u_{1}, u_{3})} \neq 0
\end{displaymath}
on the solution $(u_{1},u_{3},u_{3})$ of equations (\ref{U}) and (\ref{V})
where $u_{1}
> u_{3}$.

Therefore, by the Implicit Function Theorem,  equations (\ref{U}) and (\ref{V})
can be solved for
$u_{1}^{-}(t)$, $u_{2}^{-}(t)$ =
$u_{3}^{-}(t)$ for all $t\geq 0$. Furthermore, it is easy to check that
$u_{1}^{-}(t)$ is an increasing function of $t$.
\end{proof}

\subsection{The leading edge}

\smallskip

At the leading edge, $u_1 = u_2$, i.e., $s=1$, it follows from
(\ref{K2}), (\ref{E2}), (\ref{F}), and (\ref{G}) that equations (\ref{FG}) turn out
to be
$$t + {\pd q \over \pd u_1} (u_1, u_1, u_3) = 0 \ , \quad
t + {\pd q \over \pd u_3} (u_1, u_1, u_3) = 0  \ .$$ In the same way
as we handle Lemma \ref{x-}, we can solve the above equations
for $u_1$ and $u_3$ as functions of $t$, leading to the lemma.
\begin{lem}
\label{leading} Under the conditions of Theorem \ref{main}, system
(\ref{FG}) has a unique solution ($u_{1}^{+}(t)$, $u_{2}^{+}(t)$,
$u_{3}^{+}(t)$) with $u_{1}^{+}(t) = u_{2}^{+}(t)$ for all $t \geq
0$. The solution has the property that $u_{1}^{+}(t) = u_{2}^{+}(t)
> u_{3}^{+}(t)$ for $t > 0$ and that $u_{1}^{+}(0) = u_{2}^{+}(0) =
u_{3}^{+}(0) = \hat{u}$.
\end{lem}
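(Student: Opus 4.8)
The plan is to follow the trailing-edge argument of Lemma \ref{x-} line by line, with the roles of the confluent variables interchanged ($u_1=u_2$ now coalesce instead of $u_2=u_3$). The text has already reduced system (\ref{FG}) at the leading edge $s=1$ to
\begin{equation*}
\tilde U(u_1,u_3) := t + {\pd q(u_1,u_1,u_3) \over \pd u_1} = 0 \ , \qquad \tilde V(u_1,u_3) := t + {\pd q(u_1,u_1,u_3) \over \pd u_3} = 0 \ .
\end{equation*}
First I would eliminate $t$ by subtracting the two equations and invoking the Euler--Poisson--Darboux relation (\ref{q}) with $i=1$, $j=3$, which gives ${\pd q / \pd u_1} - {\pd q / \pd u_3} = 2(u_1-u_3){\pd^2 q / \pd u_1 \pd u_3}$. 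Thus, away from $u_1=u_3$, the system is equivalent to $\tilde U=0$ together with the single condition
\begin{equation*}
\tilde W(u_1,u_3) := {\pd^2 q(u_1,u_1,u_3) \over \pd u_1 \pd u_3} = 0 \ ,
\end{equation*}
the exact analog of (\ref{W}).

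Second, I would solve $\tilde W=0$ for $u_3$ as a function of $u_1$ near $u_1=u_3=\hat{u}$. Differentiating (\ref{qbeta}) as in (\ref{int1})--(\ref{int2}) represents $\pd^2 q / \pd u_1 \pd u_3$ at $(u_1,u_1,u_3)$ as a positively weighted integral of $f''$ over the segment $\phi\in[u_3,u_1]$. By (\ref{cf2}) this integral is strictly positive when $u_1,u_3<\hat{u}$, strictly negative when $u_1,u_3>\hat{u}$, and vanishes at $u_1=u_3=\hat{u}$; moreover Lemma \ref{q3} gives ${\pd \tilde W / \pd u_3}={\pd^3 q / \pd u_1 \pd u_3^2}<0$, so for each fixed $u_1$ slightly above $\hat{u}$ the function $\tilde W$ is strictly decreasing in $u_3$ and has a unique zero $u_3=A^+(u_1)<\hat{u}$. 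Hence $A^+$ is smooth and decreasing with $A^+(\hat{u})=\hat{u}$, and on its graph $u_3<\hat{u}<u_1$. Substituting into $\tilde U=0$ then yields $u_1$ as a function of $t$ for short time, exactly as in Lemma \ref{x-}.

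Third, I would extend this solution to all $t\geq0$ by the Implicit Function Theorem applied to $(\tilde U,\tilde V)$. The simplification peculiar to the leading edge is that, since $q$ is symmetric, on the solution ${\pd^2 q(u_1,u_1,u_3)/\pd u_1 \pd u_3}={\pd^2 q(u_1,u_1,u_3)/\pd u_2 \pd u_3}$, and both vanish there (that is $\tilde W=0$); consequently the off-diagonal entries ${\pd \tilde U/\pd u_3}$ and ${\pd \tilde V/\pd u_1}$ are zero and the Jacobian is diagonal,
\begin{equation*}
{\pd(\tilde U,\tilde V) \over \pd(u_1,u_3)} = \left( \begin{array}{cc} {\pd^2 q \over \pd u_1^2}+{\pd^2 q \over \pd u_1 \pd u_2} & 0 \\ 0 & {\pd^2 q \over \pd u_3^2} \end{array} \right) \ .
\end{equation*}
It therefore suffices to prove the leading-edge counterpart of Lemma \ref{trailing}, namely that ${\pd^2 q / \pd u_1^2}+{\pd^2 q / \pd u_1 \pd u_2}<0$ and ${\pd^2 q / \pd u_3^2}>0$ at the solution with $u_1>u_3$.

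I expect this last sign determination to be the main obstacle, and I would handle it directly from (\ref{qbeta}). Restricting to $u_2=u_1$ and writing $\phi={1+\mu \over 2}u_1+{1-\mu\over2}u_3$, the three relevant second derivatives are, up to a common positive constant, $\int_{-1}^1 w_\bullet(\mu)\,f''(\phi)\,(1-\mu)^{-1/2}\,d\mu$ with the positive weights $w_1=({1+\mu\over2})^2$, $w_{13}={1+\mu\over2}{1-\mu\over2}$ and $w_3=({1-\mu\over2})^2$ corresponding to ${\pd^2 q / \pd u_1^2}+{\pd^2 q / \pd u_1 \pd u_2}$, $\tilde W$ and ${\pd^2 q / \pd u_3^2}$ respectively. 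On the solution the middle integral vanishes, while by (\ref{cf2}) the integrand $w_{13}f''$ changes sign exactly once, from $+$ to $-$, as $\phi$ crosses $\hat{u}$. Since $w_1/w_{13}={1+\mu\over1-\mu}$ is increasing and $w_3/w_{13}={1-\mu\over1+\mu}$ is decreasing, subtracting the vanishing middle integral against the value of these ratios at the sign-change point makes each integrand one-signed, forcing $\int w_1 f''<0$ and $\int w_3 f''>0$. This yields a nonvanishing (indeed negative) Jacobian determinant, so the Implicit Function Theorem continues the solution for all $t\geq0$; differentiating $\tilde U=\tilde V=0$ in $t$ against the diagonal Jacobian shows $u_1^+(t)$ is increasing and $u_3^+(t)$ is decreasing, which gives $u_1^+(t)=u_2^+(t)>u_3^+(t)$ for $t>0$ together with the common initial value $\hat{u}$, completing the lemma.
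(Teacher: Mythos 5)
Your proposal is correct, and its skeleton is exactly what the paper intends: the printed proof of this lemma consists only of the reduction of (\ref{FG}) at $s=1$ to $t+\partial_{u_1}q(u_1,u_1,u_3)=0$, $t+\partial_{u_3}q(u_1,u_1,u_3)=0$, followed by the remark that one proceeds ``in the same way'' as for Lemma \ref{x-}. You have filled in that gap faithfully --- eliminating $t$ via the Euler--Poisson--Darboux relation to get the analogue $\widetilde W=\partial^2_{u_1u_3}q(u_1,u_1,u_3)=0$ of (\ref{W}), solving it near $\hat u$ using (\ref{cf2}) and Lemma \ref{q3}, and continuing by the Implicit Function Theorem with a Jacobian that is diagonal on the solution. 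The one place you genuinely depart from the paper's machinery is the sign determination of the diagonal entries $\partial^2_{u_1u_1}q+\partial^2_{u_1u_2}q<0$ and $\partial^2_{u_3u_3}q>0$: you prove these by a Chebyshev-type correlation inequality on the integral representation (\ref{qbeta}), exploiting the monotonicity of the weight ratios $(1+\mu)/(1-\mu)$ against the single sign change of $f''$. The route the paper's trailing-edge proof suggests is shorter here: differentiating the EPD identity $2(u_1-u_3)\partial^2_{u_1u_3}q=\partial_{u_1}q-\partial_{u_3}q$ in $u_1$ and in $u_3$ and using $\partial^2_{u_1u_3}q=\partial^2_{u_2u_3}q=0$ on the solution gives directly $\partial^2_{u_1u_1}q=2(u_1-u_3)\partial^3_{u_1u_1u_3}q<0$, $\partial^2_{u_1u_2}q=2(u_1-u_3)\partial^3_{u_1u_2u_3}q<0$ and $\partial^2_{u_3u_3}q=-2(u_1-u_3)\partial^3_{u_1u_3u_3}q>0$ by Lemma \ref{q3} alone, exactly as in (\ref{11}) and (\ref{22}). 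Both arguments are valid; yours is more self-contained but more laborious, while the EPD route reuses the identities already set up for the trailing edge. Either way the conclusion, including $\dot u_1^+>0$, $\dot u_3^+<0$ and hence $u_1^+=u_2^+>u_3^+$ for $t>0$, follows as you state.
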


\subsection{Near the trailing edge}

\smallskip

By Lemma \ref{x-}, ($u_{1}^{-}(t)$, $u_{2}^{-}(t)$, $u_{3}^{-}(t)$)
satisfies equations (\ref{U}) and (\ref{V}). For each fixed t $>$ 0, we need to
solve equations (\ref{FG}) for
$u_{1}$
and $u_{3}$ as functions of $u_{2}$ in the neighborhood of
$u_{2}^{-}(t)$. This is carried out in
\begin{lem}
\label{two}
For each t $>$ 0, equations (\ref{FG}) can be solved for
$u_{1}$ and $u_{3}$
in terms of $u_{2}$ in the neighborhood of ($u_{1}^{-}(t)$,
$u_{2}^{-}(t)$, $u_{3}^{-}(t)$)
\begin{equation}
\label{MNf}
\left\{ \begin{array}{ll}
          u_{1} = M ( u_{2} )   \\
          u_{3} = N ( u_{2} )
          \end{array}
\right.
\end{equation}
such that $u_{1}^{-}(t) = M(u_{2}^{-}(t))$ and $u_{3}^{-}(t)
= N(u_{2}^{-}(t))$, Moreover, for $u_2 > u_{2}^{-}(t)$,
\begin{equation}
\label{MN} N(u_{2}) < u_{2} < M(u_{2})\,.
\end{equation}
\end{lem}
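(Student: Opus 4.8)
The plan is to invoke the Implicit Function Theorem on the system (\ref{FG}), after first desingularizing $G$ at the trailing edge. Since $u_2^-(t)=u_3^-(t)$, we have $s=0$ there, and the expansion (\ref{G2}) shows $G=O(s^2)$, so $G$ vanishes identically on the diagonal $u_2=u_3$ and gives no information. I would therefore replace the second equation of (\ref{FG}) by $\widehat G:=G/s^2=0$, which by (\ref{G2}) extends to a smooth function across $s=0$ (because $s$ vanishes to first order on $u_2=u_3$ and $G$ to second) and is equivalent to $G=0$ for $s>0$. Passing to the limit $s\to 0$, as was done just before (\ref{U}), the pair $(F,\widehat G)$ reduces on the diagonal to positive multiples of $(U,V)$, so the trailing-edge point $(u_1^-(t),u_2^-(t),u_3^-(t))$ furnished by Lemma \ref{x-} is a genuine zero of $F=0$, $\widehat G=0$, the point about which I will invert.

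The heart of the argument is to show that the Jacobian $\pd(F,\widehat G)/\pd(u_1,u_3)$, with $u_2$ held fixed, is nonzero at the trailing-edge point; this is the step I expect to be the main obstacle. The difficulty is that, although $(F,\widehat G)$ restricts on the diagonal to positive multiples of $(U,V)$---whose Jacobian in $(u_1,u_3)$ was already shown to be nonzero in the proof of Lemma \ref{x-}---the derivatives needed here are taken at fixed $u_2$ and so differ from the diagonal derivatives of $U$ and $V$ by terms carrying $\pd s/\pd u_1$ and $\pd s/\pd u_3$ at $s=0$. Thus I would need not only the leading coefficients but also the $O(s)$ correction in (\ref{F2}) and the $O(s^3)$ correction in (\ref{G2}); differentiating these, evaluating $\pd s/\pd u_i$ from (\ref{s}) at $u_2=u_3$, and inserting the sign data $\pd^2_{u_1 u_3}q=\pd^2_{u_1 u_2}q<0$, $\pd^2_{u_1 u_1}q<0$, $t+\pd_{u_1}q<0$ and $t+\pd_{u_2}q>0$ from Lemmas \ref{q3} and \ref{trailing}, one computes the $2\times 2$ determinant and checks that it keeps a single sign, in the same spirit as the (nearly triangular) evaluation of $\pd(U,V)/\pd(u_1,u_3)$ in Lemma \ref{x-}.

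Once the Jacobian is nonzero, the Implicit Function Theorem produces smooth functions $u_1=M(u_2)$ and $u_3=N(u_2)$ solving (\ref{FG}) near $u_2^-(t)$, with $M(u_2^-(t))=u_1^-(t)$ and $N(u_2^-(t))=u_3^-(t)=u_2^-(t)$, which are exactly the asserted matching values. It then remains to verify (\ref{MN}). The inequality $u_2<M(u_2)$ is immediate from continuity, since $u_1^-(t)>u_2^-(t)$ for $t>0$ by Lemma \ref{x-} forces $M(u_2)>u_2$ throughout a right-neighborhood of $u_2^-(t)$. For $N(u_2)<u_2$, because $N(u_2^-(t))=u_2^-(t)$ I would instead use the derivative relation $(M',N')^{T}=-(\pd(F,\widehat G)/\pd(u_1,u_3))^{-1}(\pd_{u_2}F,\pd_{u_2}\widehat G)^{T}$ at the edge: computing $\pd_{u_2}F$ and $\pd_{u_2}\widehat G$ from the same expansions and once more using Lemmas \ref{q3} and \ref{trailing} should give $N'(u_2^-(t))<0$ (and likewise $M'<0$, in agreement with the monotonicity asserted in Theorem \ref{main}). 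Since $N$ is then decreasing through the value $u_2^-(t)$ at $u_2^-(t)$, we get $N(u_2)<u_2^-(t)<u_2$ for $u_2>u_2^-(t)$, which establishes (\ref{MN}) and completes the proof.
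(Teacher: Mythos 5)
Your proposal takes essentially the same route as the paper: the paper also works with the desingularized pair $(F,G/s^2)$, computes $\partial(F,G/s^2)/\partial(u_1,u_3)$ at $(u_1^-(t),u_2^-(t),u_3^-(t))$ directly from the expansions (\ref{F2}) and (\ref{G2}) together with (\ref{U}), (\ref{V}), (\ref{q}) and the sign information of Lemmas \ref{q3} and \ref{trailing}, and then gets (\ref{MN}) from $u_1^-(t)>u_2^-(t)$ plus the monotonicity of $N$. The computation you defer works out exactly as you anticipate --- in fact one finds $\partial F/\partial u_3=0$ (and $\partial F/\partial u_2=0$) at the trailing edge, so the Jacobian is genuinely triangular with $\partial F/\partial u_1<0$ and $\partial(G/s^2)/\partial u_3>0$.
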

\begin{proof}
Calculating the first partial derivatives of $F$ and $G/s^2$ of (\ref{F2}) and (\ref{G2}) at
($u_{1}^{-}(t)$, $u_{2}^{-}(t)$,
$u_{3}^{-}(t)$), where $u_2^{-}(t) = u_3^{-}(t)$,  and using (\ref{q}), we find
\begin{eqnarray*}
{\pd F \over \pd u_1} &=& {\pd^2 q \over \pd u_1^2} - {2(u_2 + \nu) \over
(u_1 + \nu)^2 } (t + {\pd q \over \pd u_2}) + {2(u_2 + \nu) \over u_1 + \nu}
{\pd^2 q \over \pd u_1 \pd u_2} < 0 \ , \\
{\pd F \over \pd u_2} &=& {\pd^2 q \over \pd u_1 \pd u_2} + [ {2 \over u_1 + \nu}
- {3 \over 4} {1 \over u_1 - u_3} ] (t + {\pd q \over \pd u_2}) +
\\ && \phantom{{\pd^2 q \over \pd u_1^2}}
+ {6(u_2 + \nu)
\over u_1 + \nu} {\pd^2 q \over \pd u_2 \pd u_3} = 0 \ ,  \\
{\pd F \over \pd u_3} &=& {\pd^2 q \over \pd u_1 \pd u_3} + {3 \over 4} {1 \over
u_1 - u_3} (t + {\pd q \over \pd u_2}) + {2(u_2 + \nu) \over u_1 + \nu}
{\pd^2 q \over \pd u_2 \pd u_3} = 0 \ , \\
{\pd (G/s^2) \over \pd u_2} &=& {\pi \over 2} \left \{ {19(u_1 + \nu) - 16(u_3 + \nu)
\over 32(u_1 - u_3)(u_1 + \nu)}
(t + {\pd q \over \pd u_2}) - {u_1 - u_2 \over 2(u_1 + \nu) }
{\pd^2 q \over \pd u_2 \pd u_3} \right \} \\
&=& {  2(u_1 + \nu)^2 + 9(u_1 + \nu)(u_3 + \nu) - 8(u_3 + \nu)^2  \over 64
(u_1 - u_3)(u_1 + \nu) (u_2 + \nu) } \ \pi (t + {\pd q \over \pd u_2}) > 0 \ , \\
{\pd (G/s^2) \over \pd u_3} &=& {\pi \over 2} \left\{ [- {3 \over 16(u_1 - u_3)}
+{3(u_1 + \nu) \over 32(u_1 - u_3)(u_2 + \nu)} ] (t + {\pd q \over \pd u_2}) + \right. \\ && \phantom{{\pi \over 2}} \left.
+ {3(u_2 + \nu) \over 2(u_1 + \nu) } {\pd^2 q \over \pd u_2 \pd u_3}
- {(u_1 - u_3)(u_2 + \nu) \over u_1 + \nu}{\pd^3 q \over \pd u_2 \pd u_3^2} \right\} \\
&=& {\pi \over 2} \left\{ {3 [(u_1 + \nu)^2 - 4(u_1 + \nu)(u_3 + \nu) +
8(u_3 + \nu)^2] \over 32 (u_1 - u_3) (u_1 + \nu) (u_2 + \nu) } \
(t + {\pd q \over \pd u_2}) \right.
\\ & & \left.  \phantom{\pi/2 \{ }
-  {(u_1 - u_3)(u_2 + \nu) \over u_1 + \nu}
{\pd^3 q \over \pd u_2 \pd u_3^2} \right \} > 0 \ ,
\end{eqnarray*}
where we have used (\ref{U}) and (\ref{V}) to simplify the results, and
Lemmas \ref{q3} and  \ref{trailing} to determine the signs of the derivatives.

These prove the non-vanishing of the Jacobian
\begin{displaymath}
\frac{\partial (F, G/s^2)}{\partial (u_{1}, u_{3})}
\end{displaymath}
at ($u_{1}^{-}(t)$, $u_{2}^{-}(t)$, $u_{3}^{-}(t)$). Hence,
equations (\ref{FG}) can be solved for
\begin{equation*}
\left\{ \begin{matrix}
u_{1} = M(u_2) \\
u_3 = N(u_2)
\end{matrix} \right.
\end{equation*}
in a neighborhood of $u_{2}^{-}(t)$ such that $u_{1}^{-}(t)$ =
$M(u_{2}^{-}(t))$ and $u_{3}^{-}(t)$ = $N(u_{2}^{-}(t))$.
Furthermore, $N(u_2)$ is a decreasing function of $u_2$ and so (\ref{MN}) holds.
\end{proof}

\subsection{The passage from the trailing edge to the leading edge}

\smallskip

We shall show that, for each fixed t $>$ 0, solutions (\ref{MNf}) of
equations (\ref{FG})
can be further extended as long as $N (u_{2})$ $<$ $u_{2}$
$<$ $M (u_{2})$. The Jacobian of system (\ref{1}) and (\ref{2}) with respect
to $(u_{1}, u_{3})$ has to be estimated along the extension.
\begin{lem}
\label{im}
Under the conditions of Theorem \ref{main}, the following
inequalities hold for each $t > 0$.
\begin{equation}
\label{important}
\frac{\partial (\lambda_{1}t + w_{1})}{\partial u_{1}} < 0,  \quad
\frac{\partial (\lambda_{2}t + w_{2})}{\partial u_{2}} > 0,  \quad
\frac{\partial (\lambda_{3}t + w_{3})}{\partial u_{3}} < 0
\end{equation}
on the solution ($u_{1}$, $u_{2}$, $u_{3}$) of (\ref{1}) and (\ref{2})
(or equivalently (\ref{FG})) in the region
$0 < u_{3} + \nu $ $<$ $u_{2} + \nu $ $<$ $u_{1} + \nu$.
\end{lem}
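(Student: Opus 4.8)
The plan is to exploit the diagonal structure of the hodograph Jacobian together with the factored forms of the differences displayed just before (\ref{F}) and (\ref{G}). Write them as
\[
(\lambda_1 t + w_1)-(\lambda_2 t + w_2) = P_{12}\,F,\qquad
(\lambda_2 t + w_2)-(\lambda_3 t + w_3) = P_{23}\,G,
\]
where, by (\ref{KE}) and the positivity of $\Pi$,
\[
P_{12}=\frac{2(u_1-u_2)(u_3+\nu)\Pi}{(u_2+\nu)E}>0,\qquad
P_{23}=\frac{2(u_2-u_3)(u_3+\nu)\Pi}{(u_2+\nu)(K-E)[E-(1-s)K]}>0
\]
in the open region $0<u_3+\nu<u_2+\nu<u_1+\nu$. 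The proof of Corollary \ref{c} uses only the finiteness of the $B_{ij}$ (Lemma \ref{Bij}), so it applies verbatim to (\ref{CHW}): on a solution of (\ref{1})--(\ref{2}) one has $\partial(\lambda_i t+w_i)/\partial u_j=0$ for $i\neq j$. Differentiating the first identity in $u_1$ and the second in $u_2$ and in $u_3$, and using $F=G=0$ on the solution, every cross term drops out and I am left with
\[
\frac{\partial(\lambda_1 t + w_1)}{\partial u_1}=P_{12}\frac{\partial F}{\partial u_1},\qquad
\frac{\partial(\lambda_2 t + w_2)}{\partial u_2}=P_{23}\frac{\partial G}{\partial u_2},\qquad
\frac{\partial(\lambda_3 t + w_3)}{\partial u_3}=-P_{23}\frac{\partial G}{\partial u_3}.
\]
Since $P_{12},P_{23}>0$, the three inequalities (\ref{important}) are equivalent to $\partial F/\partial u_1<0$, $\partial G/\partial u_2>0$ and $\partial G/\partial u_3>0$ on the solution of (\ref{FG}) throughout the open region. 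These are precisely the interior counterparts of the derivative computations already carried out at the trailing edge in the proof of Lemma \ref{two} (there for $F$ and $G/s^2$ as $s\to0$), so the task is to redo those computations with the full elliptic integrals $K,E$ in place of their $s\to0$ expansions.

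Before differentiating I would record the auxiliary sign data on the solution, namely the interior extension of Lemma \ref{trailing}:
\[
t+\frac{\partial q}{\partial u_1}<0,\qquad t+\frac{\partial q}{\partial u_2}>0,\qquad t+\frac{\partial q}{\partial u_3}>0,
\]
together with $\partial^2 q/\partial u_1^2<0$ and $\partial^2 q/\partial u_1\partial u_2<0$. The signs of $t+\partial q/\partial u_i$ I would obtain exactly as in Lemma \ref{trailing}: they hold at $t=0$ and on the two edges (Lemmas \ref{x-} and \ref{leading}), and along the solution curve a sign change would, through $F=0$ or $G=0$, force one of the everywhere-negative third derivatives of $q$ (Lemma \ref{q3}) to vanish, a contradiction. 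The second-derivative signs I would read off the integral representation (\ref{qbeta}): the $f''$-weight for $\partial^2 q/\partial u_1^2$ concentrates the argument near $u_1>\hat{u}$, where $f''<0$ by (\ref{cf2}); the statement for $\partial^2 q/\partial u_1\partial u_2$ follows likewise, or from Lemma \ref{q3} by integrating in the remaining slot.

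With these in hand the two $G$-inequalities are the routine ones. Differentiating (\ref{G}) produces the term $-2(u_2-u_3)[E-(1-s)K]\,\partial^3 q/\partial u_2\partial u_3^2$ (and its $\partial/\partial u_2$ analogue), which is strictly positive by Lemma \ref{q3} and (\ref{KE}); the remaining terms are controlled using $G=0$, formula (\ref{M}) and the convexity information (\ref{ppM}) for $M$, just as in Lemma \ref{two}. The genuine obstacle is $\partial F/\partial u_1<0$: because $F$ in (\ref{F}) carries no explicit second derivative of $q$, differentiating it yields $\partial^2 q/\partial u_1^2$, $\partial^2 q/\partial u_1\partial u_2$, and the $u_1$-derivative of the elliptic coefficient $\tfrac{sE}{E-(1-s)K}\tfrac{u_2+\nu}{u_1+\nu}$, none of which is pre-signed by Lemma \ref{q3}. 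Here I would use $F=0$ to eliminate $t+\partial q/\partial u_1$, combine the two negative second-derivative contributions with the positivity of $t+\partial q/\partial u_2$, and show that the monotonicity of the elliptic coefficient in $u_1$ (via $\partial s/\partial u_1<0$ and (\ref{KE})) renders the remaining term harmless. This last elliptic-coefficient estimate, required uniformly for $0<s<1$, is the step I expect to demand the most care.
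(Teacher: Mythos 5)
Your reduction is formally sound: Corollary \ref{c} does apply (the $B_{ij}$ are finite by Lemma \ref{Bij}), so on the solution set of (\ref{1})--(\ref{2}) the off-diagonal derivatives vanish and, since $F=G=0$ there, the three inequalities (\ref{important}) are indeed equivalent to $\partial F/\partial u_1<0$, $\partial G/\partial u_2>0$, $\partial G/\partial u_3>0$ wherever $P_{12},P_{23}>0$. Your route to the auxiliary signs of $t+\partial q/\partial u_i$ and of the mixed second derivatives also tracks the paper's continuation-by-contradiction argument. The problem is that the three target inequalities, which you defer to the end, are where the entire difficulty lives, and your plan for them does not go through. For $\partial F/\partial u_1$: writing $F=(t+\partial_{u_1}q)+C\,(t+\partial_{u_2}q)$ with $C=\frac{sE}{E-(1-s)K}\frac{u_2+\nu}{u_1+\nu}=\frac{E}{E-(1-s)K}\frac{u_2-u_3}{u_1-u_3}$, your argument needs $\partial C/\partial u_1\le 0$. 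This fails near the leading edge: as $s\to 1$ one has $\frac{E}{E-(1-s)K}\approx 1+(1-s)K(s)$ and $\partial(1-s)/\partial u_1>0$, so $\partial C/\partial u_1\sim K(s)\,\partial(1-s)/\partial u_1\to+\infty$. Salvaging this requires the cancellation $\lim_{s\to1}(t+\partial_{u_2}q)K(s)=0$ (which the paper only establishes later, in Lemma \ref{cusp}), i.e. exactly the kind of delicate elliptic estimate you flag but do not supply.

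The second gap is the claim that $\partial G/\partial u_2>0$ and $\partial G/\partial u_3>0$ are ``routine.'' Differentiating (\ref{G}) produces, besides the favorably signed third-derivative term, the terms $\partial_{u_2}M\cdot(t+\partial_{u_2}q)+M\,\partial^2_{u_2u_2}q-2\partial_{u_2}\{(u_2-u_3)[E-(1-s)K]\}\partial^2_{u_2u_3}q$, and none of $M$, $\partial_{u_2}M$, $\partial^2_{u_2u_2}q$, $\partial^2_{u_2u_3}q$ is sign-definite in the interior ($M$ and $\lambda_2-\lambda_3$ change sign precisely because the system is non-strictly hyperbolic, and by (\ref{sign}) so does $\partial^2_{u_2u_3}q$). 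The paper handles this inequality by a different mechanism entirely: it differentiates formula (\ref{w}) to get $\partial(\lambda_2t+w_2)/\partial u_2=\frac{\partial\lambda_2}{\partial u_2}(t+\partial_{u_2}q)+(\lambda_2-\gamma)\partial^2_{u_2u_2}q$, then uses the eigenvalue estimate $\frac{\partial\lambda_2}{\partial u_2}>\frac{3}{2}\frac{\lambda_2-\lambda_3}{u_2-u_3}$ of Lemma \ref{IM} together with the Euler--Poisson--Darboux identities (\ref{22}), (\ref{sign}), (\ref{2323}) to absorb the non-sign-definite terms. Lemma \ref{IM} is the key structural input here and never appears in your proposal; without it, or an equivalent estimate on $\partial\lambda_2/\partial u_2$ valid for all $0<s<1$, the interior inequalities remain unproved.
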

\begin{proof}
Using formulae (\ref{w}) for $w_1$, $w_2$, and $w_3$, we see that (\ref{1}) and (\ref{2}) are equivalent to
\begin{equation}
\label{12}
[\lambda_{1} - (u_{1} + u_{2} + u_{3} + 2 \nu)](t + \frac{
\partial q}{\partial u_{1}}) = [\lambda_{2} - (u_{1} + u_{2}
+ u_{3} + 2 \nu)](t + \frac{\partial q}{\partial u_{2}}) \,,
\end{equation}
\begin{equation}
\label{23}
[\lambda_{2} - 2(u_{1} + u_{2} + u_{3} + 2 \nu)](t + \frac{
\partial q}{\partial u_{2}}) = [\lambda_{3} - (u_{1} + u_{2}
+ u_{3} + 2 \nu)](t + \frac{\partial q}{\partial u_{3}})  \,.
\end{equation}
By Lemma \ref{trailing},
\begin{equation}
\label{1213}
\frac{\partial^{2} q(u_1, u_2, u_3)}{\partial u_{1} \partial u_{2}} < 0, \quad
\frac{\partial^{2} q(u_1, u_2, u_3)}{\partial u_{1} \partial u_{3}} < 0
\end{equation}
at the trailing edge.

We claim that inequalities (\ref{1213}) hold on the solution $(u_1, u_2, u_3)$
of (\ref{1}) and (\ref{2})
as long as $0< u_{3} + \nu $
$<$ $u_{2} + \nu $ $<$ $u_{1} + \nu$.

We justify the claim by contradiction. Suppose otherwise, for
instance at some point $(\bar{u}_1, \bar{u}_2, \bar{u}_3)$ on the
solution of (\ref{1}) and (\ref{2}), with $0 < \bar{u}_3 + \nu <
\bar{u}_2 + \nu < \bar{u}_1 + \nu$,
\begin{displaymath}
\frac{\partial^{2} q}{\partial u_{1} \partial u_{2}} = 0  \ .
\end{displaymath}
In view of (\ref{q}), this gives
\begin{displaymath}
\frac{\partial q}{\partial u_{1}} = \frac{\partial q}{\partial u_{2}}
\quad \mbox{at  $(\bar{u}_{1}, \bar{u}_{2}, \bar{u}_{3})$} \ ,
\end{displaymath}
which together with (\ref{l1>}), (\ref{l2<}), and (\ref{12}) imply
\begin{equation}
\label{1=2=0}
t + \frac{1}{2}\frac{\partial q}{\partial u_{1}} =
t + \frac{1}{2}\frac{\partial q}{\partial u_{2}} = 0 \ .
\end{equation}
By (\ref{l2<}), (\ref{l3<}), (\ref{23}), and (\ref{1=2=0}), we
obtain
\begin{displaymath}
t + \frac{\partial q}{\partial u_{3}} = 0\, ,
\end{displaymath}
which together with (\ref{q}) gives
\begin{equation}
\label{=0}
\frac{\partial^{2} q}{\partial u_{1} \partial u_{2}} =
\frac{\partial^{2} q}{\partial u_{1} \partial u_{3}} = 0
\end{equation}
at $(\bar{u}_{1}, \bar{u}_{2}, \bar{u}_{3})$.

On the other hand, by (\ref{q}) and Lemma \ref{q3},
\begin{equation*}
\frac{\partial^{2} q}{\partial u_{1} \partial u_{2}} -
\frac{\partial^{2} q}{\partial u_{1} \partial u_{3}} =
2(u_{2} - u_{3}) \frac{\partial^{3} q}{\partial u_{1} \partial
u_{2} \partial u_{3}} < 0
\end{equation*}
at $(\bar{u}_{1}, \bar{u}_{2}, \bar{u}_{3})$. This contradicts
(\ref{=0}) and the claim has been justified.

By (\ref{q}), we have
\begin{displaymath}
2(u_{1} - u_{3}) \frac{\partial^{2} q}{\partial u_{1} \partial
u_{3}} = \frac{\partial q}{\partial u_{1}} - \frac{\partial q}
{\partial u_{3}} \,.
\end{displaymath}
Differentiating this with respect to $u_{1}$ yields
\begin{eqnarray}
\frac{\partial^{2} q}{\partial u_{1}^{2}}  =  3\frac{\partial^{2} q}
{\partial u_{1}
\partial u_{3}} + 2(u_{1} - u_{3})\frac{\partial^{3} q}{
\partial u_{1}^{2} \partial u_{3}}  < 0  \ , \label{11}
\end{eqnarray}
where we have used (\ref{1213}) and Lemma \ref{q3} in the last step.

It follows from (\ref{1213}) and (\ref{q}) that
\begin{equation*}
\frac{\partial q}{\partial u_{1}} < \frac{\partial q}{\partial
u_{2}} \ , \quad \frac{\partial q}{\partial u_{1}} < \frac{\partial
q}{\partial u_{3}}\,,
\end{equation*}
which when combined with (\ref{l1>}), (\ref{l2<}), (\ref{l3<}), (\ref{12}) and
(\ref{23}) gives
\begin{equation}
\label{ineq}
t + \frac{1}{2}\frac{\partial q}{\partial u_{1}} < 0, \quad
t + \frac{1}{2}\frac{\partial q}{\partial u_{2}} > 0, \quad
t + \frac{1}{2}\frac{\partial q}{\partial u_{3}} > 0
\end{equation}
on the solution ($u_{1}$, $u_{2}$, $u_{3}$) of (\ref{1}) and (\ref{2})
in the region
$0 < u_{3} + \nu < u_{2} + \nu <u_{1} + \nu$.

Therefore, by (\ref{w}),
\begin{eqnarray*}
\frac{\partial (\lambda_{1}t + w_{1})}{\partial u_{1}} & = & \frac{\partial
\lambda_{1}}{\partial u_{1}}(t +
\frac{\partial q}{\partial u_{1}}) + [\lambda_{1}
- (u_{1} + u_{2} + u_{3} + 2 \nu)] \frac{\partial^{2} q}{\partial
u_{1}^{2}}  < 0 \  ,
\end{eqnarray*}
where in the last inequality we have used (\ref{l1>}), (\ref{11}), (\ref{ineq}), and
$$\frac{\partial \lambda_{1}}{\partial u_{1}} = {I \pd^2_{u_i u_i} I \over (\pd_{ui} I )^2}
> 0 \ . $$
This proves the first inequality of (\ref{important}).

Next we shall prove the rest of Lemma \ref{im}. By (\ref{q}), we have
\begin{displaymath}
2(u_{2} - u_{3})\frac{\partial^{2} q}{\partial u_{2} \partial
u_{3}} = \frac{\partial q}{\partial u_{2}} - \frac{\partial
q}{\partial u_{3}}\,.
\end{displaymath}
Differentiating this with respect to $u_{2}$ yields
\begin{equation}
\label{22}
\frac{\partial^{2} q}{\partial u_{2}^{2}} = 3\frac{\partial^{2} q}{
\partial u_{2} \partial u_{3}} + 2(u_{2} - u_{3})\frac{
\partial^{3} q}{\partial u_{2}^{2} \partial u_{3}}\,.
\end{equation}

Using (\ref{q}) to rewrite (\ref{23}), we obtain
\begin{equation}
\label{sign}
(\lambda_{2} - \lambda_{3})[t + \frac{\partial q}{\partial
u_{3}}] + 2 [\lambda_{3} - (u_{1} + u_{2} + u_{3} + 2 \nu )]\frac{
\partial^{2} q}{\partial u_{2} \partial u_{3}}(u_{2} - u_{3})
= 0
\end{equation}
which together with (\ref{22}) gives
\begin{eqnarray}
3\frac{\lambda_{2} - \lambda_{3}}{u_{2} - u_{3}}(t +
\frac{\partial q}{\partial u_{3}}) + 2 [\lambda_{2} - (u_{1} +
u_{2} + u_{3} + 2 \nu)]\frac{\partial^{2} q}{\partial u_{2}^{2}} \nonumber  \\
 =  4[\lambda_{2} - (u_{1} + u_{2} + u_{3} + 2 \nu)]
(u_{2} - u_{3})
\frac{\partial^{3} q}{\partial u_{2}^{2} \partial u_{3}}  > 0  \ , \label{2323}
\end{eqnarray}
where we have used (\ref{l3<}) and Lemma \ref{q3} in the last inequality.

It follows from (\ref{w}) that
\begin{eqnarray*}
\frac{\partial (\lambda_{2}t + w_{2})}{\partial u_{2}} & = & \frac{
\partial \lambda_{2}}{\partial u_{2}}(t + \frac{\partial q}{
\partial u_{2}}) + [\lambda_{2} - (u_{1} + u_{2} +
u_{3} + 2 \nu )]\frac{
\partial^{2} q}{\partial u_{2}^{2}} \\
& > & \frac{3}{2}\frac{\lambda_{2} - \lambda_{3}}{u_{2} - u_{3}}
(t + \frac{\partial q}{
\partial u_{2}}) + [\lambda_{2} - (u_{1} + u_{2} +
u_{3} + 2 \nu )]\frac{
\partial^{2} q}{\partial u_{2}^{2}} \\
&=& \frac{3}{2}\frac{\lambda_{2} - \lambda_{3}}{u_{2} - u_{3}}
(t + \frac{\partial q}{
\partial u_{3}}) + [\lambda_{2} - (u_{1} + u_{2} +
u_{3} + 2 \nu )]\frac{
\partial^{2} q}{\partial u_{2}^{2}} 
\\ && \phantom{\frac{3}{2}} +
 3(\lambda_{2} - \lambda_{3}) \frac{
\partial^{2} q}{\partial u_{2} \pd u_3} \\
&>& 3(\lambda_{2} - \lambda_{3}) \frac{
\partial^{2} q}{\partial u_{2} \pd u_3} \geq 0  \ ,
\end{eqnarray*}
where we have used Lemma \ref{IM}, (\ref{ineq}) in the first inequality,
and (\ref{2323})
in the second one. The last inequality is due to the fact that
$\lambda_{2} - \lambda_{3}$ and $\pd^2_{u_2 u_3} q$ have the same sign because of
(\ref{sign}).

This proves the second inequality of (\ref{important}). In the same
way, we can prove the last one.
\end{proof}

We are ready to conclude the proof of Theorem \ref{main}.

Proof of Theorem \ref{main}: By Lemma \ref{two}, equations
(\ref{FG}) can be solved for
\begin{displaymath}
\left \{\begin{array}{ll}
          u_{1} = M(u_{2}) \\
          u_{3} = N(u_{2})
        \end{array}
\right.
\end{displaymath}
in the neighborhood of ($u_{1}^{-}(t)$, $u_{2}^{-}(t)$, $u_{3}^{-}
(t)$). Furthermore, (\ref{MN}) holds if $u_{2}$ $>$ $u_{2}^{-}(t)$. We shall
extend the solution in the positive $u_{2}$ direction as far
as possible.
It follows from Corollary \ref{c} and Lemma \ref{im} that, along the extension of (\ref{MNf})
in the region $u_{1} + \nu > u_{2} + \nu > u_{3} + \nu > 0$,
the Jacobian matrix of (\ref{1}) and (\ref{2})
is diagonal and therefore is nonsingular. Furthermore, equations (\ref{1})
and (\ref{2})
determines (\ref{MNf}) as two decreasing functions of $u_{2}$.

This immediately guarantees that (\ref{MNf}) can be extended as far
as necessary in the region $u_{1} + \nu > u_{2} + \nu > u_{3} + \nu
> 0$. Since $M(u_{2})$ is decreasing, (\ref{MNf}) stops at some
point $u_{2}^{+}(t)$ where, obviously, $M(u_{2}^{+}(t))$ =
$u_{2}^{+} (t)$. Therefore, we have shown that (\ref{1}) and
(\ref{2}) determine $u_{1}$ and $u_{3}$ as decreasing functions of
$u_{2}$ over the interval $[u_{2}^{-}(t), u_{2}^{+}(t)]$.

Let
\begin{equation*}
\left\{ \begin{matrix}
u_{1}^{+}(t)  =  M (u_{2}^{+}(t)) \\
u_{3}^{+}(t)   =  N (u_{2}^{+}(t))\,.
\end{matrix} \right. \,
\end{equation*}
Clearly, $(u_{1}^{+}(t), u_{2}^{+}(t), u_{3}^{+}(t))$ solves
system (\ref{FG}) at the leading edge $u_{1} = u_{2}$. Hence,
these $u_{1}^{+}(t)$, $u_{2}^{+}(t)$ and $u_{3}^{+}(t)$ are
exactly the ones appearing in Lemma \ref{leading}.

Substituting (\ref{MNf}) into (\ref{HO}), we obtain
\begin{displaymath}
x = \lambda_{2}(M(u_{2}), u_{2}, N(u_{2})) t + w_{2}(M(u_{2}),
u_{2}, N(u_{2}))
\end{displaymath}
which by Corollary \ref{c} and Lemma \ref{im} clearly determines $x$ as an
increasing function of $u_2$ over interval
$[u_2^{-}(t), u_{2}^{+}(t)]$. It
follows that, for each fixed $t > 0$, $u_{2}$ is a function of $x$ over
the interval [$x^{-}(t)$, $x^{+}(t)$], and that therefore so are $u_{1}$
and $u_{3}$, where
\begin{eqnarray}
\label{pm}
x^{\pm}(t) = \lambda_{2}(u_{1}^{\pm}(t), u_{2
}^{
\pm}(t), u_{3}^{\pm}(t)) t +
w_{2}(u_{1}^{\pm}(t), u_{2}^{
\pm}(t), u_{2}^{\pm}(t))  \,.
\end{eqnarray}
Thus, (\ref{HO}) can be solved for
\begin{eqnarray*}
u_{1} = u_{1}(x, t) \ , \quad
u_{2} = u_{2}(x, t) \ , \quad
u_{3} = u_{3}(x, t)
\end{eqnarray*}
within a wedge
\begin{equation}
\label{wedge}
\begin{array}{ll}
x^{-}(t) < x < x^{+}(t) \hspace*{.5in} for ~ t > 0 \ , \\
x^{-}(0) = x^{+}(0) = 0  \ ,
\end{array}
\end{equation}
where we have used (\ref{pm}), Lemma \ref{x-}, and Lemma
\ref{leading} in the last equations.

Boundary conditions (\ref{bc1}) and (\ref{bc2}) can be checked easily.
The proof of Theorem \ref{main} would be completed if we can verify that
the wedge is indeed a cusp. First we need a lemma.
\begin{lem}
\label{cusp} At $(u_{1}^{-}(t)$, $u_{2}^{-}(t)$, $u_{3}^{-}(t))$,
\begin{equation}
\label{tr2}
\frac{\partial (\lambda_{2}t + w_{2})}{\partial u_{2}} =
\frac{\partial (\lambda_{3}t + w_{3})}{\partial u_{3}} = 0\,,
\end{equation}
while at $(u_{1}^{+}(t)$, $u_{2}^{+}(t)$, $u_{3}^{+}(t))$,
\begin{equation}
\label{le2}
\frac{\partial (\lambda_{1}t + w_{1})}{\partial u_{1}} =
\frac{\partial (\lambda_{2}t + w_{2})}{\partial u_{2}} = 0\,.
\end{equation}
\end{lem}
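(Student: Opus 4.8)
The plan is to deduce the degeneracy identities (\ref{tr2}) and (\ref{le2}) from the factorizations of $(\lambda_i t+w_i)-(\lambda_j t+w_j)$ recorded just before (\ref{F}) and (\ref{G}), together with Corollary \ref{c}. The trailing edge is clean, so I treat it first; the leading edge is analogous in structure but genuinely more delicate, and I expect it to be the main obstacle.

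\emph{Trailing edge.} I would first record the off-diagonal identities
\[
\frac{\partial(\lambda_2 t+w_2)}{\partial u_3}=\frac{\partial(\lambda_3 t+w_3)}{\partial u_2}=0
\]
at $(u_1^-(t),u_2^-(t),u_3^-(t))$. On the interior of the solution branch $u_2\mapsto(M(u_2),u_2,N(u_2))$ of Lemma \ref{two}, where $u_1>u_2>u_3$, these hold by Corollary \ref{c}; and near $u_2=u_3$ the quantities $\lambda_2,\lambda_3,w_2,w_3$ and their first derivatives remain bounded, the factor $u_2-u_3$ cancelling the apparent $1/(E-K)$ singularity of (\ref{lambda}) exactly as in the passage from (\ref{G2}) to (\ref{V}), so $\partial_{u_j}(\lambda_i t+w_i)$ extends continuously to the face and the identities survive in the limit. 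Now put $D:=(\lambda_2 t+w_2)-(\lambda_3 t+w_3)$. By (\ref{tr}), $\lambda_2=\lambda_3$ on $\{u_2=u_3\}$, while the symmetry of $q$ and (\ref{w}) give $w_2=w_3$ there; hence $D$ vanishes on $\{u_2=u_3\}$ and factors as $D=(u_2-u_3)\widetilde D$ with $\widetilde D$ smooth. Evaluating on the face, $\partial_{u_2}D=\widetilde D$ and $\partial_{u_3}D=-\widetilde D$, so with the off-diagonal vanishing,
\[
\frac{\partial(\lambda_2 t+w_2)}{\partial u_2}=\partial_{u_2}D+\frac{\partial(\lambda_3 t+w_3)}{\partial u_2}=\widetilde D,\qquad \frac{\partial(\lambda_3 t+w_3)}{\partial u_3}=\frac{\partial(\lambda_2 t+w_2)}{\partial u_3}-\partial_{u_3}D=\widetilde D.
\]
From the factorization before (\ref{G}), $\widetilde D=\frac{2(u_3+\nu)\Pi}{(u_2+\nu)[K-E][E-(1-s)K]}G$; since $[K-E][E-(1-s)K]\sim(\pi^2/16)s^2$ and $G/s^2$ tends to a nonzero multiple of $V$ as $s\to0$ — which is exactly how (\ref{V}) was extracted — $\widetilde D$ is a nonzero multiple of $V(u_1^-,u_3^-)$ on the face, and this vanishes by Lemma \ref{x-}. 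This gives (\ref{tr2}).

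\emph{Leading edge.} I would repeat the scheme with $\Delta:=(\lambda_1 t+w_1)-(\lambda_2 t+w_2)$, which vanishes on $\{u_1=u_2\}$ by (\ref{le}) and the symmetry of $q$, using $\partial_{u_2}(\lambda_1 t+w_1)=\partial_{u_1}(\lambda_2 t+w_2)=0$ and the factorization $\Delta=\frac{2(u_1-u_2)(u_3+\nu)\Pi}{(u_2+\nu)E}F$. The complication, and the step I expect to be hardest, is that as $s\to1$ the integral $\Pi$ diverges logarithmically, so that $\partial_{u_1}\lambda_1$ and $\partial_{u_2}\lambda_2$ grow like $\log(1-s)$; neither the continuity step for the off-diagonal derivatives nor the final evaluation is then automatic. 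The compensating fact is that the leading-edge equations, together with the symmetry of $q$, force $t+\partial_{u_1}q=t+\partial_{u_2}q=t+\partial_{u_3}q=0$ at $(u_1^+(t),u_2^+(t),u_3^+(t))$ (Lemma \ref{leading}), while $\lambda_1-\gamma$ and $\lambda_2-\gamma$ vanish there by (\ref{le}). Writing $\partial_{u_1}(\lambda_1 t+w_1)=\partial_{u_1}\lambda_1\,(t+\partial_{u_1}q)+(\lambda_1-\gamma)\partial_{u_1}^2 q$, and likewise for $u_2$, each summand carries such a vanishing factor; the real work is to show, from the $s\to1$ expansions (\ref{K2})–(\ref{E2}) of $K$ and $E$ (hence of $\Pi$), that $t+\partial_{u_i}q$ vanishes linearly in $1-s$ along the solution and so overwhelms the logarithmic growth of $\partial_{u_i}\lambda_i$. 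Once this $0\times\infty$ estimate is settled, (\ref{le2}) follows just as (\ref{tr2}) did.
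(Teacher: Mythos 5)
Your trailing-edge argument is essentially sound, though it takes a different route from the paper: instead of factoring $D=(\lambda_2t+w_2)-(\lambda_3t+w_3)$ through $u_2-u_3$ and evaluating $\widetilde D$ via the limit of $G/s^2$, the paper computes $\partial\lambda_2/\partial u_2$ at $u_2=u_3$ explicitly from (\ref{lambda2u2}) and the series (\ref{K})--(\ref{E}), and then verifies directly that $\frac{\partial\lambda_2}{\partial u_2}\bigl(t+\frac{\partial q}{\partial u_2}\bigr)+[\lambda_2-\gamma]\frac{\partial^2 q}{\partial u_2^2}$ vanishes by combining the trailing-edge equation (\ref{V}) with the identity (\ref{22}). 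Both routes rest on the same inputs ($V=0$ and the Euler--Poisson--Darboux relation on the face $u_2=u_3$), so this half is acceptable.

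The leading edge is where you have a genuine gap, and you have flagged it yourself without closing it. You reduce (\ref{le2}) to the claim that $t+\partial q/\partial u_i$ vanishes \emph{linearly} in $1-s$ along the solution branch, to be extracted ``from the $s\to1$ expansions of $K$ and $E$.'' But those expansions only control the coefficients $\partial\lambda_i/\partial u_i=O(K(s))$; they say nothing about the rate at which $t+\partial q/\partial u_2$ tends to zero as $(M(u_2),u_2,N(u_2))$ approaches the leading edge --- that rate is a property of the solution branch itself, and no argument for it is offered. The paper never estimates this rate directly. Instead, by Corollary \ref{c} the off-diagonal derivative $\partial(\lambda_2t+w_2)/\partial u_1$ vanishes identically on the branch, and expanding it via (\ref{w}) gives an identity of the form
\begin{displaymath}
0=\frac{\partial\lambda_2}{\partial u_1}\Bigl(t+\frac{\partial q}{\partial u_2}\Bigr)+[\lambda_2-\gamma]\,\frac{\partial^2 q}{\partial u_1\partial u_2}+\frac{\partial q}{\partial u_1}-\frac{\partial q}{\partial u_2}\,,
\end{displaymath}
in which every term other than the first tends to $0$ as $s\to1$ by (\ref{le}) and the symmetry of $q$. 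Since $\partial\lambda_2/\partial u_1=O(K(s))$ by (\ref{1-s}), this forces $\lim_{s\to1}K(s)\,[t+\partial q/\partial u_2]=0$, which is exactly the product estimate you need; because $\partial\lambda_2/\partial u_2$ is also $O(K(s))$, the same estimate kills $\frac{\partial\lambda_2}{\partial u_2}\bigl(t+\frac{\partial q}{\partial u_2}\bigr)$, and the remaining term $[\lambda_2-\gamma]\frac{\partial^2 q}{\partial u_2^2}$ vanishes by (\ref{le}). This use of an already-known off-diagonal identity to control the $0\times\infty$ product is the idea missing from your proposal; without it, or a genuine substitute, the leading-edge half is not a proof.
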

\begin{proof}

Using expansions (\ref{K}) and (\ref{E}), we obtain from (\ref{lambda2u2}) that
\begin{displaymath}
\frac{\partial \lambda_{2}}{\partial u_{2}} =
{6[(u_3 + \nu) - {1 \over 4} (u_1 + \nu)] \over u_1 + \nu}
\end{displaymath}
at ($u_{1}^{-}(t)$, $u_{2}^{-}(t)$, $u_{3}^{-}(t)$).

By (\ref{w}), we have
\begin{eqnarray*}
\frac{\partial (\lambda_{2}t + w_{2})}{\partial u_{2}}  &=&
 {6[(u_3 + \nu) - {1 \over 4} (u_1 + \nu)] \over (u_1 + \nu)}
(t + \frac{\partial q}{\partial u_{2}}) 
\\ && 
\ - {4 (u_1 - u_3)(u_3 + \nu) \over u_1 + \nu}
\ \frac{\partial^{2} q}{\partial u_{2}^{2}}
 =  0 \ ,
\end{eqnarray*}
where we have used (\ref{tr}) in the first equality, and 
(\ref{V}) and (\ref{22}) in the last equality. The proof
also applies to
the other equation of (\ref{tr2}).

To prove (\ref{le2}), we proceed as follows. By (\ref{lambda}),
(\ref{K2}), and (\ref{E2}), we find
\begin{equation}
\label{1-s}
\frac{\partial \lambda_{2}}{\partial u_{1}} = O (K(s)) \quad \mbox{as $s \rightarrow
1 $ \ .}
\end{equation}
By Corollary \ref{c} and formula (\ref{w}), we calculate the derivative
on the solution of
(\ref{1}) and (\ref{2})
\begin{eqnarray*}
0 & = & \frac{\partial (\lambda_{2}t + w_{2})}{\partial u_{1}} \\
  & = & \frac{\partial \lambda_{2}}{\partial u_{1}}(t +
\frac{\partial q}{\partial u_{2}}) + \frac{1}{2}[\lambda_{2} - (u_{1}
+ u_{2} + u_{3} + 2 \nu)]\frac{\partial^{2} q}{\partial u_{1} \partial
u_{2}} + \frac{\partial q}{\partial u_{1}} - \frac{\partial q}
{\partial u_{2}} \,,
\end{eqnarray*}
which when combined with (\ref{le}) and (\ref{1-s}) gives
\begin{equation}
\label{1-s2}
\lim_{s \rightarrow 1} [t + \frac{\partial q}{\partial
u_{2}} ] \ K(s) = 0 \,.
\end{equation}
On the other hand, as in (\ref{1-s}), we have
\begin{equation}
\label{1-s3}
\frac{\partial \lambda_{2}}{\partial u_{2}} = O (K(s)) \quad  \mbox{as $ s
\rightarrow 1$ \ .}
\end{equation}
Therefore, by (\ref{w}) we see that at $(u_{1}^{+}$, $u_{2}^{+}$,
$u_{3}^{+})$
\begin{eqnarray*}
\frac{\partial (\lambda_{2}t + w_{2})}{\partial u_{2}}  =  \frac{
\partial \lambda_{2}}{\partial u_{2}}(t + \frac{1}{2}\frac{\partial q}{
\partial u_{2}}) + [\lambda_{2} - (u_{1} + u_{2} +
u_{3} + 2 \nu)]\frac{
\partial^{2} q}{\partial u_{2}^{2}}
 =  0 \ ,
\end{eqnarray*}
where, by (\ref{1-s2}) and (\ref{1-s3}), the first term vanishes, while the second term vanishes
because of (\ref{le}). This proves the second equality of (\ref{le2}). In the same way, we
can check the other equality of (\ref{le2}).
\end{proof}

Now we continue to finish the proof of Theorem \ref{main}.
Differentiating (\ref{pm}) with respect to $t$, by Corollary \ref{c}
and Lemma \ref{cusp} we obtain
\begin{eqnarray*}
\frac{d x^{\underline{+}}(t)}{d t} & = & \frac{\partial (\lambda_{2}t + w_{2})}
{\partial u_{2}}\frac{d u_{2}^{\pm}}{d t} + \lambda_{2}(
u_{1}^{\pm}, u_{2}^{\pm}, u_{3}^{\pm
})   \\
  & = & \lambda_{2}(u_{1}^{\pm}, u_{2}^{\pm},
u_{3}^{\pm})\,,
\end{eqnarray*}
which when combined with (\ref{tr}), (\ref{le}), Lemma \ref{x-}, and
Lemma \ref{leading} gives
\begin{displaymath}
\frac{d x^{\pm}(t)}{d t} = 3 \hat{u} + 2 \nu \quad \mbox{at  $ t = 0 $} \ .
\end{displaymath}
Therefore, wedge (\ref{wedge}) is a cusp. This completes the proof of
Theorem \ref{main}.

We immediately conclude from Theorem \ref{th1} and Theorem \ref{main} the following result on the
initial value problem of the Whitham equations.
\begin{thm}
\label{Main}
For a decreasing initial function $u_0(x)$ whose inverse function $f(u)$ satisfying the conditions of Theorem
\ref{main}, the Whitham equations (\ref{CHW}) have a solution $(u_1(x,t), u_2(x,t), u_3(x,t))$ within
a cusp for all positive time. The Burgers solution of (\ref{we0}) exists 
outside the cusp.
The Whitham solution matches the Burgers solution
on the boundary of the cusp in the fashion of (\ref{bc1}) and (\ref{bc2}).
\end{thm}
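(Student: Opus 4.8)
The plan is to treat this theorem as an assembly of Theorem~\ref{main}, Theorem~\ref{5.1}, and Theorem~\ref{th1}, since the substantive analytic work has already been carried out in the proof of Theorem~\ref{main}. That theorem guarantees that the hodograph transform (\ref{HO}), with $w_i$'s given by (\ref{qbeta}) and (\ref{w}), can be inverted to yield $u_1$, $u_2$, $u_3$ as functions of $(x,t)$ throughout a cusp for all $t>0$, and that the boundary conditions (\ref{wbc1})--(\ref{wbc2}) hold on the cusp boundary. What remains is to confirm, first, that these inverted functions actually satisfy the Whitham equations (\ref{CHW}), and, second, that they match the Burgers solution across the cusp boundary in the sense of (\ref{bc1})--(\ref{bc2}).

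First I would invoke Tsarev's theorem. By Theorem~\ref{5.1}, the functions $w_i$ of (\ref{qbeta}) and (\ref{w}) solve the linear overdetermined system (\ref{wib}), which is precisely system (\ref{wi}) with the identifications $s_i=\lambda_i$ and $A_{ij}=B_{ij}$. The one point requiring care is that Theorem~\ref{th1} is stated under strict hyperbolicity, whereas (\ref{CHW}) is only non-strictly hyperbolic: the speeds $\lambda_i$ coincide on part of the region. However, by Lemma~\ref{Bij} the coefficients $B_{ij}$ remain finite and non-singular even where $\lambda_i=\lambda_j$, so the hypotheses of Theorem~\ref{th1} that genuinely matter---solvability of the linear system and invertibility of the hodograph transform---are both in force. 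Since Theorem~\ref{main} supplies the invertibility inside the cusp, Theorem~\ref{th1} then yields that $(u_1(x,t),u_2(x,t),u_3(x,t))$ solve (\ref{CHW}) there.

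Next I would address the matching. On the cusp boundary, Theorem~\ref{main} provides (\ref{wbc1})--(\ref{wbc2}). These boundary conditions were built from the edge behavior (\ref{tr})--(\ref{le}) of the $\lambda_i$ together with the Burgers hodograph (\ref{BS}) precisely so that the collapsed hodograph relations reduce to $x=(3u+2\nu)t+f(u)$ on each edge. Consequently the Whitham solution meets the Burgers solution on the trailing edge via (\ref{bc1}) and on the leading edge via (\ref{bc2}). Outside the cusp the Burgers solution is the single-valued graph given explicitly by (\ref{BS}), which is globally well defined because $u_0(x)$ is monotone decreasing with inverse $f(u)$.

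The proof is therefore essentially bookkeeping, and I expect the only genuine obstacle---already resolved upstream---to be the legitimacy of applying Tsarev's construction across the locus of non-strict hyperbolicity. Everything else reduces to citing Theorem~\ref{main} for the existence and invertibility inside the cusp, Theorem~\ref{5.1} for the $w_i$ identities, and Corollary~\ref{c} for the diagonality of the Jacobian, which is what ultimately makes the hodograph inversion consistent with (\ref{CHW}).
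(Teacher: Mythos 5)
Your proposal is correct and follows essentially the same route as the paper, which deduces Theorem \ref{Main} directly from Theorem \ref{th1} and Theorem \ref{main} (with Theorem \ref{5.1} supplying the $w_i$'s and Lemma \ref{Bij} justifying the use of Tsarev's construction despite the non-strict hyperbolicity). The extra care you take over the degeneracy of the speeds and the boundary matching is exactly the justification the paper records in Section 2 and at the end of the proof of Theorem \ref{main}.
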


We close this paper with two observations. First, it is obvious from
the proof of Theorem \ref{Main} that one should expect local (in
time) results if local conditions are assumed. Namely, if the global
condition $f'''(u) < 0$ for all $b < u < a$ in Theorem \ref{Main} is
replaced by a local condition $f'''(u) < 0$ in the neighborhood of
the breaking point $\hat{u}$, the results of Theorem \ref{Main} are
only true for a short time after the breaking time.

Second, a hump-like initial function
can be decomposed into a decreasing and an increasing
parts. It is known that the decreasing part causes the Burgers solution of (\ref{we0})
to develop finite time singularities
while the increasing part does not. These two pieces
of data would not interact with each other for a short time after the breaking of
the Burgers solution. As a consequence, a short time result also holds for a
hump-like initial function.

\appendix
\section{Complete Elliptic Integrals}

In this Appendix, we list some of the well-known properties of the complete elliptic
integrals of the first, second and third kind.

These are the derivative formulae
\begin{eqnarray}
\frac{d K(s)}{d s} & = & \frac{E(s) - (1-s)K(s)}{2s(1-s)} \ , \label{K3} \\
\frac{d E(s)}{d s} & = & \frac{E(s) - K(s)}{2s} \ , \label{E3} \\
\frac{d \Pi(\rho,s)}{d \rho} & = & \frac{\rho E(s) + (s - \rho) K(s) + (\rho^2- s) \Pi(\rho, s) }{2\rho (1 - \rho) (\rho -s)} \ , \label{P3} \\
\frac{d \Pi(\rho, s)}{d s} & = & \frac{E(s) - (1-s)\Pi(\rho, s)}{2(1-s)(s-\rho)} \ . \label{P3'}
\end{eqnarray}

$K(s)$ and $E(s)$ have the
expansions
\begin{eqnarray}
K(s) & = & \frac{\pi}{2} [1 + \frac{s}{4} + \frac{9}{64} s^{2}
+ \cdots + (\frac{1 \cdot 3 \cdots (2n-1)}{2 \cdot 4 \cdots 2n})^{2} s^{n}
+ \cdots] \ , \label{K}\\
E(s) & = & \frac{\pi}{2} [1 - \frac{s}{4} - \frac{3}{64} s^{2}
 - \cdots - \frac{1}{2n-1}(\frac{1 \cdot 3 \cdots (2n-1)}{
2 \cdot 4 \cdots 2n})^{2} s^{n} -
\cdots]  \label{E}
\end{eqnarray}
for $|s| < 1$. They also have the asymptotics
\begin{eqnarray}
K(s) & \approx & \frac{1}{2} \log \frac{16}{1 - s}  \label{K2} \ , \\
E(s) & \approx & 1 + \frac{1}{4}(1 - s)[\log \frac{16}{1 - s} - 1] \label{E2}
\end{eqnarray}
when $s$ is close to $1$. They satisfy the inequalities \cite{Tian1}
\begin{equation}
\frac{1}{1 - \frac{s}{2}} < \frac{K(s)}{E(s)} < \frac{1-\frac{s}{2}}{1-s}
\hspace*{.5in} for ~ 0 < s < 1 \ . \label{KE}
\end{equation}

The complete elliptic integral of the third kind has the following
behavior
\begin{eqnarray}
\Pi(\rho, s) &=& {\pi \over 2} \quad \mbox{when $\rho = 0$, $s = 0$} \ , \label{P1} \\
{\Pi(\rho, s) \over K(s)} &\approx& {1 \over 1 - \rho} \quad
\mbox{when $s$ is close to $1$} \ . \label{P2}
\end{eqnarray}

\bigskip

{\bf Acknowledgments.}
T.G. was supported in part by the MISGAM program
of the European Science Foundation, and by the RTN ENIGMA and Italian
COFIN 2006 ``Geometric methods in the theory of nonlinear waves and their
applications.''
V.P. was supported in part by NSF Grant DMS-0135308.
F.-R. T. was supported in part by
NSF Grant DMS-0404931.

% ----------------------------------------------------------------
\bibliographystyle{amsplain}
%\bibliography{}

\end{document}